\def\({\left(}
\def\){\right)}
\newcommand{\ovl}{\overline}%
\newcommand{\avg}[1]{\langle#1\rangle}
\newcommand{\customref}[2]{\hyperref[#1]{\ref*{#1}#2}}
\newcommand{\nn}{\nonumber\\}
\definecolor{Ured}{HTML}{cc0000}
\definecolor{Ublue}{HTML}{1f65cf}
\definecolor{Ugreen}{HTML}{198a11}
\definecolor{appBlue}{RGB}{68, 114, 196}
\definecolor{appOrange}{RGB}{214, 116, 50}
\newcommand{\YL}[1]{{\color{black} #1}}
\newcommand{\PRXQ}[1]{{\color{black} #1}}
\definecolor{OliveGreen}{cmyk}{0.84, 0, 0.95, 0.30}
\newcommand{\pbf}{p_{\rm bf}}
\newcommand{\kbf}{\mathsf{K}_{\rm bf}}
\newcommand{\km}{\mathsf{K}_{\rm m}}
\newcommand{\pmeas}{p_{\rm m}}
\newcommand{\qs}{\mathbf{S}}
\newcommand{\qe}{\mathbf{E}}
\newcommand{\bfr}{\mathbf{r}}
\newcommand{\ti}{{t_{\rm 0}}}
\newcommand{\tf}{{t_{\rm f}}}
\DeclareMathOperator*{\argmax}{argmax}
\DeclareMathOperator{\sgn}{sgn}
\DeclareMathOperator{\wgt}{wgt}
\newcommand{\decoder}{\mathcal{D}}
\newcommand{\cycle}{\mathsf{C}}
\newcommand{\vertex}{\mathsf{v}}
\newcommand{\error}{\mathsf{e}}
\newcommand{\cc}{\mathsf{c}}
\newcommand{\synd}{\mathsf{s}}
\newcommand{\paulix}{\mathsf{X}}
\newcommand{\pauliz}{\mathsf{Z}}
\newcommand{\pauliy}{\mathsf{Y}}
\newtheorem{theorem}{Theorem}[section]
\newtheorem{corollary}{Corollary}[section]
\newtheorem{lemma}{Lemma}[section]
\newtheorem{proposition}{Proposition}[section]
\theoremstyle{definition}
\newtheorem{definition}{Definition}[section]
\def\l@subsection#1#2{}
\def\l@subsubsection#1#2{}
\begin{document}

\title{Perturbative stability and error correction thresholds of quantum codes}

\author{Yaodong Li}
\thanks{Y.L. and N.O.D. contributed equally to this work.}
\affiliation{Department of Physics, Stanford University, Stanford, CA 94305}
\author{Nicholas O'Dea}
\thanks{Y.L. and N.O.D. contributed equally to this work.}
\affiliation{Department of Physics, Stanford University, Stanford, CA 94305}
\author{Vedika Khemani}
\affiliation{Department of Physics, Stanford University, Stanford, CA 94305}

\date{February 10, 2025}

\begin{abstract}

Topologically-ordered phases are stable to local perturbations, and topological quantum error-correcting codes enjoy thresholds to local errors.
We connect the two notions of stability by constructing classical statistical mechanics models for decoding general CSS codes and classical linear codes.
Our construction encodes correction success probabilities under uncorrelated bit-flip and phase-flip errors, and simultaneously describes a generalized $\mathbb{Z}_2$ 
lattice gauge theory with quenched disorder.
We observe that the clean limit of the latter is precisely the \textit{discretized} imaginary time path integral of the corresponding quantum code Hamiltonian when the errors are turned into a perturbative $\paulix$ or $\pauliz$ magnetic field.
Motivated by error correction considerations, we define general order parameters for all such generalized $\mathbb{Z}_2$ 
lattice gauge theories, and show that they are generally lower bounded by success probabilities of error correction.
For CSS codes satisfying the LDPC condition and with a sufficiently large code distance, we prove the existence of a low temperature ordered phase of the corresponding lattice gauge theories, particularly for those lacking Euclidean spatial locality and/or when there is a nonzero code rate.
We further argue that these results provide evidence to stable phases in the corresponding perturbed quantum Hamiltonians, obtained in the limit of continuous imaginary time. 
To do so, we distinguish space- and time-like defects in the lattice gauge theory.
A high free-energy cost of space-like defects corresponds to a successful ``memory experiment'' and suppresses the 
energy splitting among the ground states, 
while a high free-energy cost of time-like defects corresponds to a successful ``stability experiment'' and points to a nonzero gap to local excitations.

\end{abstract}

\maketitle

\tableofcontents

\section{Introduction}

Topologically-ordered phases are stable to local perturbations~\cite{Klich_2010, Bravyi_2010, Bravyi_2011}, and topological quantum error-correcting codes are stable to local errors~\cite{DKLP2001topologicalQmemory}.
The former stability allows their observation in nature, and the latter stability promises the possibility of scalable quantum computing. 
Although the connection between topologically-ordered phases and topological codes has long been appreciated~\cite{Kitaev_1997, knill1997theory}, the two notions of stability appear physically distinct.
The former is the property of a non-equilibrium process (described by a quantum channel), whereas the latter describes properties of the Hamiltonian in equilibrium, at zero temperature.
It is unclear how one might relate the two notions despite certain commonalities, including the requirement of a large code distance.

There has been significant recent progress in the search for quantum low-density parity check (LDPC) codes\footnote{The code is said to be LDPC when each check has a constant qubit degree and each qubit has a constant check degree.} with good parameters~\cite{hastings2020fiber,Breuckmann_2021, dinur2021locally, panteleev2021asymptotically, leverrier2022quantum, dinur2022good}\footnote{Throughout the paper we denote the code parameters as $[N, K(N), D(N)]$, where $N$ is the number of physical qubits, $K(N)$ the number of logical qubits, and $D(N)$ the code distance.
The parameters are ``good'' if both $K(N)$ and $D(N)$ scale linearly with $N$ in the limit $N \to \infty$ within a family of codes.} beyond topological codes in finite dimensions~\cite{bpt2010}.
Initially motivated by efficient fault-tolerant computing~\cite{gottesman2013faulttolerant}, these codes also lead to explicit constructions of exactly solvable Hamiltonians on higher-dimensional lattices, in non-Euclidean geometries, or even on graphs.
These models can exhibit exotic properties such as \YL{exponential} ground state degeneracies or even the lack of low-energy ``trivial'' states~\cite{Anshu_2023,freedman2013quantum}.
They challenge the conventional notion of topological order and call for a significant extension.
For example, it is unclear if thermodynamic limits are well defined for Hamiltonians of general quantum codes without geometric locality~\cite{Kovalev_2018, Jiang_2019}.
It is also unclear if the order parameter used for defining topological phases in low dimensions, namely the Wilson loop, still provides meaningful distinction between topologically ordered and disordered phases without Euclidean spatial locality (\YL{e.g. on expander graphs where no distinction between perimeter law and area law can be made}).
A basic question is whether the exactly solvable models represent fixed points of ``topologically ordered phases'' of matter or are fine-tuned and isolated in the parameter space.
\PRXQ{Therefore, a notion of perturbative stability of these code Hamiltonians is required.
Previously, there have been rigorous results for perturbative stability of codes in Euclidean lattices when the interactions are short ranged~\cite{Klich_2010, Bravyi_2010, Bravyi_2011, Michalakis_2013}. These approaches have only been extended to LDPC codes without geometric locality very recently~\cite{lavasani2024klocal}.}

While an understanding of phases associated to such codes is in the process of being fleshed out~\cite{sagar2016fracton, williamson2016fractal,  kubica2018ungauging, shirley2019foliated, rakovszky2023gauge, rakovszky2024product}, the codes' error correction thresholds are generally and unambiguously defined~\cite{Aharonov_1999}.\footnote{\label{fn:intro_memory_stability_expr}By error correction, we have in mind ``fault-tolerant memory experiments''~\cite{DKLP2001topologicalQmemory} and ``stability experiments''~\cite{Gidney_2022} \YL{--- discussed in detail in Sec.~\ref{sec:decoding_success_physical_consequences} ---} with a phenomenological error model including local qubit errors and syndrome errors, and where measurements are repeated for many rounds in time.
By threshold, we mean that when all the local error rates are below a constant value, and when the number of measurement rounds is below a ``memory time'' $\tf(N)$ (growing exponentially with $D(N)$), the probability of success of the memory experiment converges to 1 in the limit $N \to \infty$.}
Recall that a nonzero threshold is necessary for the code to be useful as a quantum memory.
In the case of topological codes in low dimensions, it follows from the existence of an ``ordered phase'' of a statistical mechanics (stat mech) model at low temperatures with weak uncorrelated random disorder~\cite{DKLP2001topologicalQmemory}.
Remarkably, quantum LDPC codes with a sufficiently large code distance $D(N) = \Omega(\log N)$\footnote{That is, $D(N) \geq c \ln N$ asymptotically for some $c>0$.} are also known~\cite{gottesman2013faulttolerant, pryadko2014} to have a nonzero threshold.
Comparing the two notions of stability, it is conceivable that the existence of error thresholds might suggest certain stability of LDPC codes as Hamiltonians, where the LDPC condition generalizes the notion of spatial locality.

\begin{figure*}
    \centering
    \includegraphics[width=.95\linewidth]{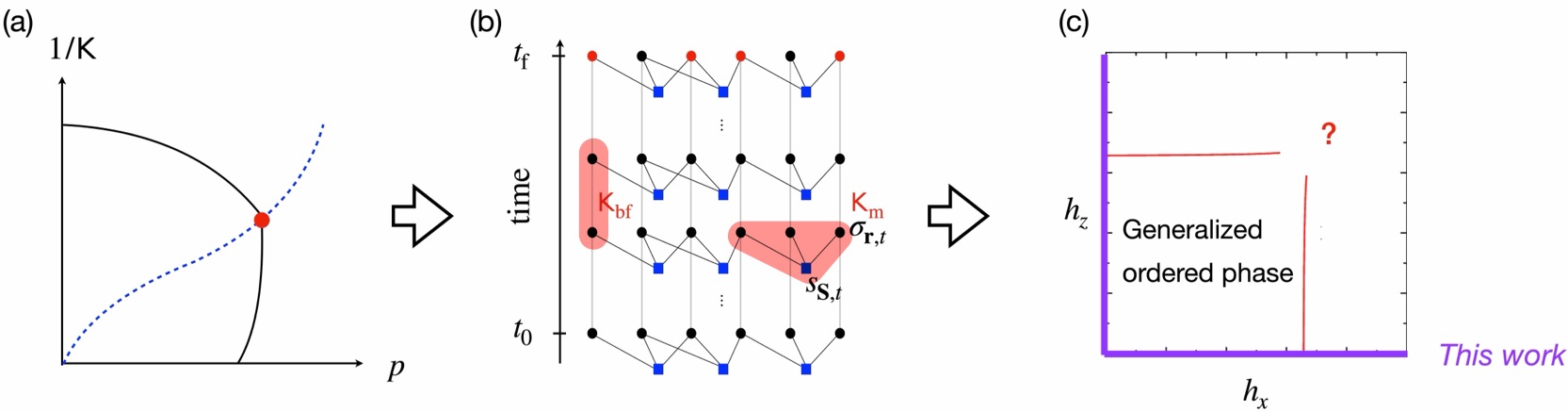}
    \caption{Summary of our argument.
    (a) The decoding of CSS codes under pure bitflip (Pauli $\paulix$) noise can be formulated as a statistical mechanics model with quenched disorder. (Similarly for pure Pauli $\pauliz$ noise, which we omit here.)
    Here, $p$ is the disorder strength, and $1/\mathsf{K}$ is the temperature of the model.
    The two parameters are constrained by the Nishimori condition, represented by the dashed blue line.
    The location of the critical point determines the optimal error threshold, below which decoding succeeds with probability $1$.
    \YL{The phase diagram is schematic and suppresses two dimensions; we consider both bitflip errors with rate $p_{\rm bf}$ and measurement errors with rate $p_{\rm m}$, corresponding to coupling strengths $\kbf$ and $\km$ via generalized Nishimori conditions, see Eqs.~(\ref{eq:tempphyserr},\ref{eq:measerr}).}
    (b) The statistical mechanics model \YL{for decoding bit-flip errors} takes a simple layered structure with one layer for each time step, see Eq.~\eqref{eq:H(sigma,s)}.
    There is one spin $\sigma_{\bfr, t}$ for each qubit $\bfr$ at each time step, and there is one spin $s_{\qs, t}$ for each $\pauliz$-stabilizer $\qs$ at each time step.
    The in-plane spins are coupled precisely by the $\pauliz$-stabilizers, and the inter-plane couplings are simple Ising terms.
    Successful decoding (i.e. ordering of the model on the Nishimori line) implies that \textit{all} $2^K$ logical operators $L_J$  at the final temporal boundary $t=\tf$ (represented in red) attain expectation value $1$ in the clean limit ($p=0$).
    (c) Schematic phase diagram of CSS codes under perturbations of a local magnetic field.
    \YL{The origin denotes the fixed point code Hamiltonian in Eq.~\eqref{eq:stabilizer_code_hamiltonian}.}
    We focus on the case where the perturbation is purely in the $\paulix$ or $\pauliz$ direction, where the discretized imaginary time path integral takes the form of the stat mech model in (b) in the clean limit.
    The generalized ordered phase has the logical operators $L_J$ as the order parameters.
    Therefore, decoding success provides evidence to the existence of a generalized ordered phase in the perturbed quantum Hamiltonian.}
    \label{fig:summary}
\end{figure*}

\PRXQ{Motivated by these developments, we make a physicists' attempt in directly connecting the two notions of stability, and argue that the existence of an error correction threshold of general Calderbank-Shor-Steane (CSS) codes~\cite{CalderbankShor, steane1996multiple} under simple noise models provides evidence to the perturbative stability of the corresponding code Hamiltonians under the simplest types of local perturbations.
Our argument is summarized in Fig.~\ref{fig:summary}.
En route, we formulate generalized stat mech models and their order parameters, describing discretized imaginary-time path integrals of the perturbed Hamiltonians.}

\subsection{Summary of results}

Our overall approach to demonstrating the connection is through stat mech of decoding for general CSS codes.
As detailed in Sec.~\ref{sec:stat_mech_model}, we formulate decoding as a problem of statistical inference.
Specifically, we collect error probabilities into weights of classical stat mech models with quenched disorder, see Eq.~\eqref{eq:H(sigma,s)}.
\PRXQ{Our derivation is equivalent to those in the literature~\cite{DKLP2001topologicalQmemory, pryadko2013, ChubbFlammia2018}.
However, it is agnostic of microscopic details, and
its simplicity permits immediate generalization to arbitrary CSS codes.}

The simple stat mech model with quenched disorder has a direct and exact correspondence 
with the \textit{discretized} imaginary time path integral of the corresponding quantum code Hamiltonian in a transverse field.
\YL{Here, the physical circuit time directly corresponds to the imaginary time of the path integral.}\footnote{We note that similar observations were made recently in Ref.~\cite{Andi_Bauer_2024}, in a different context.
Outside the context of quantum error correction, Lindbladian dynamics when post-selected (a form of feedback) on ``no-jump'' trajectories is known to simulate imaginary time evolution, and quantum trajectories of random circuits with measurements~\cite{nahum2018hybrid, li2018hybrid} are known to admit stat mech descriptions where the circuit time becomes a Euclidean time~\cite{nahum2018hybrid, andreas2019hybrid, choi2019spin, li2020cft, chenxiao2020nonunitary}.}
The discretized path integral takes the form of a clean stat mech model (more precisely, a generalized $\mathbb{Z}_2$ lattice gauge theory~\cite{wegner1971}) \textit{without quenched disorder}, and it is \textit{generically} the clean limit of the stat mech model for decoding. 
This correspondence is our main observation, and motivates a direct \PRXQ{implication from the existence of ``decodable phases'' of the code to the perturbative stability of the quantum Hamiltonian.
It would also be interesting to explore whether the implication in the other direction also holds, which we leave for future work.}

By virtue of the inference problem, we define a series of ``logical'' boundary order parameters of the generalized $\mathbb{Z}_2$ lattice gauge theories.
\YL{There are $2^{K(N)}$ of them where $K(N)$ can be as large as $O(N)$, but remarkably, all are lower bounded by a single quantity, namely the success probability.}
Therefore, invoking a Peierls style proof of nonzero threshold~\cite{DKLP2001topologicalQmemory, pryadko2014, hastings2023graph, breuckmann2018phd, griffiths1964peierls, peierls1936ising} for LDPC CSS codes,
we show in Sec.~\ref{sec:ordering_of_LGT} the existence of a low temperature ``ordered phase'' of the corresponding $\mathbb{Z}_2$ lattice gauge theories, as characterized by the condition that all logical boundary order parameters attain expectation value $1$ in the limit $N \to \infty$.
These boundary order parameters are generalizations of Wilson loop operators which detect the existence of homologically nontrivial disordering defects \textit{in the bulk} that change its sign.
\PRXQ{They can therefore be viewed as order parameters for spontaneous breaking of higher form or generalized symmetries.
However, unlike conventional Wilson loops which decay to $0$ in the thermodynamic limit for all nonzero temperatures (with different phases only distinguished by the form of the decay, e.g. perimeter versus area law),
our order parameters remain $1$ in the low temperature phase.}
This follows from boundary conditions natural to error correction (c.f. Eq.~\eqref{eq:SPAM_boundary_condition}), where all boundary couplings are set to infinity, similar to so-called Dobrushin boundary conditions in the literature~\cite{friedli_velenik_2017}.

In Sec.~\ref{sec:decoding_success_physical_consequences}, we make the non-rigorous but intuitive jump from \textit{discrete} to \textit{continuous} imaginary time path integrals, so as to translate boundary observables of the classical stat mech model into properties of the perturbed quantum Hamiltonian.
In particular, the small failure probability of the ``memory experiment''~\cite{DKLP2001topologicalQmemory} upper bounds the tunneling rate between \textit{any pair} of different ground states under discrete imaginary time evolution, which is exponentially suppressed by the code distance.
This is conventionally associated with an exponentially small splitting of the ground space of the perturbed Hamiltonian (Sec.~\ref{sec:memory_expr}).
Meanwhile, the small failure probability of the ``stability experiment''~\cite{Gidney_2022} upper bounds two-point correlation functions of excitation-creating operators in a slightly modified Hamiltonian; these correlations are shown to decay exponentially in their separation in imaginary time (Sec.~\ref{sec:stability_hard_wall} and \ref{sec:stability_hard_wall_two_punctures}).
This is conventionally interpreted as evidence for a finite gap in the excitation spectrum.
In the spacetime picture, the success of the memory experiment corresponds to the suppression of space-like defects that form logical errors, and the success of the stability experiment corresponds to the suppression of time-like defects.
The latter can be thought of as worldlines of elementary excitations of the Hamiltonian. Our aim in this section is translating the consequences of successful decoding into the language of perturbative stability of quantum phases, particularly arguing for the existence of an ordered phase given a non-vanishing threshold for error correction. The relationship between the two is most straightforward for the case of completely biased bit-flip or phase-flip noise corresponding to either an $\paulix$ or $\pauliz$ field perturbation to the code Hamiltonian (Fig.~\ref{fig:summary}(c)). We note that questions about the topology of the phase diagram and universality classes of the phase transitions, particularly in the bulk of the phase diagram, have a storied history~\cite{fradkinshenker1979} and are an active area of research for the case of the toric code and related gauge-Higgs models~\cite{vidal2009phase, vidal2011phase, fengcheng2012phase, nahum2020selfdual, nahum2024patching}. However, our focus in the current work remains on the axes of the phase diagram (and for general CSS LDPC codes, not just the toric code) and on the characteristics of their ordered phases, rather than transitions out of their ordered phases. Similarly, we do not directly comment on related problems of direct deformations of ground states~\cite{guoyi2019gapless} (relevant for teleportation of logical qubits in the toric code~\cite{guoyi2024robust}).

In Sec.~\ref{sec:discussion},
we discuss limitations of our arguments in the continuous imaginary time limit.
In particular, we point out that it is nontrivial to assume stability persists within this limit, and that additional evidence (beyond the existence of decoding thresholds) is needed in order to justify this assumption, see Appendix~\ref{sec:continuous_time_limit} for further details.
Therefore, our approach to the continuous time limit is not rigorous, and our results do not allow direct access to the spectrum of the perturbed quantum Hamiltonian. Our results in Sec.~\ref{sec:decoding_success_physical_consequences} should instead be viewed as evidence to the general perturbative stability of such Hamiltonians.
\PRXQ{On the other hand, they describe measurable properties of the code and are of experimental relevance.}
Our framework also allows one to build intuition for QEC from stat mech, and vice versa.
This is exemplified by a few examples and various different decoding experiments we proposed.
We anticipate future examples along these lines.

\YL{We also point out general difficulties in making the analogy between quantum error correction and zero temperature phases of the code Hamiltonian beyond CSS codes under biased noise models, particularly when the path integral might have a sign problem.}

This paper also has a few appendices.
In Appendix~\ref{sec:correlated_bitflip}, we discuss stat mech models with correlated bitflip errors and non-uniform error rates.
In Appendix~\ref{sec:nishimori}, we present a few supplementary mathematical results on the disordered stat mech model.
In Appendix~\ref{sec:GKS} we review correlation inequalities and present a corollary particularly useful when removing sign disorder from models.
In Appendix~\ref{sec:zbasis} we share more details on a link between boundary observables in a classical model and tunneling amplitudes in a corresponding quantum model.
In Appendices~\ref{sec:two_punctures_two_points} and~\ref{sec:stability_experiment_special_to_surface_codes}, we provide further details on the ``stability experiments'', including a formulation specialized for 2D surface codes in Euclidean and hyperbolic spaces that do not involve modifying the Hamiltonian, making use of a global redundancy.
In Appendix~\ref{sec:pfail_upperbound_MW}, we provide detailed upper bounds to decoding failure probabilities for LDPC CSS codes, following previous works~\cite{DKLP2001topologicalQmemory, pryadko2014, breuckmann2018phd}.
In Appendix~\ref{sec:continuous_time_limit}, we examine the continuous time limit, and we discuss whether decoding thresholds guarantee stability in this limit.
We also conjecture the asymptotic shape of the phase boundary of disordered models describing maximal-likelihood decoding for generic codes.
In Appendix~\ref{sec:examples_beyond}, we discuss two examples that are slightly beyond LDPC CSS codes under uncorrelated $\paulix$ and $\pauliz$ errors but whose stat mech models can be easily written down and analyzed.
\YL{They are (i) the toric code under pure $\pauliy$ noise and (ii) the Bacon-Shor code (which is a subsystem code with noncommuting checks).}

\section{Statistical mechanics for maximal-likelihood decoding \label{sec:stat_mech_model}}

When the code is a classical linear code or a quantum CSS code, the stability to errors can often be captured by classical stat mech models with quenched disorder~\cite{Rujan_1993, Sourlas_1994, nishimori2001book, mezardmontanari2009book, DKLP2001topologicalQmemory, pryadko2013, Zdeborova_2016, ChubbFlammia2018}. %
The decoding threshold of the code is identified with ``disordering'' phase transitions as driven by the error rate.
In particular, a maximal-likelihood decoder computes probabilities of syndrome-compatible errors within different equivalence classes, which are proportional to partition functions of the spin model within different homology classes~\cite{pryadko2013}.
In practice, such mappings have been most useful in giving numerical estimates of upper bounds for error thresholds for all decoders without the need to perform actual decoding (which can often be computationally prohibitive), but instead by analyzing thermodynamic functions of the disordered stat mech models across the phase transition~\cite{wang2003, Arakawa_2005,Kovalev_2018, kubica-2018-PRL-statmech-3Dcolorcode, andrist2011tricolored, andrist2015pra, haosong2022fractonstatmech}.

In Sec.~\ref{sec:stat_mech_model_derivation_subsection} we generalize these results 
to arbitrary CSS codes under uncorrelated $\paulix$ and $\pauliz$ noise.
This is a setup as close to classical codes as possible, and where connections with classical stat mech are the most transparent.
We focus on the \textit{fault tolerant memory experiment} setting where syndrome measurements are faulty and are repeated in time. (The case of perfect measurements is a special case to this more general situation, as we briefly discuss in Sec.~\ref{sec:perfect_measurements}.)
The main observation is that when the error model is a Markovian process, the transition probabilities are local and can be encoded into local positive Boltzmann weights of a classical stat mech model through generalized Nishimori conditions~\cite{nishimori1981internal,nishimori1986geometry}.

Our mapping aligns more closely with~\cite{Rujan_1993, Sourlas_1994} and are motivated by Bayesian considerations,
and is different from Dennis \textit{et.\,al.}~\cite{DKLP2001topologicalQmemory} which requires microscopic descriptions of excitations and error chains.
These descriptions are nevertheless very useful in estimating upper bounds to failure probabilities, see Appendix~\ref{sec:pfail_upperbound_MW}.
Ours also differs from recent works~\cite{fan2023diagnostics, colmenarez2023accurate, su2024tapestry, jongyeonlee2024exact, lyons2024understanding, yimu2024} deriving stat mech models from coherent information under a noisy quantum channel, invoking a general theorem by Schumacher and Nielsen~\cite{schumacher_nielsen_1996}.
These general approaches are presumably of wider applicability. Ours also differs from recent works~\cite{fan2023diagnostics,lyons2024understanding,zhao2023errorthresholds} deriving stat mech models from quantum relative entropy in the context of approximate Knill-Laflamme~\cite{knill1997theory} error-correction conditions.

The resultant stat mech models have a simple and intuitive form, see Eq.~\eqref{eq:H(sigma,s)} and Fig.~\ref{fig:summary}: they have a layered structure, with each layer representing a time step.
Spins at the same location but on neighboring layers are coupled by an Ising-like term, coming from bitflip errors; 
and spins within the same layer are coupled by the $\pauliz$ stabilizers (c.f. Eq.~\eqref{eq:stabilizer_code_hamiltonian}), coming from measurement errors.

In Sec.~\ref{sec:def_ML_decoding} we define maximal-likelihood decoding, and represent its success probability with disorder-averaged observables within the stat mech models.
As a technical point, the conventional definition of decoding success (Inference Problem~\ref{IP1}) does not immediately lend itself to any particular order parameter.
This motivates the definition of a series of sub-problems (Inference Problem~\ref{IP2}), which directly correspond to a series of generalized boundary Wilson loop operators with support on logical operators of the quantum code.
Therefore, successful decoding leads to ordered phases of the Ising gauge theories with the boundary Wilson loops as order parameters, as we discuss later in Sec.~\ref{sec:ordering_of_LGT}.

We note that disorder appears in Eq.~\eqref{eq:H(sigma,s)} in a slightly different form than the models in the literature; this is resolved by a simple change of basis, as we explain in Sec.~\ref{sec:uncorrelated_disorder}.
In Sec.~\ref{sec:gauge_invariance} we point out gauge invariance of the stat mech models for any quantum codes with both constant weight $\pauliz$ and $\paulix$ stabilizers, therefore justifying the name of ``gauge theories.''
In Sec.~\ref{sec:examples_and_gauge_invariance}, we show that several
known examples can be immediately read off from our derivation, either with or without measurement errors.

\subsection{Notation}
We are particularly interested in CSS stabilizer codes.
We denote the spins by their location $\bfr$.
The $\pauliz$ stabilizers of the code are uniquely specified by $\qs$ --- the subset of qubits within that stabilizer --- as
\begin{align}
    \pauliz_\qs \equiv \prod_{\bfr \in \qs} \pauliz_\bfr.
\end{align}
(In the case of the 2D toric code, these operators are conventionally known as $B_p$, \YL{but the notation here works for general CSS codes.})
Similarly, the $\paulix$ stabilizers of the code are uniquely specified by subsets denoted $\widetilde{\qs}$, where
\begin{align}
    \paulix_{\widetilde{\qs}} \equiv \prod_{\bfr \in \widetilde{\qs}} \paulix_\bfr.
\end{align}
(In the case of the 2D toric code, these operators are conventionally known as $A_v$.)
The unperturbed quantum code Hamiltonian is therefore
\begin{align}
\label{eq:stabilizer_code_hamiltonian}
    H_{\rm q} \equiv 
    - \sum_\qs \pauliz_{\qs} 
    -\sum_{\widetilde{\qs}} \paulix_{\widetilde{\qs}}.
\end{align}
\YL{The $\pauliz$ and $\paulix$ stabilizers are guaranteed to pairwise commute.}

\subsection{Stat mech models from transfer matrices \label{sec:stat_mech_model_derivation_subsection}}

CSS codes are capable of correcting $\paulix$- and $\pauliz$-type errors separately, and in the following, we imagine for concreteness a process with biased $\paulix$ noise (bitflip), whose effects will be picked up by the $\pauliz$-stabilizers.
After encoding logical information into the code space (we assume this step to be perfect), we have in mind an ``error model;'' namely, a time evolution where qubit errors and measurements of stabilizers take place at a finite rate.
As is customary~\cite{DKLP2001topologicalQmemory}, we model this in terms of timesteps; each timestep consists of the accumulation of physical errors and the faulty readout of stabilizer measurements.

We represent physical errors and measurement outcomes with classical Ising spins $\sigma$ and $s$.
There is a $\sigma$ spin for each qubit at each timestep, and there is an $s$ spin for each $\pauliz$-stabilizer $\qs$ (as noted, we are considering bitflip noise for concreteness) at each timestep.
In particular, we define 
\begin{align}
\begin{split}
    \label{eq:def_sigma_s}
    \sigma_{\bfr,t} =&\ (-1)^{\text{\# of bitflip errors on site }\bfr\text{ by time }t}, \\
    s_{\qs,t} =&\ \text{measurement outcome of stabilizer }\qs\text{ at time } t.
\end{split}
\end{align}
Here, we assume that a syndrome error only affects the syndrome bit, but not the physical qubits.
As there are no errors in the initial state by assumption, a measurement succeeds if $s_{\qs,t} = \prod_{\bfr \in \qs} \sigma_{\bfr,t}$, and fails if they do not agree.
We use the notation $\sigma_{t}$ and $s_{t}$ to refer to the sets of all $\sigma_{\bfr, t}$ and $s_{\qs, t}$ at time $t$, and we will use the notation $\sigma$ and $s$ to refer to the sets of all $\sigma_{\bfr, t}$ and $s_{\qs, t}$. %

For simplicity, we restrict our consideration to stochastic error processes whose parameters, like measurement error rates, are independent of time.\footnote{\label{fn:final_perfect_measurements}The exception being the final time-slice of perfect measurements.
This assumption models the experimental operation where measurement of all qubits in the computational basis are made at the end of a computation, which are often of much higher fidelity and assumed to be perfect.}
That is, we have that 
\begin{align}\label{eq:transition_probabilities}
\mathbb{P}(\sigma_{t}, s_{t}|\sigma_{t-1}, s_{t-1}) 
    = \mathbb{P}(s_{t}|\sigma_{t}) \cdot \mathbb{P}(\sigma_{t}|\sigma_{t-1}).
\end{align} 
is independent of time (for $t<\tf$).
$\mathbb{P}(\sigma_{t}|\sigma_{t-1})$ will be set by the model of physical qubit errors, and $\mathbb{P}(s_{t}|\sigma_{t})$ is the probability of seeing stabilizer outcomes $s_{t}$ given cumulative physical errors $\sigma_{t}$.
\YL{As is customary (c.f. footnote~\ref{fn:final_perfect_measurements})},
at the final time step $t = \tf$, we assume all stabilizer measurement outcomes are perfect; this allows one to define error correction success and thresholds in a natural manner.
With these, we have perfect ``state preparation and measurement'' (SPAM) temporal boundary conditions
\begin{equation}
\label{eq:SPAM_boundary_condition}
\begin{split}
    \forall \bfr,\quad \sigma_{\bfr, \ti} =&\, +1, \\
    \forall \qs, \quad s_{\qs,\tf} =&\, \prod_{\bfr \in \qs} \sigma_{\bfr,\tf}.
\end{split}
\end{equation}

For simplicity, we focus on i.i.d. bitflip noise and i.i.d. measurement errors. (Correlated bitflip errors are discussed in Appendix~\ref{sec:correlated_bitflip}.)
A bitflip error occurring with probability $\pbf$ will flip $\sigma_{\bfr,t+1}$ relative to $\sigma_{\bfr,t}$, and can be written suggestively as 
\begin{equation}\label{eq:tempphyserr}
\begin{split}
 \mathbb{P}(\sigma_{t+1}|\sigma_{t}) & \propto e^{ \kbf \sum_{\bfr} \sigma_{\bfr, t+1} \sigma_{\bfr, t}}, \\
 \kbf &\equiv \frac{1}{2}\log(\frac{1-\pbf}{\pbf}).
\end{split}
\end{equation}
If measurement errors occur with probability $\pmeas$, then we can similarly write 
\begin{equation}
\begin{split}\label{eq:measerr}
 \mathbb{P}(s_{t}|\sigma_{t}) & \propto e^{ \km \sum_{\qs} s_{\qs, t} \prod_{\bfr \in \qs} \sigma_{\bfr, t}}, \\
 \km &\equiv 
 \begin{cases}
     \frac{1}{2}\log(\frac{1-\pmeas}{\pmeas}), &t < \tf \\
     \infty, &t = \tf
 \end{cases}.
\end{split}
\end{equation}
The relation between the error rates $p_{\rm bf}, p_{\rm m}$ and the couplings $\kbf, \km$ in Eqs.~(\ref{eq:tempphyserr},\ref{eq:measerr}) are known as Nishimori conditions~\cite{nishimori1986geometry}.
This leads to
\begin{align}
\begin{split}
    \label{eq:H(sigma,s)}
    \mathbb{P}(\sigma, s) =&\, \prod_{t=t_0+1}^{\tf} \mathbb{P}(\sigma_{t}, s_{t}|\sigma_{t-1}, s_{t-1}) 
    \quad\propto\quad \exp{-H(\sigma, s)}, \\
    H(\sigma, s) =&\, -\sum_{t=t_0+1}^{\tf} \left[ \km \sum_{\qs} s_{\qs, t} \prod_{\bfr \in \qs} \sigma_{\bfr, t} + \kbf \sum_{\bfr} \sigma_{\bfr, t} \sigma_{\bfr, t-1} \right].
\end{split}
\end{align}
This defines the following partition function
\begin{align}
    \label{eq:Z[s]}
    Z[s] = \tr_{\sigma} \exp{-H(\sigma, s)},
\end{align}
which describes a spin model with quenched disorder $s$.
It is evident that $Z[s] \propto \mathbb{P}(s) \equiv \sum_\sigma \mathbb{P}(\sigma, s)$.
The boundary conditions in Eq.~\eqref{eq:SPAM_boundary_condition} are always assumed, when not stated explicitly.
Eq.~\eqref{eq:Z[s]} is the disordered stat mech model encoding \textit{all} error probabilities,
from which successful decoding can be defined, as we now discuss.

\subsection{Maximal-likelihood decoding \label{sec:def_ML_decoding}}

For a given error history $\sigma$ and syndrome history $s$, define
\begin{align}
    \label{eq:def_LJ}
    L_j(\sigma_{\tf}) = \prod_{\bfr \in L_j} \sigma_{\bfr, \tf},
\end{align}
where $L_j (j \in [K])$\footnote{In this paper we denote by $[N] (N \in \mathbb{Z})$ the set of integers $\{1, 2, \ldots, N\}$.}
is a logical representative of the $\pauliz$-logical acting on the $j$-th logical qubit.\footnote{$L_j(\sigma_\tf)$ is a logical representative made up of the classical auxiliary spins $\sigma$. Note that knowing $L_j(\sigma_\tf)$ does not mean that $L_j$ in the quantum code is known. Instead, correctly predicting $L_j(\sigma_\tf)$ corresponds to successful decoding by allowing errors to be removed without inducing a logical error through the following steps. 1. Clean up the final syndrome by acting a product of $\paulix$ operators; any arbitrary product that removes the syndrome works. 2. Do the following for each $j$. Add the number of times the cleanup operator had support on $L_j$ to the predicted value of $(1-L_j(\sigma_\tf))/2$. If the resulting sum has even parity, do nothing, and if the resulting sum has odd parity, act the $\paulix$-logical conjugate to $L_j$ on the quantum code. (In the case that the perfect readout at the final time step was from measuring in the computational basis to extract the $\pauliz$-logical information, note that the cleanup and conjugate logicals can be performed in classical processing.) This decoding process works because it ensures that the resulting syndrome is trivial, and the number of times a given logical representative has been hit by cumulative bitflip errors and clean-up and correction operators built out of $\paulix$ is even.}
Due to perfect syndrome measurements at $\tf$, the choice of the logical representative $L_j$ is inessential, as different choices can be related by perfect stabilizer syndromes.
Decoding is defined to be the following inference problem:

\begin{definition}[Inference problem: Predicting all logicals]\label{IP1}
Given the syndrome $s$ as input, the decoder $\decoder$ outputs $\decoder_j(s) = \pm 1$ for $j \in [K]$ as its prediction for $L_j(\sigma_\tf$).
\end{definition}

We say the decoder succeeds if its prediction for all $j$ are simultaneously correct.
Its probability of success can be represented as
\begin{equation}
\begin{split}
    \mathbb{P}^{\rm all}_\decoder
    \equiv&\ 
    \mathbb{P}(\forall j\in [K],\ \decoder_j(s) = L_j(\sigma_{\tf})) \\
    =&\ \sum_{s} \mathbb{P}(s)  \sum_{\sigma} \mathbb{P}(\sigma|s) \prod_{j=1}^K \frac{1+\decoder_j(s) \cdot L_j(\sigma_{\tf})}{2}.
\end{split}
\end{equation}
The maximal-likelihood (ML) decoder is defined to maximize $\mathbb{P}_\decoder^{\rm all}$,
\begin{equation}
\begin{split}
    \label{eq:def_D_ML_all}
    \decoder^{\rm ML}(s) &\equiv \argmax_{\ell \in \{\pm 1\}^{[K]}} \mathbb{P}(\forall j,\ L_j(\sigma_{\tf}) = \ell_j  | s)
    \\&= \argmax_{\ell \in \{\pm 1\}^{[K]}} \sum_{\sigma} \mathbb{P}(\sigma | s) \prod_{j=1}^K \frac{1+\ell_j \cdot L_j(\sigma_{\tf})}{2}.
\end{split}
\end{equation}
We will define ``thresholds'' in terms of $\mathbb{P}_\decoder^{\rm all}$ as usual.
That is, we say that we are below threshold for some decoder $\decoder$, a family of codes of growing system size $N$, and a final time $\tf(N)$ potentially growing as a function of $N$, if $\lim_{N \to\infty} \mathbb{P}^{\rm all}_\decoder = 1$.
Being below threshold means that the fraction of times that all $K(N)$ encoded logical qubits are simultaneously successfully decoded with $\decoder$ goes to $1$ as $N \to \infty$. 

By definition, we have $\mathbb{P}^{\rm all}_\decoder \leq \mathbb{P}^{\rm all}_{\decoder^{\rm ML}}$ for any $\decoder$, and in this sense the ML decoder is optimal.

As alluded to earlier, for the purpose of defining order parameters,
we also define more specialized inference problems relevant to successfully predicting products of subsets of $\pauliz$-logicals without requiring successful prediction of all constituent $\pauliz$-logicals at the same time.
Denote such a product of $\pauliz$-logicals by $L_J(\sigma_{\tf}) = \prod_{j \in J} L_j(\sigma_{\tf})$ for an indexing set $J \subseteq [K]$.
We have

\begin{definition}[Inference problem: Predicting a particular logical $L_J$]  \label{IP2}
Given the syndrome $s$ as input, the decoder $\decoder$ outputs $\decoder(s) = \pm 1$ as its prediction for $L_J$.
\end{definition}

The probability of successfully predicting this product is
\begin{equation}
\label{eq:P_succ_specialized_deooder}
\begin{split}
    \mathbb{P}^J_{\decoder}
    \equiv&\ 
    \mathbb{P}(\decoder(s) = 
    L_J(\sigma_{\tf})
    ) \\
    =&\ \sum_{s} \mathbb{P}(s)  \sum_{\sigma} \mathbb{P}(\sigma|s) \frac{1+\decoder(s) \cdot 
    L_J(\sigma_{\tf})}{2}.
\end{split}
\end{equation}
A maximal-likelihood decoder for this specialized task can be similarly defined,
\begin{equation}
\label{eq:def_specialized_deooder_ML}
\begin{split}
    \decoder_J^{\rm ML}(s) &\equiv 
    \argmax_{\ell = \pm 1} \sum_{\sigma} \mathbb{P}(\sigma | s)  \frac{1+\ell \cdot L_J(\sigma_{\tf})}{2}
    \\&= \sgn \left( \sum_{\sigma} \mathbb{P}(\sigma | s) \cdot L_J(\sigma_{\tf}) \right).
\end{split}
\end{equation}
Similarly, we have $\mathbb{P}^{J}_\decoder \leq \mathbb{P}^{J}_{\decoder^{\rm ML}_J}$ for any $\decoder$.
Together, Eq.~\eqref{eq:P_succ_specialized_deooder} and Eq.~\eqref{eq:def_specialized_deooder_ML} allow us to write
\begin{align}
\label{eq:2Ps-1_abs_avg_LJ}
    \Delta \mathbb{P}^J_{\decoder_J^{\rm ML}}
    \equiv&\
    \mathbb{P}^J_{\decoder_J^{\rm ML}} - (1 - \mathbb{P}^J_{\decoder_J^{\rm ML}}) \nn
    =&\ 
    \sum_{s} \mathbb{P}(s) \cdot \decoder_J^{\rm ML}(s)  \cdot \sum_{\sigma} \mathbb{P}(\sigma|s) \cdot 
    L_J(\sigma_{\tf}) \nn
    =&\ \sum_s \mathbb{P}(s) 
    \left|
        \sum_{\sigma} \mathbb{P}(\sigma|s) \cdot 
        L_J(\sigma_{\tf}) 
    \right| \nn
    =&\ [| \avg{L_J}_{Z[s]} |]_s.
\end{align}
In the final step, we denote by $\avg{...}_{Z[s]}$ the expectation value taken with respect to the partition function $Z[s]$ (Eq.~\eqref{eq:Z[s]}), and $[...]_s$ the average over quenched disorder $s$.
We henceforth suppress the explicit dependence of $L_J$ on $\sigma_\tf$, which is always implied.
In the following, we will write $[\avg{...}]_s$ for $[ \avg{...}_{Z[s]} ]_s$, to avoid cluttering.
Such decoders were first proposed in the classical coding literature, in particular by Ruj\'{a}n~\cite{Rujan_1993}.

Similar to our definition above, 
we say that a decoder $\decoder$ is below threshold for Inference Problem~\ref{IP2} if 
$\lim_{N \to\infty} \mathbb{P}^{J}_\decoder = 1$
for a family of codes of growing system size $N$, and a time $\tf(N)$ potentially growing as a function of $N$.
In particular, when $\decoder = \decoder^{\rm ML}_J$, being below threshold implies $\lim_{N \to \infty} [| \avg{L_J} |]_s \to 1$.

The two Inference Problems are related by the following observation
\begin{proposition}
\label{lemma:IP1_IP2_equivalence}
    Inference Problem~\ref{IP1} is successful under $\decoder^{\rm ML}$ ($\mathbb{P}^{\rm all}_{\decoder^{\rm ML}} \to 1$) \YL{implies that} Inference Problem~\ref{IP2} is successful under $\decoder^{\rm ML}_J$ ($\mathbb{P}^{J}_{\decoder^{\rm ML}_J} \to 1$) for each $J \subseteq [K]$.
\end{proposition}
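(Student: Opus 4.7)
The plan is to proceed by an explicit construction: given the hypothesis that the ML decoder for Inference Problem~\ref{IP1} succeeds with probability tending to $1$, I will build from it a (suboptimal) decoder for Inference Problem~\ref{IP2} whose success probability already tends to $1$, and then invoke the optimality of $\decoder^{\rm ML}_J$ to transfer this to $\mathbb{P}^J_{\decoder^{\rm ML}_J}$.

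Concretely, for each $J \subseteq [K]$ define the composite decoder
\begin{align}
    \decoder'_J(s) \equiv \prod_{j \in J} \decoder^{\rm ML}_j(s),
\end{align}
where $\decoder^{\rm ML}_j(s)$ is the $j$-th component of the ML decoder $\decoder^{\rm ML}$ of Eq.~\eqref{eq:def_D_ML_all}. The point is that $L_J(\sigma_\tf) = \prod_{j \in J} L_j(\sigma_\tf)$ by definition, so whenever $\decoder^{\rm ML}$ predicts \emph{all} $K$ logicals correctly on a given syndrome $s$, we have $\decoder'_J(s) = \prod_{j \in J} L_j(\sigma_\tf) = L_J(\sigma_\tf)$. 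Thus the event ``$\decoder^{\rm ML}$ succeeds on Inference Problem~\ref{IP1}'' is contained in the event ``$\decoder'_J$ succeeds on Inference Problem~\ref{IP2}'', giving the inequality
\begin{align}
    \mathbb{P}^J_{\decoder'_J} \;\geq\; \mathbb{P}^{\rm all}_{\decoder^{\rm ML}}.
\end{align}

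Next I invoke the optimality of $\decoder^{\rm ML}_J$ on Inference Problem~\ref{IP2}, which was observed just below Eq.~\eqref{eq:def_specialized_deooder_ML}: for any decoder $\decoder$ on this problem, $\mathbb{P}^J_{\decoder} \leq \mathbb{P}^J_{\decoder^{\rm ML}_J}$. Applying this with $\decoder = \decoder'_J$ and chaining the inequalities yields
\begin{align}
    \mathbb{P}^J_{\decoder^{\rm ML}_J} \;\geq\; \mathbb{P}^J_{\decoder'_J} \;\geq\; \mathbb{P}^{\rm all}_{\decoder^{\rm ML}} \;\xrightarrow{N\to\infty}\; 1,
\end{align}
so $\mathbb{P}^J_{\decoder^{\rm ML}_J} \to 1$, which is the claim.

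There is essentially no serious obstacle: the proof is a short monotonicity argument, with the only substantive ingredients being (i) that $L_J$ factorizes over $J$, and (ii) the definitional optimality of maximal-likelihood decoders for their respective inference tasks. The logical content is simply that Inference Problem~\ref{IP1} is at least as hard as \emph{every} instance of Inference Problem~\ref{IP2}, so the converse implication fails in general and should not be attempted here; I would flag this asymmetry explicitly, since it is precisely the reason the specialized problems are the useful ones for defining $2^K$ boundary order parameters as in Eq.~\eqref{eq:2Ps-1_abs_avg_LJ}.
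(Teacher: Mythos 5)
Your proof is correct and takes essentially the same approach as the paper: the paper's chain $\mathbb{P}^{\rm all}_{\decoder^{\rm ML}} \leq \mathbb{P}^{J}_{\decoder^{\rm ML}} \leq \mathbb{P}^{J}_{\decoder^{\rm ML}_J}$ implicitly treats $\decoder^{\rm ML}$ as a decoder for Problem~\ref{IP2} via precisely your composite $\decoder'_J(s) = \prod_{j\in J}\decoder^{\rm ML}_j(s)$, and then invokes optimality of $\decoder^{\rm ML}_J$. Your write-up just makes that intermediate decoder explicit rather than leaving it notationally implicit.
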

\begin{proof}
\YL{This is to formalize the intuition that Inference Problem~\ref{IP1} is harder than Inference Problem~\ref{IP2}.
Formally, we have
\begin{align}
    \label{eq:P_succ_inequalities}
    \forall J \subseteq [K], \quad \mathbb{P}^{\rm all}_{\decoder^{\rm ML}} \leq \mathbb{P}^{J}_{\decoder^{\rm ML}} \leq \mathbb{P}^{J}_{\decoder^{\rm ML}_J}.
\end{align}
This is because for $\decoder^{\rm ML}$, success at Inference Problem~\ref{IP2} is guaranteed by success at Problem~\ref{IP1}; and for Problem~\ref{IP2}, $\decoder_J^{\rm ML}$ is a better decoder than $\decoder^{\rm ML}$.}
\end{proof}

\PRXQ{
It is conceivable~\cite{pryadko2013, Jiang_2019, ChubbFlammia2018, fahimniya2023faulttolerant, plackebreuckmann2023hyperbolicIsing} that there can exist intermediate ``phases'' where Inference Problem~\ref{IP1} fails but Inference Problem~\ref{IP2} succeeds on certain $J \subseteq [K]$ (c.f. Eq.~\eqref{eq:P_succ_inequalities}) particularly for codes with $K$ growing with $N$.
In these phases, the code protects a subset of logical qubits and potentially serves as a partial quantum memory.
As the error rate is increased, there might exist multiple transitions corresponding to gradual degradation of the memory, as driven by proliferation of different types of defects, see Appendix~\ref{sec:divergence_of_free_energy}.
The order parameters $[|\avg{L_J}|]_s$ may then provide fine-grained probes of these transitions.
}

\begin{figure*}[t]
    \centering
    \includegraphics[width=.95\linewidth]{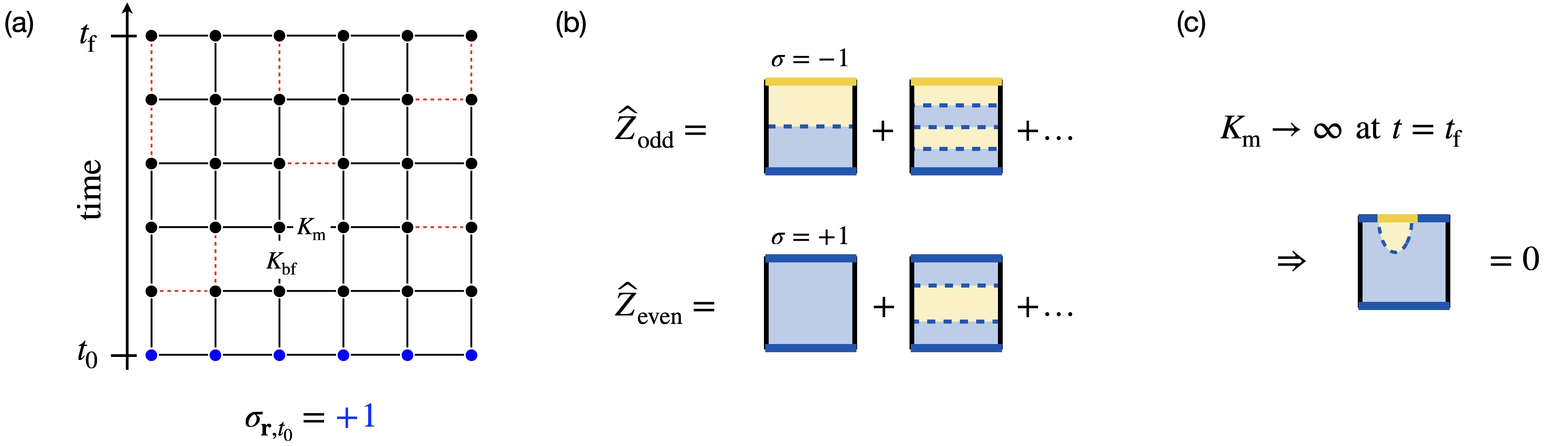}
    \caption{(a) Mapping the 1D repetition code to the 2D random bond Ising model, Eq.~\eqref{eq:H_RBIM}.
    The horizontal couplings are given by $\km$, and the vertical bond couplings are given by $\kbf$.
    The dashed bonds denote where $\error = -1$, occuring with probabilities $p_{\rm m}$ and $p_{\rm bf}$, as given by Nishimori conditions.
    The boundary conditions are $\sigma_{\bfr, \ti} = +1$, and $\km \to \infty$ at $t = \tf$, see Eq.~\eqref{eq:bc_for_RBIM}.
    (b) The order parameter $[\avg{\sigma_{\bfr, \tf}}]_\error$ depends on partition functions $\widehat{Z}_{\rm even, odd}$, as in Eq.~\eqref{eq:RBIM_sigma_partition_function}.
    The former has an even number of domain walls inserted between the two temporal boundaries $t=\ti, \tf$ (giving $\sigma_{\tf} = +1$), and the latter has an odd number, giving $\sigma_{\tf} = -1$.
    In the ordered phase of the RBIM we have $\widehat{Z}_{\rm odd}/\widehat{Z}_{\rm even} = O(e^{-L})$, due to a finite domain wall line tension, therefore $[\avg{\sigma_{\bfr, \tf}}]_\error \to 1$.
    (c) The boundary conditions forbid domain walls to terminate on the temporal edges defined by $t=\ti, \tf$. Therefore, any configuration with a wrong-signed $\sigma_\tf$ has a domain wall of length at least $L$.
    }
    \label{fig:Ising}
\end{figure*}

\subsection{Simplification: uncorrelated quenched disorder \label{sec:uncorrelated_disorder}}

The quenched ``syndrome'' disorder $s$ in Eq.~\eqref{eq:Z[s]} may be correlated through space and time, and therefore not directly comparable with disordered spin models considered in the literature.
We show in Appendix~\ref{sec:nishimori} that under a natural change of basis, a spin model with uncorrelated quenched disorder can be obtained,
\begin{align}
\begin{split}
    \label{eq:Z_eta_maintext}
    &\ \widehat{H}(\sigma, \error) \\
    &\quad =
    -\sum_{t} \left[ \km \sum_{\qs} \error_{\qs, t} \prod_{\bfr \in \qs} \sigma_{\bfr, t} + \kbf \sum_{\bfr} \error_{\bfr, t} \sigma_{\bfr, t+1} \sigma_{\bfr, t} \right], \\
    &\ \widehat{Z}[\error] = \tr_\sigma \exp {-\widehat{H}(\sigma, \error)}.
\end{split}
\end{align}
Intuitively, we go from a model with disorder configuration given by the syndrome $s$ to one where the disorder configuration is given by the bit-flip and syndrome error events $\error$.
Note that the quenched disorder $\error$ now couple to both time-like ``bonds'' and space-like    ``plaquettes''.
They are uncorrelated through spacetime, and obey the following probability distribution,
\begin{align}
    \mathbb{P}(\error) \propto \left(\frac{p_{\rm bf}}{1-p_{\rm bf}}\right)^{\wgt(\error_\bfr)} \cdot
    \left(\frac{p_{\rm m}}{1-p_{\rm m}}\right)^{\wgt(\error_\qs)},
\end{align}
where $\wgt(\cdot)$ counts the number of $-1$ components of a $\pm 1$ vector.
The two stat mech models are equivalent, and we have in particular
\begin{align}
    \label{eq:LJs=LJe}
    [| \avg{L_J} |]_s = [| \avg{L_J} |]_\error.
\end{align}
Furthermore, we have
\begin{align}
    \label{eq:LJ_avg_to_1}
    \mathbb{P}^J_{\decoder_J^{\rm ML}} \to 1
    \quad \Rightarrow \quad
    [\avg{L_J}]_\error \to 1,
\end{align}
where on the RHS no absolute value is taken, in contrast to Eq.~\eqref{eq:LJs=LJe}.
We refer the reader to Appendix~\ref{sec:divergence_of_free_energy} for more details.
For the remainder of this section, we focus on the model $\widehat{Z}[\error]$.

\subsection{Gauge invariance for quantum codes \label{sec:gauge_invariance}}

For any quantum code, the corresponding stat mech model generically takes the form of a lattice gauge theory.
Recall that a quantum code will have both $\paulix$ and $\pauliz$ stabilizers.
For each $\paulix$ stabilizer $\widetilde{\qs}$
\begin{align}
    \paulix_{\widetilde{\qs}} = \prod_{\bfr \in \widetilde{\qs}} \paulix_\bfr,
\end{align}
we introduce additional Ising variables $\sigma_{\widetilde{\qs}, t+1/2}$ between temporal layers, and consider a stat mech model with the following action
\begin{widetext}
\begin{align}
    \label{eq:H_gauge}
    \widehat{H}(\sigma, \error) \to \widehat{H}_{\rm gauge}(\sigma, \error) 
    =
    -\sum_{t} \left[ \km \sum_{\qs} \error_{\qs, t} \prod_{\bfr \in \qs} \sigma_{\bfr, t} + \kbf \sum_{\bfr} \error_{\bfr, t} \sigma_{\bfr, t+1} \sigma_{\bfr, t} \prod_{\widetilde{\qs} \ni \bfr} \sigma_{\widetilde{\qs}, t+1/2} \right].
\end{align}
\end{widetext}
The gauge transformations are generated by 
\begin{align}
    \label{eq:gauge_transformation}
    \forall \widetilde{\qs},   t \in [\ti, \tf],\quad
    \begin{cases}
        \forall \bfr \in \widetilde{\qs}, \quad &\sigma_{\bfr, t} \to - \sigma_{\bfr, t}, \\
        \text{if $t < \tf$},
        &\sigma_{\widetilde{\qs}, t+1/2} \to - \sigma_{\widetilde{\qs}, t+1/2}, \\
        \text{if $t > \ti$},
        &\sigma_{\widetilde{\qs}, t-1/2} \to - \sigma_{\widetilde{\qs}, t-1/2}.
    \end{cases}
\end{align}
As one can check, the invariance of the partition function is guaranteed by the commutation between $\paulix$ and $\pauliz$ stabilizers.
Under the gauge transformations it is always possible to fix $\sigma_{\widetilde{\qs}, t+1/2} = +1$ for all $\widetilde{\qs}$ and $t$ (the temporal gauge), thus establishing the equivalence between $\widehat{H}_{\rm gauge}(\sigma, \error)$ and $\widehat{H}(\sigma, \error)$.

The gauge invariance implies the lack of a local order parameter.
In general, gauge invariant quantities take the form of products of local terms in $\widehat{H}_{\rm gauge}(\sigma, \error)$, or their products with logical operators.
Gauge invariance also leads to the non-convergence of belief-propagation decoders~\cite{poulin2008bp_splitbelief}.

\subsection{Examples \label{sec:examples_and_gauge_invariance}}

Our derivation works for both classical linear codes and quantum CSS codes, as we illustrate with one familiar example from each class.
As we focus on bit-flip noise, the relevant stabilizers are those composed of Pauli $\pauliz$ operators.

\subsubsection{1D repetition code}

With the simplification in Sec.~\ref{sec:uncorrelated_disorder}, we consider the 1D repetition code.
Stabilizers of this code are two-body terms on nearest-neighbors, $\sigma_{\bfr} \sigma_{\bfr+\hat{\bf x}}$, and the corresponding partition function $\widehat{Z}[\error]$ takes the form of a random-bond Ising model (RBIM)~\cite{DKLP2001topologicalQmemory},\footnote{More generally, a repetition code in $d$ dimensions is described by the random-bond Ising model in $(d+1)$ dimensions.}
\begin{align}
    \label{eq:H_RBIM}
    \widehat{H}_{\rm RBIM}(\sigma, \error) 
    &=
    -\sum_{t} [ \km \sum_{\bfr} \error_{(\bfr, \bfr+\hat{\mathbf{x}}), t} \sigma_{\bfr, t} \sigma_{\bfr + \hat{\bf x}, t} \nn
    &\quad\quad\quad\quad
    + \kbf \sum_{\bfr} \error_{\bfr, t} \sigma_{\bfr, t+1} \sigma_{\bfr, t} ].
\end{align}
The perfect SPAM boundary conditions in Eq.~\eqref{eq:SPAM_boundary_condition} corresponds to the following
\begin{align}
\begin{split}
    \label{eq:bc_for_RBIM}
    \forall \bfr, \quad \sigma_{\bfr, \ti} =&\, +1 \\
    \forall \bfr, \bfr', \quad \sigma_{\bfr, \tf} =&\, \sigma_{\bfr', \tf}.
\end{split}    
\end{align}
Thus, in the stat mech model we can either have $\sigma_{\bfr, \tf} = +1$ or $\sigma_{\bfr, \tf} = -1$, corresponding to an even or odd number of domain walls between the two temporal boundaries, see Fig.~\ref{fig:Ising}.
Their difference can be detected by a local spin ``order parameter'' $\sigma_{\bfr, \tf}$ at an arbitrary site $\bfr$ on the $t=\tf$ boundary.
This is a logical representative of the repetition code; in fact, there is only one logical qubit of the repetition code, and Inference Problems~\ref{IP1} and \ref{IP2} coincide.
When the maximal likelihood decoder is below threshold, Eq.~\eqref{eq:LJ_avg_to_1} becomes
\begin{align}
    \label{eq:RBIM_sigma_partition_function}
    [\avg{L(\sigma_\tf)}]_\error = [\avg{\sigma_{\bfr, \tf}}]_\error = 
    \left[ \frac{\widehat{Z}_{\text{even}}[\error] - \widehat{Z}_{\text{odd}}[\error]}
    {\widehat{Z}_{\text{even}}[\error] + \widehat{Z}_{\text{odd}}[\error]}
    \right]_\error \to 1.
\end{align}
In this model, the decodable phase is the ordered phase of the RBIM, and $[\avg{\sigma_{\bfr, \tf}}]_\error \to 1$ is simultaneously the consequence of successful decoding and ferromagnetic ordering.
In this phase, the domain walls have finite surface tension in the ordered phase and are therefore suppressed.

\subsubsection{2D toric code}

\begin{figure}[t]
    \centering
    \includegraphics[width=1.0\linewidth]{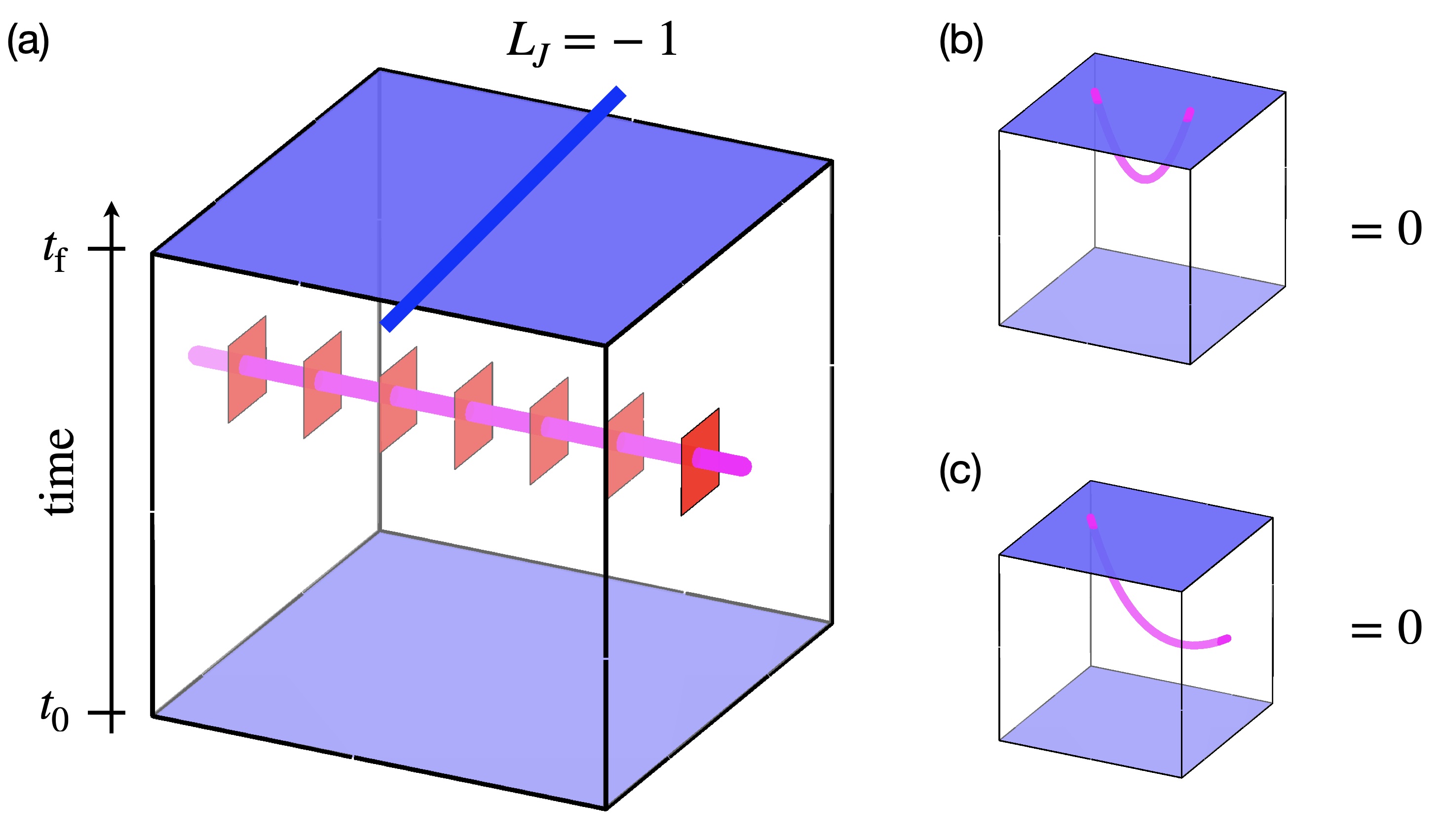}
    \caption{(a) Mapping the 2D toric code to the 3D random plaqutte gauge theory, Eq.~\eqref{eq:H_RPGM}.
    The boundary conditions are given in Eq.~\eqref{eq:SPAM_boundary_condition}, namely $\sigma_{\bfr, \ti} = +1$ and $\km \to \infty$ at $t = \tf$.
    The order parameter $L_J(\sigma_\tf)$ gets a wrong sign if an odd number of transversal flux loop (magenta) are inserted in the bulk, each piercing through a trail of wrong-signed plaquettes (colored in red, for which $\sigma_{\bfr, t} \sigma_{\bfr, t+1} = -1$).
    In the deconfined phase of the RPGM the flux tubes have a finite line tension, and $[\avg{L_J(\sigma_\tf)}]_\error \to 1$.
    (b,c) The boundary conditions forbid flux lines to terminate on the temporal boundaries defined by $t=\ti, \tf$, and any configuration with a wrong-signed $L_J(\sigma_\tf)$ has a flux loop of length at least $L$.}
    \label{fig:toric-code-flux-loop}
\end{figure}

For the 2D toric code~\cite{Kitaev_1997} under bitflip noise, we obtain a 3D stat mech model with intraplane 4-body couplings on plaquettes, and interplane two-body couplings, 
\begin{align}
    \label{eq:H_RPGM}
    \widehat{H}_{\rm RPGM}(\sigma, \error) 
    &=
    -\sum_{t} [ \km \sum_{\Box} \error_{\Box, t} \prod_{\bfr \in \Box} \sigma_{\bfr, t}
    \nn &\quad\quad\quad\quad
    + \kbf \sum_{\bfr} \error_{\bfr, t} \sigma_{\bfr, t+1} \sigma_{\bfr, t} ].
\end{align}
We recognize that this is the 3D classical $\mathbb{Z}_2$ gauge theory with random sign disorder in the couplings (``random-plaquette gauge model''~\cite{DKLP2001topologicalQmemory, wang2003, takeda2003selfdual}), after all time-like bonds are gauge-fixed to be $+1$ (known as the temporal gauge, see Sec.~\ref{sec:gauge_invariance}).
\PRXQ{Indeed, the Hamiltonian before gauge fixing takes the form of Eq.~\eqref{eq:H_gauge}, and is explicitly equivalent to the 3D random-plaquette gauge model.}
The logical operator for this code takes the form of a noncontractible ``Wilson loop'' of length $O(L)$ at final time $\tf$, namely $L_J(\sigma_\tf)$, see Fig.~\ref{fig:toric-code-flux-loop}.
The decodable phase of the toric code corresponds to the low-temperature deconfined phase of the RPGM, and we have according to Eq.~\eqref{eq:LJ_avg_to_1}
\begin{align}
    \mathbb{P}^J_{\decoder_J^{\rm ML}} \to 1 
    \quad \Rightarrow \quad
    [\avg{L_J}]_\error \to 1
\end{align}
Note that a Wilson loop in the bulk has an expectation value $O(e^{-L})$ (a ``perimeter law'' scaling) in the deconfined phase, due to constant-size flux loops that braid with it.
However, the infinite $\km$ couplings at $t=\tf$ forbid small flux loops to terminate on the upper boundary.
The leading contribution to a wrong-signed $L_J$ comes from a noncontractible flux loop of length $O(L)$ in the bulk, in the transverse direction of $L_J$, see Fig.~\ref{fig:toric-code-flux-loop}.
The flux loop has a finite line tension at low temperature and costs an extensive free energy.
In this way, the boundary condition effectively promotes the boundary order parameter $L_J(\sigma_\tf)$ from a perimeter-law to a ``1-law''.\footnote{We call this law a ``1-law'' because the expectation value tends to one. Under a strict analogy with perimeter and area laws, this might be better termed a ``0-law'' since these expectation values have zero scaling with their weights.}
Similar observations were exploited previously~\cite{YL2021z2decoing} in designing suboptimal decoders. We emphasize that in the below-threshold phase of all the models we consider, this ``1-law'' is a feature of all the order parameters $[\avg{L_J}]_\error$ we have proposed.

\subsubsection{Reduction to perfect measurements \label{sec:perfect_measurements}}

By setting $\km \to \infty$ and $p_{\rm m} \to 0$ everywhere, we recover stat mech models that describe perfect measurements~\cite{pryadko2013}.
For example, for $\widehat{H}_{\rm RBIM}$ in Eq.~\eqref{eq:H_RBIM}, $\km \to \infty$ imposes the constraint that $\sigma_{\bfr, t} = \sigma_{\bfr + \hat{\bf x}, t} \equiv \sigma_t$ for all $\bfr, t$. (Notice that $\error_{(\bfr, \bfr+\hat{\mathbf{x}}), t} = +1$ uniformly, due to the Nishimori condition.)
Under this reduction,
\begin{align}
    \lim_{\km \to \infty} \widehat{H}_{\rm RBIM}(\sigma, \error) 
    &=
    -\sum_{t} [ \kbf \sigma_{t+1} \sigma_{t} \sum_{\bfr} \error_{\bfr, t}].
\end{align}
Again due to the Nishimori condition, whenever $\kbf > 0$, the random variable $\sum_{\bfr} \error_{\bfr, t}$ becomes binomial with mean $L \cdot p_{\rm bf}$ and standard deviation $O(\sqrt{L})$.
Therefore, with probability $1$, the model becomes an Ising model with only ferromagnetic couplings, where the coupling strengths diverge with system size.
There is a unique ground state, namely $\sigma_t = +1$ for all $t$, whose weight dominates the partition function, as long as $\kbf > 0$ (i.e. no phase transitions at finite temperatures).
This is consistent with a $50\%$ threshold for repetition codes when measurements are perfect.

Similarly, sending $\km \to \infty$ for $\widehat{H}_{\rm RPGM}(\sigma, \error)$ in Eq.~\eqref{eq:H_RPGM} imposes the constraint  
\begin{align}
    \label{eq:perfect_measurement_contraint_RPGM}
    \forall \Box, t, \quad \prod_{\bfr \in \Box} \sigma_{\bfr, t} = 1.
\end{align}
Recall that the spins $\sigma_{\bfr, t}$ live on bonds of the lattice.
Writing $\bfr = \avg{ij}$, the constraint can be solved by $\sigma_{\avg{ij}, t} = \tau_{i,t} \cdot \tau_{j,t}$.
Therefore,
\begin{align}
    \label{eq:H_RPGM_perfect_measurements}
    &\lim_{\km \to \infty} 
    \widehat{H}_{\rm RPGM}(\sigma, \error)
    =
    -\sum_{t} [ \kbf \sum_{\avg{ij}} \error_{\avg{ij}, t} \widetilde{\tau}_{i, t} \widetilde{\tau}_{j, t} ],
\end{align}
where $\widetilde{\tau}_{i, t} \equiv \tau_{i, t}  \tau_{i, t+1}$.
This is the familiar RBIM, with disorder realization given by the bitflip errors.

In general, the perfect measurement ($p_{\rm m} = 0$, $\km \to \infty$) condition effectively imposes many hard constraints on the stat mech model, and we often get an ``ungauged''~\cite{kubica2018ungauging} stat mech model as compared to the $p_{\rm m} > 0$ one.
The success probability of decoding no longer admits the form of the expectation value of an order parameters; the original spins may not even appear in the stat mech model, as exemplified by Eq.~\eqref{eq:H_RPGM_perfect_measurements}.
Instead, the decoding transition is often signified by the vanishing of free energy cost when introducing ``twisted'' boundary conditions~\cite{pryadko2013}.

\section{Ordering of generalized lattice gauge theories \label{sec:ordering_of_LGT}}

Using decoding success, we show (Theorem~\ref{thm:ordering_of_clean_stat_mech_model}) that low-temperature ordered phases exist for generalized Ising gauge theories associated with the code, see Eq.~\eqref{eq:Z_clean_spin_model}.
We highlight that this result applies generally to all CSS codes with the LDPC condition, and with sufficiently large code distance (c.f. Eq.~\eqref{eq:LDPC_condition}).
In particular, the code may have higher form symmetries or other generalized symmetries.
The boundary observables with Dobrushin-like temporal boundary conditions (c.f. Eq.~\eqref{eq:SPAM_boundary_condition}) therefore serve generally as order parameters for spontaneous breaking of these symmetries.

We focus on CSS codes and error models satisfying the following conditions
\begin{align}
\label{eq:LDPC_condition}
    w = O(1), \quad D(N) = \Omega(\log(N)), \quad 
    \tf = O(e^{D(N)}),
\end{align}
where $w \equiv \max_{\qs} {\rm deg}(\qs)$ is the maximum degree of all $\pauliz$ stabilizers.

\begin{lemma}
\label{lemma:1-1}
The failure probability $\epsilon(N) \equiv 1 - \mathbb{P}^{\rm all}_{\decoder^{\rm ML}}$ satisfies
\begin{align}
\label{eq:pfail_upperbound_memory_maintext}
    \epsilon(N) = O\left( (N \cdot \tf)  \cdot [(w+1) \cdot \widetilde{q}]^{D(N)} \right),
\end{align}
where
\begin{align}
    \label{eq:def_q_tilde}
    \widetilde{q} =&\ 2\sqrt{q(1-q)}, \quad 
    q = \max \{p_{\rm bf}, p_{\rm m}\}.
\end{align}
In particular, if conditions in Eq.~\eqref{eq:LDPC_condition} are met,
there exists a threshold value $\widetilde{q}_{\rm th}$ such that
\begin{align}
    \label{eq:epsilon_N_to_zero}
    \widetilde{q} < \widetilde{q}_{\rm th} \quad \Rightarrow \quad
    \lim_{N \to \infty} \mathbb{P}^{\rm all}_{\decoder^{\rm ML}} = 1.
\end{align}
\end{lemma}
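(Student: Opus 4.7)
The plan is to adapt the Peierls-style arguments of~\cite{DKLP2001topologicalQmemory, pryadko2014, breuckmann2018phd} to the general fault-tolerant LDPC CSS setting, using the uncorrelated-disorder reformulation of Sec.~\ref{sec:uncorrelated_disorder}.

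First, I would view the spacetime event space as a chain complex: each realization $\error$ assigns $\pm 1$ to time-like ``edges'' (bit-flips on qubit $\bfr$ between layers $t-1$ and $t$) and space-like ``edges'' (syndrome errors on check $\qs$ at time $t$), and the measured syndrome fixes $\error$ up to a closed ``cycle'' in this complex. Cycles modulo stabilizer cycles correspond to products $L_J$ of $\pauliz$-logicals at the top boundary, and by construction the minimum weight of any non-trivial such cycle is $D(N)$ (this is the relevant ``fault distance'' given that measurements at $t=\tf$ are perfect).

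Second, I would apply the standard MacKay/Nishimori bound (see e.g.\ the derivation reviewed in Appendix~\ref{sec:pfail_upperbound_MW}): $\decoder^{\mathrm{ML}}$ fails on Inference Problem~\ref{IP1} only if some non-trivial logical cycle $C$ has posterior probability at least equal to the true coset, and averaging over $\error$ gives
\begin{align}
    \epsilon(N) \;\leq\; \sum_{C \text{ non-trivial}} \prod_{e \in C} 2\sqrt{q_e(1-q_e)} \;\leq\; \sum_{C} \widetilde{q}^{\,|C|},
\end{align}
where the Cauchy--Schwarz/Nishimori step produces the factor $\widetilde{q}=2\sqrt{q(1-q)}$ per edge of $C$ and $q = \max\{\pbf, \pmeas\}$ controls the worst case. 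This reduces the problem to a purely combinatorial one about enumerating cycles in the spacetime complex.

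Third, I would bound the number of non-trivial cycles of weight $n \geq D(N)$. Using the LDPC hypothesis $w = O(1)$ together with the bounded qubit degree of the parity-check graph, any connected cycle is determined by a choice of starting edge (at most $N\tf$ options) followed by a walk in a bounded-degree incidence graph with branching at most $w+1$ at each step, giving the entropy bound $\#\{C:|C|=n\} \leq (N\tf)(w+1)^{n-1}$. Summing the geometric series yields
\begin{align}
    \epsilon(N) \;\leq\; \frac{N\tf}{w+1} \cdot \frac{[(w+1)\widetilde{q}]^{D(N)}}{1 - (w+1)\widetilde{q}},
\end{align}
provided $(w+1)\widetilde{q}<1$, which is Eq.~\eqref{eq:pfail_upperbound_memory_maintext}. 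Finally, the conditions in Eq.~\eqref{eq:LDPC_condition} give $N\tf \leq e^{(1+1/c)D(N)}$ for some constant $c>0$, so choosing $\widetilde{q}_{\rm th}$ to satisfy $(w+1)\widetilde{q}_{\rm th} \cdot e^{1+1/c} = 1$ makes the right-hand side exponentially small in $D(N)$, proving Eq.~\eqref{eq:epsilon_N_to_zero}.

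The main obstacle will be step three: carefully formalizing the chain-complex connectivity so that the branching factor really is $(w+1)$ (not some larger function of $w$ and the inter-layer coupling). One must justify that a spanning-tree/Peierls argument gives the stated enumeration without double counting, and separately that we are counting cycles modulo stabilizers correctly so that only those of weight $\geq D(N)$ contribute, rather than all closed defects in the bulk. This combinatorial step is where the specific structure of LDPC CSS codes (bounded qubit and check degrees, trivial temporal boundary conditions from Eq.~\eqref{eq:SPAM_boundary_condition}) enters essentially, and I expect most of the detailed work to live in Appendix~\ref{sec:pfail_upperbound_MW}.
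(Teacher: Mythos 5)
Your proposal follows a genuinely different path from the paper's, and the difference matters. The paper does \emph{not} bound $\decoder^{\rm ML}$ directly. It first establishes the bound
$\epsilon^{\rm MW}(N) = O\bigl((N\cdot\tf)\,[(w+1)\widetilde{q}]^{D(N)}\bigr)$
for a \emph{minimal-weight} decoder $\decoder^{\rm MW}$ (defined by Eq.~\eqref{eq:def_MW_decoder}) via a Peierls-type argument in Appendix~\ref{sec:pfail_upperbound_MW}, and only then invokes the one-line optimality statement $\epsilon(N)\leq\epsilon^{\rm MW}(N)$ from Eq.~\eqref{eq:def_D_ML_all}. You try to skip the MW step and apply a Bhattacharyya/union bound to $\decoder^{\rm ML}$ directly.

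That shortcut is where your argument has a gap. The claim that ``$\decoder^{\rm ML}$ fails only if some non-trivial logical cycle $C$ has posterior probability at least equal to the true coset'' is not correct as stated: $\decoder^{\rm ML}$ compares \emph{summed coset posteriors} (partition functions over all stabilizer-equivalent error patterns), not individual cycle likelihoods. A wrong coset can dominate by aggregating many moderately-weighted representatives even when no single cycle $C$ does, so the reduction to a sum $\sum_C \widetilde{q}^{|C|}$ over cycles is not a valid union bound for $\decoder^{\rm ML}$. It \emph{is} valid for $\decoder^{\rm MW}$: there, a failure forces $\error'=\error+\widehat{\error}_{\rm MW}(\partial\error)$ to contain an irreducible closed defect $\cycle$ of weight $\geq D(N)$ (Lemma~\ref{lemma:D1}), and minimality of the MW output forces $|\error\cap\cycle|\geq|\cycle|/2$ (Lemma~\ref{lemma:D2}), which is exactly where the factor $\widetilde{q}^{|\cycle|}=[2\sqrt{q(1-q)}]^{|\cycle|}$ actually comes from --- it is a half-overlap-counting bound, not a Cauchy--Schwarz step. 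A second related slip: your sum ``$\sum_{C\text{ non-trivial}}$'' implicitly ranges over all (possibly disconnected) closed chains, but your enumeration in step three only controls \emph{connected} (irreducible) ones; the paper's Lemma~\ref{lemma:D2} is precisely what licenses restricting attention to a single irreducible component.

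Your third step, the cluster-enumeration bound $\#\{\cycle:|\cycle|=n\}\leq(N\cdot\tf)(w+1)^{n-1}$ via a depth-first search branching over the $\leq w+1$ unvisited bits of a violated check, matches the paper's Lemma~\ref{lemma:D3} essentially verbatim, and the geometric-series summation and the choice of $\widetilde{q}_{\rm th}$ from the condition $N\cdot\tf = O(e^{(1+1/c)D(N)})$ are also what the paper does. Note only that this counting uses the bounded \emph{check} degree $w+2$; bounded qubit degree is not needed for Lemma~\ref{lemma:D3}. The missing ingredient in your write-up is the MW detour and, within it, the half-overlap Lemma~\ref{lemma:D2}, which is the paper's replacement for your ``Bhattacharyya step.''
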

\begin{proof}
We prove in Appendix~\ref{sec:pfail_upperbound_MW} that for minimal weight decoders $\decoder^{\rm MW}$, $\epsilon^{\rm MW}(N) \equiv 1 - \mathbb{P}^{\rm all}_{\decoder^{\rm MW}}$
satisfy $\epsilon^{\rm MW}(N) = O\left( (N \cdot \tf)  \cdot [(w+1) \cdot \widetilde{q}]^{D(N)} \right)$.
Eq.~\eqref{eq:pfail_upperbound_memory_maintext} follows from that $\epsilon(N) \leq \epsilon^{\rm MW}(N)$, due to the optimality of $\decoder^{\rm ML}$ (see Eq.~\eqref{eq:def_D_ML_all}).
It is worth remarking that no $K(N)$-dependent entropic factor appears in Eq.~\eqref{eq:pfail_upperbound_memory_maintext}, as one might have naively expected from having to successfully decode every single logical simultaneously.
\end{proof}

\begin{lemma}
\label{lemma:1-2}
\begin{align}
    \forall J \subseteq [K], \quad
    [\left| \avg{L_J} \right|]_\error \geq 2 \mathbb{P}^{\rm all}_{\decoder^{\rm ML}} - 1.
\end{align}
\end{lemma}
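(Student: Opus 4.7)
The plan is to assemble the lemma from three ingredients already in place: the exact identity expressing $[|\langle L_J\rangle|]_s$ as a (shifted) success probability of the specialized ML decoder, the equivalence between $s$-disorder and $\error$-disorder averages, and the monotonicity of success probabilities among the various inference problems.

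First I would invoke Eq.~\eqref{eq:2Ps-1_abs_avg_LJ}, which (reading it as the definition of $\Delta\mathbb{P}^J_{\decoder^{\rm ML}_J}$) rewrites
\begin{equation*}
    [|\langle L_J\rangle|]_s \;=\; 2\,\mathbb{P}^J_{\decoder^{\rm ML}_J} - 1.
\end{equation*}
The right-hand side is exactly a shifted success probability of the decoder that is purpose-built to predict the single logical combination $L_J$. Next, I would apply Eq.~\eqref{eq:LJs=LJe}, which says that the disorder averages of $|\langle L_J\rangle|$ over the syndrome disorder $s$ and the uncorrelated error disorder $\error$ agree. This is the only place where the change-of-basis result from Sec.~\ref{sec:uncorrelated_disorder} (proved in Appendix~\ref{sec:nishimori}) is needed, and its role here is purely cosmetic.

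Finally, I would chain the inequalities in Eq.~\eqref{eq:P_succ_inequalities} from the proof of Proposition~\ref{lemma:IP1_IP2_equivalence}, namely $\mathbb{P}^{\rm all}_{\decoder^{\rm ML}} \leq \mathbb{P}^J_{\decoder^{\rm ML}} \leq \mathbb{P}^J_{\decoder^{\rm ML}_J}$. The first inequality uses that success on Inference Problem~\ref{IP1} logically implies success on Inference Problem~\ref{IP2} for any $J$ when the decoder is $\decoder^{\rm ML}$, and the second uses the optimality of $\decoder^{\rm ML}_J$ for Problem~\ref{IP2}. Putting these together yields
\begin{equation*}
    [|\langle L_J\rangle|]_\error \;=\; [|\langle L_J\rangle|]_s \;=\; 2\,\mathbb{P}^J_{\decoder^{\rm ML}_J} - 1 \;\geq\; 2\,\mathbb{P}^{\rm all}_{\decoder^{\rm ML}} - 1,
\end{equation*}
which is the claim, valid uniformly in $J \subseteq [K]$.

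Honestly, there is no real obstacle: the statement is an immediate corollary of Eqs.~\eqref{eq:2Ps-1_abs_avg_LJ}, \eqref{eq:LJs=LJe}, and the inequality chain already extracted in the proof of Proposition~\ref{lemma:IP1_IP2_equivalence}. The only ``content'' worth flagging explicitly is that the lower bound is \emph{uniform} in $J$ — the same single quantity $2\mathbb{P}^{\rm all}_{\decoder^{\rm ML}}-1$ controls all $2^{K(N)}$ boundary order parameters simultaneously, which is exactly the feature used in Sec.~\ref{sec:ordering_of_LGT} to promote a decoding threshold into ordering of the generalized $\mathbb{Z}_2$ lattice gauge theory. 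I would emphasize this in one sentence at the end of the proof so the reader appreciates the punchline rather than treating it as a tautological restatement.
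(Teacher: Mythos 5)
Your proposal is correct and matches the paper's proof essentially line for line: both assemble the bound from Eq.~\eqref{eq:2Ps-1_abs_avg_LJ}, the change-of-disorder identity Eq.~\eqref{eq:LJs=LJe}, and the inequality chain~\eqref{eq:P_succ_inequalities} established in Proposition~\ref{lemma:IP1_IP2_equivalence}. Your closing remark about the bound being uniform in $J$ is a helpful emphasis not spelled out in the paper's two-line proof, but it does not change the argument.
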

\begin{proof}
By arguments in Proposition~\ref{lemma:IP1_IP2_equivalence}, we have
\begin{align}
    \label{eq:P_succ_inequalities}
    \forall J \subseteq [K], \quad \mathbb{P}^{\rm all}_{\decoder^{\rm ML}} \leq \mathbb{P}^{J}_{\decoder^{\rm ML}} \leq \mathbb{P}^{J}_{\decoder^{\rm ML}_J}.
\end{align}
Therefore, by Eqs.~(\ref{eq:2Ps-1_abs_avg_LJ}, \ref{eq:LJs=LJe}), we have
\begin{align}
\label{eq:abs_avg_LJ_decoding_lower_bound}
    [\left| \avg{L_J} \right|]_\error = [\left| \avg{L_J} \right|]_s = 2 \mathbb{P}^{J}_{\decoder^{\rm ML}_J} - 1 \geq 2 \mathbb{P}^{\rm all}_{\decoder^{\rm ML}} - 1.
\end{align}
\end{proof}

Consider the following stat mech model \emph{without disorder}
\begin{align}
    \label{eq:Z_clean_spin_model}
    Z = \tr_\sigma \exp {
        \sum_{t = t_0+1}^{\tf} \left[ \km \sum_{\qs} \prod_{\bfr \in \qs} \sigma_{\bfr, t} + \kbf \sum_{\bfr} \sigma_{\bfr, t} \sigma_{\bfr, t-1} \right]
    }.
\end{align}
This is the clean limit of Eq.~\eqref{eq:Z[s]} and Eq.~\eqref{eq:Z_eta_maintext}, when the quenched random-sign disorder is removed.
The boundary conditions, assumed even when not stated, are 
\begin{equation}
\label{eq:cleanSPAM_boundary_condition}
\begin{split}
    &\forall \bfr,\quad \sigma_{\bfr, \ti} =\, +1, \\
    &\forall \qs, \quad \prod_{\bfr \in \qs} \sigma_{\bfr,\tf} =\, +1.
\end{split}
\end{equation}
We will refer to this model as the (clean) ``lattice gauge theory'' associated with the code.
We have
\begin{lemma}
\label{lemma:1-3}
\begin{align}
    \forall J \subseteq [K], \quad
    [\left| \avg{L_J} \right|]_\error \leq \avg{L_J},
\end{align}
where the RHS denotes the ensemble average of $L_J$ with respect to the clean stat mech model Eq.~\eqref{eq:Z_clean_spin_model}.
\end{lemma}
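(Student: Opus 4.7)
The plan is to use a Griffiths-Kelly-Sherman (GKS) style correlation inequality adapted to sign-disordered Ising gauge theories, as summarized by the corollary in Appendix~\ref{sec:GKS}. Since the clean stat mech model of Eq.~\eqref{eq:Z_clean_spin_model} is purely ferromagnetic ($\km, \kbf > 0$) with $+$-type temporal boundary conditions of Eq.~\eqref{eq:cleanSPAM_boundary_condition}, the standard first GKS inequality yields $\avg{L_J} \geq 0$ in the clean model, so no absolute value is needed on the RHS of the lemma.

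The core technical step would be to compare $\widehat{Z}[\error]\avg{L_J}_{\widehat{Z}[\error]}$ to its clean analog via the Mayer (high-temperature) expansion $e^{K_b \error_b M_b(\sigma)} = \cosh K_b \cdot (1 + \error_b \tanh K_b \cdot M_b(\sigma))$ for each coupling $b$, where $M_b(\sigma)$ denotes the monomial of spins entering that coupling (stabilizer product or Ising bond). After expanding the products and summing over unconstrained $\sigma$ subject to Eq.~\eqref{eq:cleanSPAM_boundary_condition}, both $\widehat{Z}[\error]$ and $\widehat{Z}[\error]\avg{L_J}$ become sums over subsets $S$ of couplings with manifestly nonnegative weights $\prod_{b \in S} \tanh K_b$, multiplied by sign factors $\prod_{b \in S} \error_b$; the surviving subsets are those on which $\prod_{b \in S} M_b$ (respectively $L_J \cdot \prod_{b \in S} M_b$) acts trivially on all free spins. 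A triangle inequality applied to the numerator then yields the pointwise bound $|\widehat{Z}[\error]\avg{L_J}_{\widehat{Z}[\error]}| \leq Z_{\rm clean}\avg{L_J}_{\rm clean}$ for every $\error$.

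The main obstacle is dividing by $\widehat{Z}[\error]$: the denominator itself depends on $\error$ and cannot be bounded pointwise in the useful direction (indeed a naive $\widehat{Z}[\error] \geq Z_{\rm clean}$ would go the wrong way). Here I would invoke the Appendix~\ref{sec:GKS} corollary, which exploits the specific Nishimori-line form $\mathbb{P}(\error) \propto \prod_b e^{K_b \error_b}$ of the disorder distribution together with the gauge symmetry of the joint $(\sigma, \error)$ measure to show that, after averaging $|\avg{L_J}|$ over disorder, the $\widehat{Z}[\error]$ factors are tamed and the quenched result remains bounded above by $\avg{L_J}_{\rm clean}$. This is the most delicate step, since the standard GKS framework applies to purely ferromagnetic partition functions whereas the quenched average involves the nonlinear reciprocal $1/\widehat{Z}[\error]$; as a cross-check I would also verify the identity via the two-replica substitution $\tau_\bfr = \sigma_\bfr \sigma'_\bfr$ (using $L_J(\sigma)L_J(\sigma') = L_J(\tau)$) together with the Nishimori identity $[\avg{L_J}]_\error = [\avg{L_J}^2]_\error$, which collapses the doubled disorder-averaged model onto an effective ferromagnetic system amenable to GKS-II monotonicity. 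With the corollary in hand the lemma follows, and combined with Lemma~\ref{lemma:1-2} it promotes the clean-model logical order parameter to the ``1-law'' $\avg{L_J}_{\rm clean} \to 1$ throughout the decodable phase.
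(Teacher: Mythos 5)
Your identification of the GKS corollary as the key tool is correct, and you are also right that the naive triangle-inequality bound on the numerator, $|\widehat{Z}[\error]\avg{L_J}_{\widehat{Z}[\error]}| \leq Z_{\rm clean}\avg{L_J}_{\rm clean}$, does not by itself give the lemma (to divide you would want $\widehat{Z}[\error] \geq Z_{\rm clean}$, and the true inequality runs the other way). Where the proposal goes wrong is in diagnosing what closes that gap. You guess that the resolution requires the Nishimori form of $\mathbb{P}(\error)$, the gauge structure of the joint measure, and the disorder average. None of that is used. Eq.~\eqref{eq:GKS_2} (Exercise~3.31 of \cite{friedli_velenik_2017}) is a \emph{deterministic, pointwise} inequality on the \emph{normalized} expectation value: whenever $K_C \geq |K'_C|$ for all $C$ it gives $\avg{\sigma_A}_{Z_K} \geq |\avg{\sigma_A}_{Z_{K'}}|$ for that single fixed set of couplings $K'$, with the denominator already taken care of. The paper's proof simply applies this once per disorder realization: for each $\error$, the couplings $\km\error_{\qs,t}$, $\kbf\error_{\bfr,t}$ of $\widehat{Z}[\error]$ satisfy $|K'_C| = K_C$ relative to the clean couplings of Eq.~\eqref{eq:Z_clean_spin_model}, and the Dobrushin boundary terms carry no disorder since $\error_{\qs,\ti} = \error_{\qs,\tf} = +1$. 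Hence $|\avg{L_J}_{\widehat{Z}[\error]}| \leq \avg{L_J}$ for every $\error$, and averaging this constant upper bound over $\error$ gives the lemma in one line.

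Your two-replica ``cross-check'' actually gestures at the right machinery, but conflates two distinct uses of duplication. Proving the pointwise corollary for a fixed $K'$ does go through a Ginibre-style duplicate-variable argument on $(\sigma,\sigma')$, but that argument has nothing to do with the disorder distribution. The Nishimori identity $[\avg{L_J}]_\error = [\avg{L_J}^2]_\error$ is a separate statement about the quenched average (a cousin of the $[\sgn\avg{L_J}]_\error = [|\avg{L_J}|]_\error$ identity in Appendix~\ref{sec:nishimori}); it could be used to give an alternative proof of the lemma, but it is strictly heavier machinery than the paper invokes and is not needed. Keeping these two duplications apart would have led you directly to the paper's short argument.
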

\begin{proof}
In fact, for every disorder realization $\error$, $|\avg{L_J}_{Z[\error]}| \leq \avg{L_J}$. See Eq.~\eqref{eq:GKS_2}, which is a corollary of the ``GKS'' correlation inequalities due to Griffiths~\cite{griffiths1967a_correlations,griffiths1967b_correlations, griffiths1967c_correlations}, Kelly and Sherman~\cite{kelly1968general}.
\end{proof}

\PRXQ{While the step of removing quenched random sign disorder here is formal, we note that physical meaning can be associated to it via ``post-selected error correction'', as recently shown by English, Williamson, and Bartlett~\cite{english2024thresholds}.}

\begin{figure}[b]
    \includegraphics[width=\linewidth]{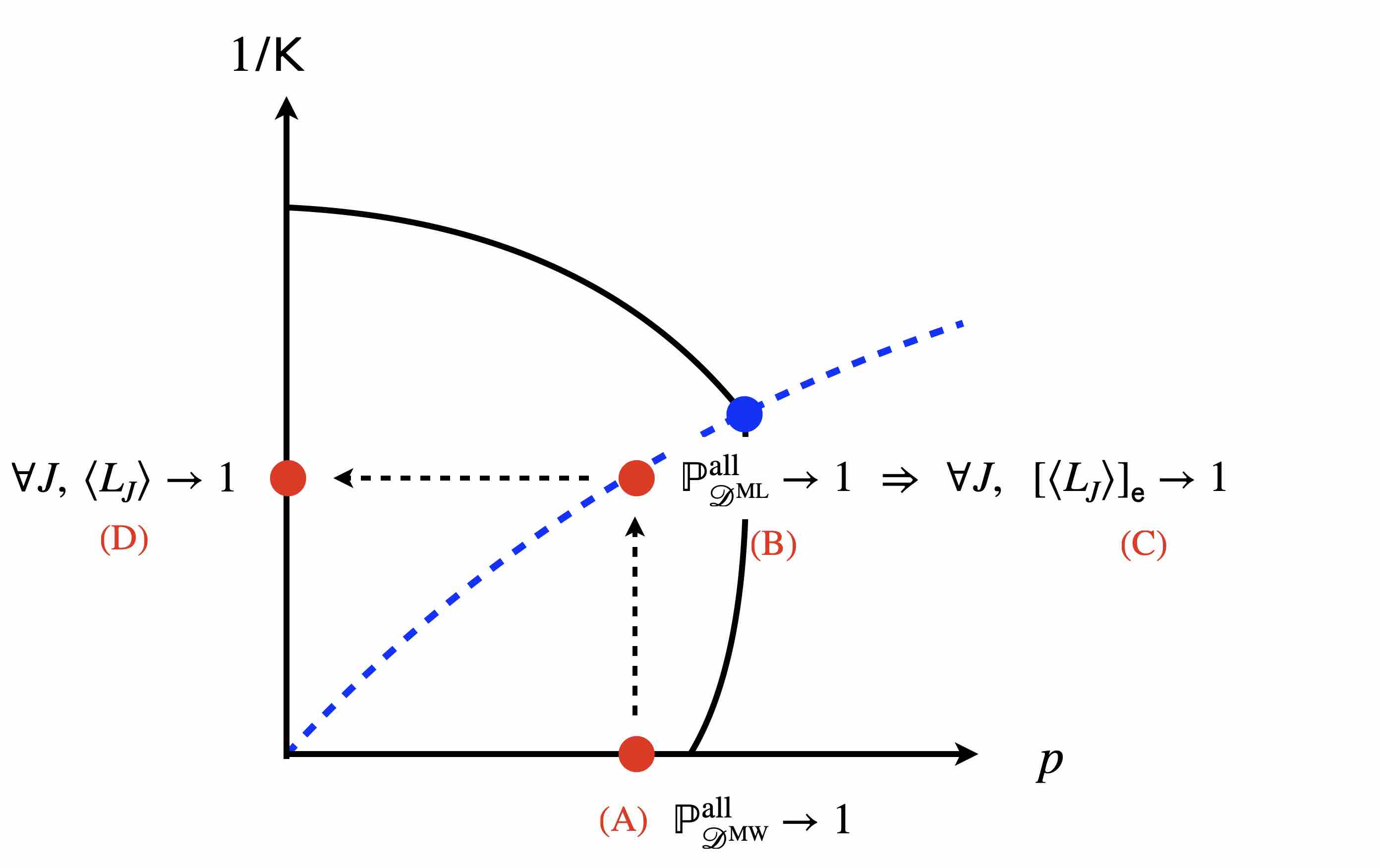}
    \caption{Proof of Theorem~\ref{thm:ordering_of_clean_stat_mech_model}.
    To establish $(\mathrm{D})$ the stability of the clean model at $(K, p=0)$, we turn to the corresponding point $(K, p)$ on the Nishimori line (blue dashed), which by GKS theorem (Lemma~\ref{lemma:1-3}) is always less ordered (therefore $(\mathrm{C}) \Rightarrow (\mathrm{D})$).
    $(\mathrm{C})$ refers to ordering of the disordered model at $(K, p)$ on the Nishimori line.
    The ordering can therefore be related to $(\mathrm{B})$ the success of decoding under the maximal likelihood decoder (which is optimal), where the Nishimori condition allows interpreting partition functions at this point as error probabilities, see Lemma~\ref{lemma:1-2}.
    $(\mathrm{B})$ follows from $(\mathrm{A})$ explicit upper bounds on the failure probabilities of the (suboptimal) minimal weight decoder, see Lemma~\ref{lemma:1-1} and Eq.~\eqref{eq:pfail_upperbound_memory_maintext}. The phase diagram is schematic and suppresses two dimensions; we consider both bitflip errors with rate $p_{\rm bf}$ and measurement errors with rate $p_{\rm m}$, corresponding to coupling strengths $\kbf$ and $\km$ via generalized Nishimori conditions, see Eqs.~(\ref{eq:tempphyserr},\ref{eq:measerr}).
    }
    \label{fig:proof_outline}
\end{figure}

\begin{theorem}
\label{thm:ordering_of_clean_stat_mech_model}
Let the stat mech model Eq.~\eqref{eq:Z_clean_spin_model} with boundary conditions Eq.~\eqref{eq:SPAM_boundary_condition} satisfy the conditions in Eq.~\eqref{eq:LDPC_condition}.
There exists a finite $\mathsf{K}_{\rm th}$ such that for $\kbf, \km > \mathsf{K}_{\rm th}$ we have
\begin{align}
\label{eq:ordering_of_clean_stat_mech_model}
    \forall J \subseteq [K(N)], \quad \lim_{N \to \infty} \langle L_J(\sigma_\tf) \rangle \to 1.
\end{align}
\end{theorem}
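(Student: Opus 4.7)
The theorem follows by chaining Lemmas~\ref{lemma:1-1}, \ref{lemma:1-2}, and \ref{lemma:1-3} along the route $(\mathrm{A})\!\to\!(\mathrm{B})\!\to\!(\mathrm{C})\!\to\!(\mathrm{D})$ sketched in Fig.~\ref{fig:proof_outline}. First I would translate the clean-model hypothesis $\kbf,\km > \mathsf{K}_{\rm th}$ into a point on the Nishimori line via Eqs.~\eqref{eq:tempphyserr} and~\eqref{eq:measerr}: pick $p_{\rm bf}, p_{\rm m}$ small enough that $q = \max\{p_{\rm bf}, p_{\rm m}\}$ yields $\widetilde{q}$ below $\widetilde{q}_{\rm th}$ of Lemma~\ref{lemma:1-1}, with $(w+1)\widetilde{q}$ chosen sufficiently below $1$ that, under the scalings~\eqref{eq:LDPC_condition}, the bound~\eqref{eq:pfail_upperbound_memory_maintext} vanishes as $N\to\infty$. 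I would then \emph{define} $\mathsf{K}_{\rm th}$ to be the larger of the two couplings obtained from this $p$ through Nishimori; this is finite because $\tfrac{1}{2}\log\tfrac{1-p}{p}$ is finite for any $p\in(0,1/2)$.

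Second, at this Nishimori point, Lemma~\ref{lemma:1-1} gives $\mathbb{P}^{\rm all}_{\decoder^{\rm ML}} \to 1$: the polynomial $N$ factor and the $\tf = O(e^{D(N)})$ factor in~\eqref{eq:pfail_upperbound_memory_maintext} are both beaten by $[(w+1)\widetilde{q}]^{D(N)}$ once $D(N) = \Omega(\log N)$ and $(w+1)\widetilde{q}$ is taken sufficiently small. Applying Lemma~\ref{lemma:1-2} then yields $[|\avg{L_J}|]_\error \ge 2\mathbb{P}^{\rm all}_{\decoder^{\rm ML}} - 1 \to 1$ uniformly in $J \subseteq [K(N)]$. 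Finally, Lemma~\ref{lemma:1-3} gives the GKS inequality $\avg{L_J} \ge [|\avg{L_J}|]_\error$ for the clean model~\eqref{eq:Z_clean_spin_model}, while trivially $\avg{L_J} \le 1$, so $\avg{L_J} \to 1$ as claimed.

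\emph{Main obstacle.} The arithmetic is routine; the genuine content sits in two places. First, the GKS step (Lemma~\ref{lemma:1-3}) pulls back the disordered ordering on the Nishimori line to the clean point at $p=0$ and \emph{requires the disordered and clean models to share the same Dobrushin-like temporal boundary conditions}~\eqref{eq:SPAM_boundary_condition}/\eqref{eq:cleanSPAM_boundary_condition}; these are what promote the generalized Wilson loops $L_J$ from perimeter-law decay to the ``1-law'' $\avg{L_J}\to 1$, so any mismatch in boundary conditions would kill the argument. Second, the uniformity in $J$ is nontrivial because $K(N)$ can scale linearly in $N$, yet no entropic $K$-dependent factor appears in the lower bound $2\mathbb{P}^{\rm all}_{\decoder^{\rm ML}} - 1$; this hinges on Lemma~\ref{lemma:1-1} being phrased in terms of $\mathbb{P}^{\rm all}$ (simultaneous success for all logicals) rather than any particular $\mathbb{P}^J$. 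Apart from these bookkeeping issues, the theorem is a direct concatenation of the three preceding lemmas.
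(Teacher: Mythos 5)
Your proof is correct and follows exactly the paper's route: pick a Nishimori point $(p_{\rm bf},p_{\rm m})$ with $\widetilde q$ below the threshold of Lemma~\ref{lemma:1-1}, chain Lemmas~\ref{lemma:1-1}$\to$\ref{lemma:1-2}$\to$\ref{lemma:1-3}, and read off $\mathsf{K}_{\rm th}$ from the Nishimori relations. The paper's own proof is the same three-lemma concatenation $(\mathrm A)\Rightarrow(\mathrm B)\Rightarrow(\mathrm C)\Rightarrow(\mathrm D)$ summarized in Fig.~\ref{fig:proof_outline}; the bookkeeping points you flag (matching boundary conditions for the GKS step, and the $J$-uniformity coming from Lemma~\ref{lemma:1-1} being phrased via $\mathbb{P}^{\rm all}$ without a $K(N)$-dependent entropic factor) are both correct and already implicit in the paper.
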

\begin{proof}
The conclusion follows from the previous Lemmas, which we outline in Fig.~\ref{fig:proof_outline} with the help of the phase diagram of the disordered stat mech model.
In particular, $(\mathrm{A})$ and $(\mathrm{A}) \Rightarrow (\mathrm{B})$ follow from Lemma~\ref{lemma:1-1}, $(\mathrm{B}) \Rightarrow (\mathrm{C})$ follows from Lemma~\ref{lemma:1-2}, and $(\mathrm{C}) \Rightarrow (\mathrm{D})$ follows from Lemma~\ref{lemma:1-3}.
\end{proof}

We have thus shown the existence of a low-temperature ordered phase of the lattice gauge theory Eq.~\eqref{eq:Z_clean_spin_model}, as characterized by the gauge-invariant boundary order parameters $L_J(\sigma_\tf)$ for all $J \subseteq [K]$.
While natural to error correction, they appear to be somewhat unconventional.
As we argue in the next section, the boundary order parameters also have a clear physical meaning when the partition function is interpreted as a path integral.

We note that Theorem~\ref{thm:ordering_of_clean_stat_mech_model} can also be proven by generalizing Peierls arguments for the Ising model~\cite{peierls1936ising, griffiths1964peierls}.
Such a proof will follow closely that of Lemma~\ref{lemma:1-1} (see Appendix~\ref{sec:pfail_upperbound_MW}), only reinterpretation of terms in the series expansion is needed.

\section{Ground state energy splitting and excitation gap from memory and stability experiments\label{sec:decoding_success_physical_consequences}}

We now explore physical consequences of the ordering of the lattice gauge theories 
on the perturbative stability of the code Hamiltonian.
The two can be related by discretizing the imaginary time path integral of the quantum Hamiltonian, as we detail in Sec.~\ref{sec:psucc_to_observable}.
We further discuss the ``memory experiment'' in Sec.~\ref{sec:memory_expr} and relate it to the energy splitting between ground states;
and the ``stability experiment'' in Sec.~\ref{sec:stability_hard_wall}, \ref{sec:stability_hard_wall_two_punctures}, and relate it to the gap to excited states.
The latter experiment involves a slight modification of the Hamiltonian (see Eq.~\eqref{eq:H_excluded_W}), which can be avoided for 2D toric codes in Euclidean or hyperbolic spaces, as we detail in Appendix~\ref{sec:stability_experiment_special_to_surface_codes}.

\subsection{Transfer matrix representation and imaginary time evolution under a quantum Hamiltonian \label{sec:psucc_to_observable}}

To connect classical stat mech and the quantum code Hamiltonian,
it is suggestive to write Eq.~\eqref{eq:Z_clean_spin_model} in terms of a transfer matrix acting on an auxiliary Hilbert space of $\sigma$ spins,
\begin{align}
    Z(\tf) \propto&\, \sum_{\sigma_{\tf}}  \bra{\sigma_\tf} 
    \Pi_\pauliz \cdot T(\tf) \ket{\mathbf{0}}.
\end{align}
Here, the boundary conditions come from Eq.~\eqref{eq:cleanSPAM_boundary_condition}.
At $\tf$, the condition $\km \to \infty$ is imposed through the projector $\Pi_\pauliz$, where we define
\begin{align}
    \label{eq:projector_Z}
    \Pi_\pauliz =&\ \prod_{\qs} \frac{1+\pauliz_\qs}{2}, \\
    \label{eq:projector_X}
    \Pi_\paulix =&\ \prod_{\widetilde{\qs}} \frac{1+\paulix_{\widetilde{\qs}}}{2}.
\end{align}
Also observe that 
\begin{align}
    \sum_{\sigma_{\tf}}  \ket{\sigma_\tf} = \Pi_\paulix \sum_{\sigma_{\tf}}  \ket{\sigma_\tf} = \Pi_\paulix^2 \sum_{\sigma_{\tf}}  \ket{\sigma_\tf}.
\end{align}
With these, we have the following rewriting
\begin{equation}
\begin{split}
\label{eq:discrete_time_path_integral}
    Z(\tf) \propto&\, \sum_{\sigma_{\tf}}  \bra{\sigma_\tf} 
    \Pi_\pauliz
    \Pi_\paulix
     \cdot T(\tf) 
    \cdot
    \Pi_\paulix
    \ket{\mathbf{0}} \\
    T(\tf) =&\, \prod_{t=\ti}^{\tf-1} e^{\km \sum_\qs \prod_{\bfr \in \qs} \pauliz_{\bfr, t} } \cdot e^{\overline{\kbf} \sum_\bfr \paulix_{\bfr, t} } \\
    \equiv&\ (T)^{\tf},
\end{split}
\end{equation}
where $\tanh \overline{\kbf} = e^{-2\kbf}$.
Here, we used that $[T(\tf), \Pi_\paulix] = 0$.
Written this way, the transfer matrix takes the same form of the discrete imaginary-time path integral of the following quantum Hamiltonian,
\begin{align}
    \label{eq:H_quantum}
    H_{\rm q} \equiv 
    - \sum_\qs \prod_{\bfr \in \qs} \pauliz_{\bfr} 
    -\sum_{\widetilde{\qs}} \prod_{\bfr \in \widetilde{\qs}} \paulix_{\bfr} 
    -
    h  \sum_\bfr \paulix_{\bfr}.
\end{align}
When $h=0$, we recover the unperturbed code Hamiltonian Eq.~\eqref{eq:stabilizer_code_hamiltonian}; and we are interested the stability of this Hamiltonian for a small but finite $h$.
Its path integral takes the form
\begin{align}
    \label{eq:continuous_time_path_integral}
    e^{-\beta H_{\rm q}} =&\ 
    \Pi_\paulix
    \left[
    \lim_{\delta \tau \to 0}\prod_{t = 0}^{\beta/\delta \tau} 
    e^{\delta \tau \sum_\qs \prod_{\bfr \in \qs} \pauliz_{\bfr, t} } \cdot e^{h \delta \tau \sum_\bfr \paulix_{\bfr, t} } \right]
    \Pi_\paulix,
\end{align}
Here, as the $X$ stabilizers $\widetilde{\qs}$ commute with all other terms, we treat them as constraints (realized by projectors $\Pi_\paulix$).

Comparing Eqs.~(\ref{eq:discrete_time_path_integral}, \ref{eq:continuous_time_path_integral}), we see that the continuous time path integral is 
\YL{precisely}
a limit of $Z(\tf)$, where $\km \to 0, \kbf \to \infty$, while $h^{-1} = e^{2\kbf} \km$ and $\beta \equiv \km \tf$ are kept fixed.
\YL{Therefore, our results in the previous section on the thermal stability of $Z(\tf)$ can be borrowed to argue for the perturbative stability of $H_{\rm q}$.}

In the remainder of this section, we translate stability results for $Z(\tf)$ to statements about the spectrum of 
\YL{the transfer matrix (or more precisely, of the quantum Hamiltonina $\widetilde{H}_{\rm q} \equiv - \frac{1}{\tf} \ln T(\tf)$)}.
With the assumption that these results persist when the continuous time limit is taken,  stability of ordering in the classical stat mech model \YL{at nonzero temperature} provides evidence to the stability of the $H_{\rm q}$ under the perturbation of a transverse field.

The correspondence between the imaginary-time path integral and the stat mech model in one higher dimension is known as ``quantum-to-classical mapping''~\cite{fradkinsusskind1978, kogut1979rmp, kogut1983rmp, polyakov1987gauge, Sondhi1997}.
In many examples, including repetition codes and 2D toric codes (see Sec.~\ref{sec:examples_and_gauge_invariance}), the phase diagrams of $Z(\tf)$ and of $H_{\rm q}$ are both known~\cite{fradkinsusskind1978, trebst2007, tupitsyn2010}, and our assumption about the continuous time limit is justified.
In general, however, rigorous justifications will have to be made separately: \YL{in the extreme anisotropic limit corresponding to the continuous time limit where $\km \to 0, \kbf \to \infty$, conditions in Theorem~\ref{thm:ordering_of_clean_stat_mech_model} (requiring both $\km,  \kbf \geq \mathsf{K}_{\rm th}$)  are no longer satisfied.}
\YL{We discuss these difficulties in detail in Appendix~\ref{sec:continuous_time_limit}.}

The connection we make here is therefore heuristic but not rigorous.
Nevertheless, it is still possible that perturbative stability of the quantum Hamiltonian can follow from the existence of a low temperature phase of the stat mech model (Theorem~\ref{thm:ordering_of_clean_stat_mech_model}), without going through the random-signed models on the Nishimori line.
We leave this possibility for future work.

We will put aside the issue of the continuous time limit, and treat $\widetilde{H}_{\rm q} = H_{\rm q}$ for the rest of this section.

\subsection{Memory experiment and ground state splitting \label{sec:memory_expr}}

We consider physical consequences of the success of memory experiments.
\YL{Recall that the memory experiment comprises precisely of the error model and the inference problems~\ref{IP1},\ref{IP2} we discussed in Sec.~\ref{sec:stat_mech_model}, see also footnote~\ref{fn:intro_memory_stability_expr}.}
We first note that
\begin{align}
\begin{split}\label{eq:boundary_state_projection}
    \Pi_\paulix
    \ket{\mathbf{0}} \propto&\ \ket{\mathbf{0}^\pauliz_L}, \\
    \Pi_\paulix
    \Pi_\pauliz
    \sum_{\sigma_{\tf}}  \ket{\sigma_\tf} \propto&\ \sum_{\bm{\ell}\in \{0,1\}^{[K]}} \ket{\bm{\ell}^\pauliz_L}.
\end{split}
\end{align}
Here, $\ket{\bm{\ell}^\pauliz_L}$ is a ground state of the unperturbed Hamiltonian, which is simultaneously an eigenstate of all logical $\pauliz$ operators.
It satisfies
\begin{align}
    L_J \ket{\bm{\ell}^\pauliz_L} = 
    (-1)^{\sum_{j \in J} \bm{\ell}_j}
    \ket{\bm{\ell}^\pauliz_L}.
\end{align}
Using this representation, $Z(\tf)$ from Eq.~\eqref{eq:discrete_time_path_integral} can be rewritten as
\begin{align}
    Z(\tf) \propto 
    \sum_{\bm{\ell}\in \{0,1\}^{[K]}}  
    \bra{\bm{\ell}^\pauliz_L}
    T(\tf)
    \ket{\mathbf{0}^\pauliz_L}.
\end{align}
It follows from
Appendix~\ref{sec:zbasis} that
\begin{align}
    \frac{
    \bra{\bm{\ell}^\pauliz_L}
    T(\tf)
    \ket{\mathbf{0}^\pauliz_L}}
    {
    \sum_{\bm{\ell'} \subseteq [K]}
    \bra{\bm{\ell'}^\pauliz_L}
    T(\tf)
    \ket{\mathbf{0}^\pauliz_L}}
    \to
    \begin{cases}
        1, \quad&\bm{\ell} = \mathbf{0}\\
        0, \quad&\bm{\ell} \neq \mathbf{0}
    \end{cases}.
\end{align}
The tunneling amplitude between different unperturbed logical $\pauliz$ states is therefore suppressed.
This can be contrasted with the soluble limit $h \to \infty$, where this ratio is independent of $\bm{\ell}$.

A more explicit dependence on the spectrum of the perturbed Hamiltonian can be drawn by going to the logical $\paulix$ basis.
This is because the perturbations are in the $\paulix$ basis, and the logical $\paulix$ operators still commute with the perturbed Hamiltonian.
Starting from Eq.~\eqref{eq:discrete_time_path_integral}, we represent $\avg{L_J}$ with transfer matrices as follows,
\begin{align}
    \avg{L_J} = \frac{
        \sum_{\sigma_{\tf}}  \bra{\sigma_\tf} 
        \Pi_\pauliz
        \Pi_\paulix
        L_J
         \cdot T(\tf) 
        \cdot
        \Pi_\paulix
        \ket{\mathbf{0}} 
    }
    {
        \sum_{\sigma_{\tf}}  \bra{\sigma_\tf} 
        \Pi_\pauliz
        \Pi_\paulix
         \cdot T(\tf) 
        \cdot
        \Pi_\paulix
        \ket{\mathbf{0}} 
    }.
\end{align}
Here, we noted that $L_J$ commutes with $\Pi_{\pauliz, \paulix}$.
With
\begin{align}
\begin{split}\label{eq:boundary_state_projection}
    \ket{\mathbf{0}^\pauliz_L} \propto&\ 
    \sum_{\bm{\ell}\in \{0,1\}^{[K]}} \ket{\bm{\ell}^\paulix_L}, \\
    \ket{\mathbf{0}^\paulix_L}
    \propto&\ 
    \sum_{\bm{\ell}\in \{0,1\}^{[K]}} \ket{\bm{\ell}^\pauliz_L},
\end{split}
\end{align}
we arrive at
\begin{align}
\label{eq:LJ_X_logical_basis}
    \avg{L_J} =&\ 
    \frac{
    \sum_{\bm{\ell}\in \{0,1\}^{[K]}}
    \bra{\mathbf{0}^\paulix_L} 
    L_J
    T(\tf)
    \ket{\bm{\ell}^\paulix_L}
    }
    {
    \sum_{\bm{\ell}\in \{0,1\}^{[K]}}
    \bra{\mathbf{0}^\paulix_L} 
    T(\tf)
    \ket{\bm{\ell}^\paulix_L}
    } \nn
    =&\ 
    \frac{
    \sum_{\bm{\ell}\in \{0,1\}^{[K]}}
    \bra{\bm{J}^\paulix_L}
    T(\tf)
    \ket{\bm{\ell}^\paulix_L}
    }
    {
    \sum_{\bm{\ell}\in \{0,1\}^{[K]}}
    \bra{\mathbf{0}^\paulix_L} 
    T(\tf)
    \ket{\bm{\ell}^\paulix_L}
    } \nn
    =&\ 
    \frac{
    \bra{\bm{J}^\paulix_L}
    T(\tf)
    \ket{\bm{J}^\paulix_L}
    }
    {
    \bra{\mathbf{0}^\paulix_L} 
    T(\tf)
    \ket{\mathbf{0}^\paulix_L}
    }.
\end{align}
Here
\begin{align}
    \ket{\bm{J}^\paulix_L} \equiv L_J \ket{\mathbf{0}^\paulix_L} = \prod_{j \in J} L_j \ket{\mathbf{0}^\paulix_L}
\end{align}
is the logical state obtained from $\ket{\mathbf{0}^\paulix_L}$ by the $\pauliz$ logical operator $L_J$, and in the last step, we used that the $\paulix$ logical operators commute with $T(\tf)$, so only terms diagonal in the logical space are nonzero.
Recall that Theorem~\ref{thm:ordering_of_clean_stat_mech_model} states that $\avg{L_J} \to 1$ for all $J \subseteq [K]$.

We now expand Eq.~\eqref{eq:LJ_X_logical_basis} in an eigenbasis of $T(\tf)$ (or equivalently, that of $H_{\rm q}$).
Since $H_{\rm q}$ commutes with all $\paulix$ logical operators, its eigenstates are simultaneous eigenstates of $\paulix$ logicals.\footnote{With zero perturbation, these ground states are precisely $\ket{\bm{J}_L^\paulix}$, separated by a gap of $O(1)$ to the first excited states in the same sector.}
We denote by $\Pi^J_\Omega$ the projector onto the perturbed ground state of $H_{\rm q}$ labeled by $J$, and obtain
\begin{align}
    \label{eq:LJ_contributions_ground_excited_states}
    \avg{L_J}
    =&\ 
    \frac{
    \bra{\bm{J}^\paulix_L}
    e^{-\beta E_0^J} \Pi^J_\Omega
    + (1-\Pi^J_\Omega) e^{-\beta H_q} (1-\Pi^J_\Omega)
    \ket{\bm{J}^\paulix_L}
    }
    {
    \bra{\bm{0}^\paulix_L}
    e^{-\beta E_0^{0}} \Pi^0_\Omega
    + (1-\Pi^0_\Omega) e^{-\beta H_q} (1-\Pi^0_\Omega)
    \ket{\bm{0}^\paulix_L}
    }.
\end{align}
Here $E_0^{0,J}$ denote the ground state energy of the perturbed ground states in sectors $0$ and $J$, respectively.
We now make the important but \textit{nonrigorous} \textit{approximation} that contributions to the matrix elements from excited states can be neglected at large $\beta$, particularly for $\beta = \km \tf = O(e^{D(N)})$.
Although we do not have a rigorous justification for this,
we note that the approximation is often valid when there is a constant gap $\Delta$ to all excitations as $N\to \infty$, as their contributions will be suppressed by factors of $e^{-\beta \Delta}$.
In the next subsection, we discuss how evidence for a gap can be obtained from decoding experiments.

As we established earlier 
for codes that satisfy LDPC conditions and with a sufficiently large code distance (see Eq.~\eqref{eq:LDPC_condition}), we have for sufficiently large $\km, \kbf$
\begin{align}
    \lim_{N\to\infty} \avg{L_J} \to 1, \quad \text{for } \beta \propto \tf = O(e^{D(N)}).
\end{align}
With the above assumptions, for $\beta \propto \tf = O(e^{D(N)})$ we have
\begin{align}
    \avg{L_J}
    \approx 
    e^{-\beta \delta E} 
    \frac{
        \bra{\bm{J}^\paulix_L}
        \Pi^J_\Omega
        \ket{\bm{J}^\paulix_L}
    }
    {
        \bra{\bm{0}^\paulix_L}
        \Pi^0_\Omega
        \ket{\bm{0}^\paulix_L}
    }
    \to 1,
\end{align}
\YL{where $\delta E \equiv E_0^J - E_0^0$.} 
We have the same limiting behavior upon replacing $\beta \to \beta/2$.
Taking the ratio between the two, we obtain
\begin{align}
    \label{eq:GS_splitting_upperbound}
    e^{-\beta \delta E}  \to 1 \quad \Rightarrow \quad |\delta E| = O(\beta^{-1}) = O(e^{-D(N)}) 
\end{align}
Therefore, success of the memory experiment corresponds to a small splitting between ground states.
The memory time $\tf \propto \beta = O(e^{D(N)})$ of the quantum code has the intuitive physical meaning of inverse the splitting.
Eq.~\eqref{eq:GS_splitting_upperbound} agrees what one would naively expect from perturbation theory.

It is useful to appreciate this scaling in the above examples of the repetition code and the toric code.
The RBIM and RPGM become the usual 2D Ising model and 3D $\mathbb{Z}_2$ gauge theory once we remove the random sign disorder.
In both cases, the free energy of the disordering defect (domain wall in the Ising model, flux loop in the gauge theory, see Fig.~\ref{fig:Ising},\ref{fig:toric-code-flux-loop}) has a nonzero line tension, and therefore an extensive free energy cost.
\YL{Formally, we have $e^{-\delta F} \equiv Z_{J=-1} / Z_{J=+1}$, and}
\begin{align}
    \avg{L_J} = \frac{1-e^{-\delta F}}{1+e^{-\delta F}} \to 1, \ \text{ when } \delta F = O(D(N)).
\end{align}
For these codes, $D(N)$ is proportional to the linear dimension of the system.
However, as $\tf$ increases, the defect gains a translational entropy in the temporal direction \YL{(which has extent $\beta$)}
\begin{align}
    \delta F \to \delta F - \ln \beta.
\end{align}
The free energy eventually vanishes when $\beta \propto \tf$ exceeds $O(e^{D(N)})$, whence the defects proliferate, and the quantum memory is lost.
Indeed, in the $\tf \to \infty$ limit the stat mech model becomes effectively 1-dimensional, and no long-range correlation can be established between its two temporal boundaries at finite temperature.

\subsection{``Hard wall'' stability experiment: example with the 2D toric code \label{sec:stability_hard_wall}}

We have seen that the memory experiment probes the proliferation of space-like defects that form logical errors.
Here we introduce a closely related decoding experiment as a spacetime dual to the memory experiment, which probes the proliferation of time-like defects.
This is in the same spirit to the so-called ``stability experiment'' proposed by Gidney~\cite{Gidney_2022}, see also~\cite{kesselring2022anyon, campbell2024stability}.
This experiment can be readily carried out numerically whenever decoding algorithms are available for the code.

We first introduce a ``hard wall'' version of the stability experiment within the familiar example of the 3D $\mathbb{Z}_2$ gauge theory from the 2D toric code to illustrate the ideas before generalizing to other cases.
Let $W$ be a loop of qubits in the code patch, and define the following Pauli string operator (abusing notation)
\begin{align}
    \PRXQ{L_W} = \prod_{\bfr \in W} \pauliz_\bfr.
\end{align}
Notice that we choose $\PRXQ{L_W}$ so that it is within the stabilizer group.
Consider our error model in Sec.~\ref{sec:stat_mech_model}, which we execute for a long time $(-\infty, 0)$.
At time $t=0$, we turn off the bitflip errors on every qubit in $W$, but everywhere else the bitflip and measurement error rates are not changed.
Then we run the modified dynamics for $t \in [0, \tf]$.\footnote{Here we take $\tf$ to be a free parameter that can be varied. In particular, we do not require it to scale as $e^{D(N)}$.}
Afterwards, we restore bitflip errors on $W$, and run the original dynamics indefinitely, for $(\tf, \infty)$.
The task is to predict the value of $W$ at any point $t \in [0, \tf]$ given the measurement results for all times and prior knowledge of the error rates. 
Note that its value is conserved during $t \in [0, \tf]$, as errors are turned off on $W$.
As $\PRXQ{L_W}$ lies within the stabilizer group, it is in principle possible to recover its value in the limit of perfect measurements, by multiplying stabilizers that are enclosed by the loop.

Our discussions in Sec.~\ref{sec:psucc_to_observable} were developed for $L_J$, but they immediately carry over to $W$.
Using these correspondences, the success probability of predicting $W$ lower bounds $\langle \PRXQ{L_W} \rangle$ in the lattice gauge theory.
Turning off bitflip errors on $W$ amounts to setting the corresponding $\kbf$ to infinity in the lattice gauge theory, which forces $\sigma_{\bfr, t} = \sigma_{\bfr, t+1}$ for all $t \in [0, \tf]$, $\bfr \in W$.
This is in close parallel to the memory experiment, where $\km$ couplings at $t=\tf$ are sent to infinity (see Eq.~\eqref{eq:SPAM_boundary_condition}), \YL{with the effect of promoting a ``perimeter law'' of the Wilson loop to a ``1-law''.}
\YL{Similarly, the problem of predicting $\PRXQ{L_W}$ can also be thought of as evaluating an extension of the Wilson loop.
In the absence of the hard wall (as realized by $\tf \to 0$), $\langle \PRXQ{L_W} \rangle$ becomes the usual Wilson loop operator, and exhibits a perimeter law scaling, namely $\langle \PRXQ{L_W} \rangle \propto e^{-|W|}$.
Extending the Wilson loop to a ``hard wall'' of height $\tf$ suppresses the flux loops and promotes the perimeter law to a ``1-law'', allowing mapping to error correction success probabilities.
}

\begin{figure}
    \centering
    \includegraphics[width=1.0\linewidth]{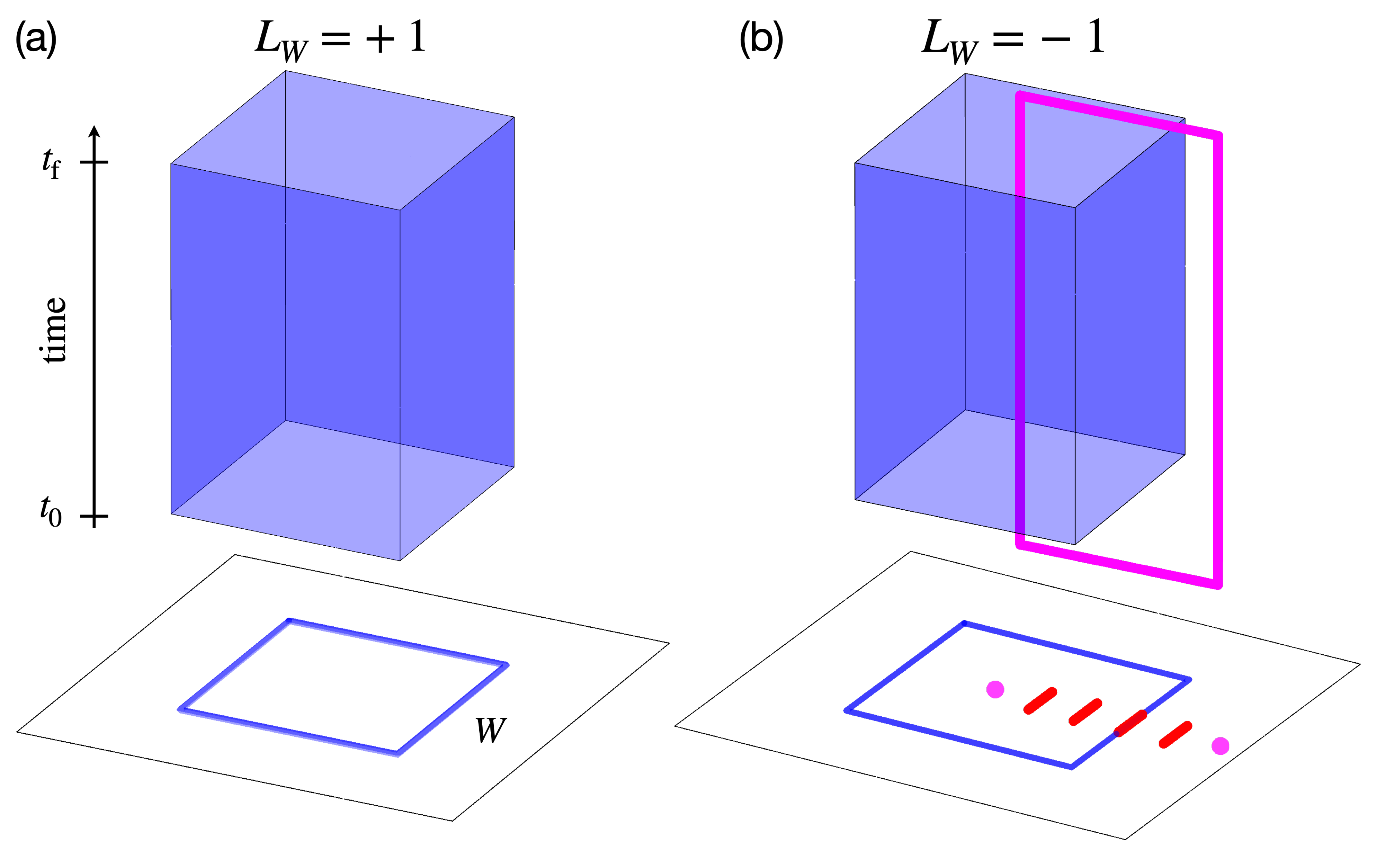}
    \caption{``Hard wall'' stability experiment for the 2D toric code.
    We focus on a large but finite loop $W$, and assume the code patch itself is infinite.
    Bitflip errors on $W$ are turned off for $t \in [\ti, \tf]$, and the decoder is tasked to predict its value, which is conserved during this time.
    The success probability translates into $\avg{\PRXQ{L_W}}$ in the $\mathbb{Z}_2$ lattice gauge theory in 3D, where the couplings strengths $\kbf$ are sent to $+\infty$ on $\bfr \in W$ for $t \in [\ti, \tf]$, resulting in a ``hard wall'', which we represent with a blue surface.
    (a) shows a representative configuration where $\PRXQ{L_W} = +1$, and (b) shows one where $\PRXQ{L_W} = -1$, with a single flux loops wrapping around the hard wall of length at least $O(\tf)$, as the hardwall repulses flux loops. 
    The configuration in (b) can be understood as a pair of anyons straddling $W$ at $t < \ti$, and remain present for $t \in [\ti, \tf]$; they are forbidden by the hard wall to annihilate.
    In both figures we omit drawing contractible flux loops, which are generically present.
    }
    \label{fig:stability-experiment}
\end{figure}

As we illustrate in Fig.~\ref{fig:stability-experiment},
$\avg{\PRXQ{L_W}}$ receives contributions from two classes of configurations, namely those with an even or odd number of flux loops threading through the cylindrical ``hard wall'' of infinite couplings.
The infinite $\kbf$ coupling on the wall prevents the flux loops to pass through it, and they have to go around the wall.
We can again write
\begin{align}
    \avg{\PRXQ{L_W}} = \frac{1 - e^{-\delta F}}{1 + e^{-\delta F}},
\end{align}
with $\delta F$ the free energy difference between the two classes.
In the deconfined phase, flux lines again have a finite line tension, and we have $\delta F \propto \tf$.
The line tension are conventionally identified with the gap to anyons in the associated quantum Hamiltonian $\widetilde{H}_{\rm q}$. \YL{(We make this point more explicit in Sec.~\ref{sec:stability_hard_wall_two_punctures}.)}
The line tension (or gap) vanishes at a finite temperature critical point, where flux loops proliferate and $\langle \PRXQ{L_W} \rangle \to 0$, signifying the failure of decoding $\PRXQ{L_W}$.
Comparing Fig.~\ref{fig:toric-code-flux-loop} and Fig.~\ref{fig:stability-experiment}, we see that the space-like and time-like defects are not so different for the 2D toric code: due to the isotropy of 3D model, the latter is simply a contractible version of the former, and which stretches in time.
Thus, one expects that the memory and stability experiments fail at the same phase transition.

In general, \YL{the stat mech model is no longer isotropic (cf. Fig.~\ref{fig:summary}), and the two transitions need not coincide.}
In particular, they can be separated by introducing columnar or planar disorder, such as in the McCoy-Wu model~\cite{McCoy-Wu, dsfisher1992, dsfisher1995}, where a Griffiths phase is present.
Successes of the memory and stability experiments are therefore expected to be independent of each other in sufficiently general models, and one should not hope to infer one from the other under general assumptions.
The small energy splitting between ground states and the nonzero gap to excitations of the perturbed model will have to be established separately.

More generally, beyond the 2D toric code example, an explicit upper bound to the failure probability (similar to Eq.~\eqref{eq:pfail_upperbound_memory_maintext}) of the stability experiment can also be obtained (see Eq.~\eqref{eq:pfail_hard_wall_stability_dennis} of Appendix~\ref{sec:pfail_upperbound_MW}), 
\begin{align}
    \label{eq:pfail_upperbound_stability_maintext}
    \epsilon(\tf) = O\left( \mathrm{area}(W)  \cdot [(w+1) \cdot \widetilde{q}]^{\tf} \right),
\end{align}
where $\mathrm{area}(W)$ is defined to be the minimum number of terms in a decomposition of $W$ into stabilizers.
The equation holds when $\widetilde{q}$ (defined in Eq.~\eqref{eq:def_q_tilde}) is sufficiently small.
Our reasoning in Sec.~\ref{sec:psucc_to_observable} implies that in the 3D gauge theory with a hard wall on $W \times [0, \tf]$, 
\begin{align}
\label{eq:temporal_decoding_lower_bound_2DTC}
    &\langle \PRXQ{L_W} \rangle \geq 1-2 \epsilon(\tf)\nn
    &\quad \Rightarrow \quad
    e^{-\delta F} = O(\epsilon(\tf)) = O(\mathrm{area}(W) \cdot e^{-\xi^{-1} \tf}).
\end{align}
Here, the prefactor $\mathrm{area}(W)$ comes from the entropy of the location of the flux loop, but is independent of $\tf$.
The ``inverse correlation length'' $\xi^{-1} = -\ln[(w+1) \cdot \widetilde{q}] > 0$ can be understood (see Sec.~\ref{sec:stability_hard_wall_two_punctures} for more details) as a line tension, or a finite gap to the magnetic charge, as is consistent with the form we expect from our discussions above of the deconfined phase.

\subsection{``Hard wall with two punctures'' and temporal correlation functions for general CSS codes \label{sec:stability_hard_wall_two_punctures}}

\begin{figure}[t]
    \centering
    \includegraphics[width=0.48\linewidth]{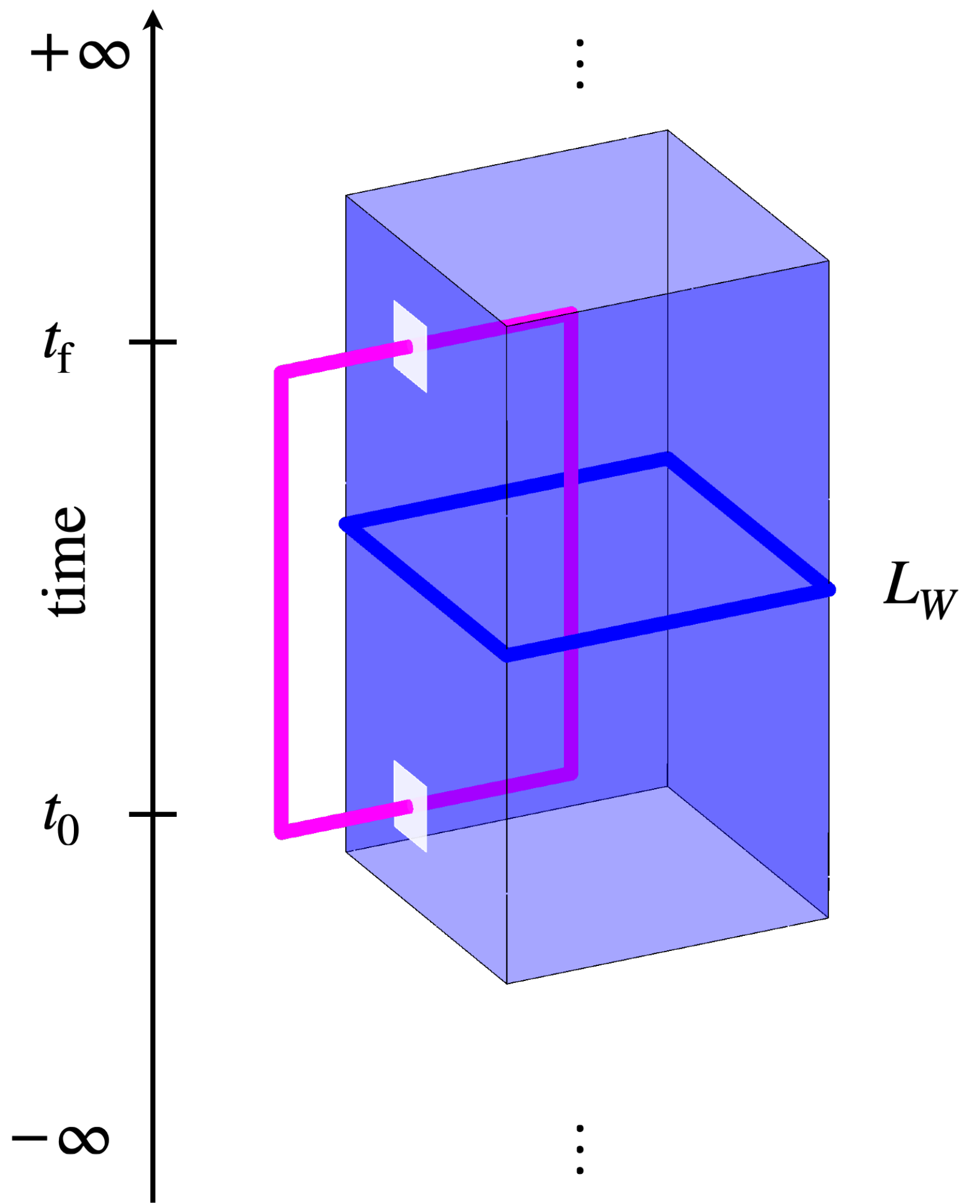}
    \caption{``Hard wall with two punctures'' stability experiment for the 2D toric code.
    The hard wall is present from $(-\infty, \infty)$, except that at times $t=0$ and $t=\tf$ the bitflip error (with small but constant probability $p_0$) is turned on at a chosen qubit $0 \in W$.
    The leading contribution that leads to $\PRXQ{L_W} = -1$ is flux loop that go through both punctures.}
    \label{fig:stability-experiment-2pt}
\end{figure}

We now move on to general discussion of LDPC CSS codes.
For the moment, we only require that $W$ is an element of the stabilizer group, without specifying the size of $W$.
To facilitate generalization, we consider a slightly different version of the stability experiment, which we call ``hard wall with two punctures''.
Consider running the modified dynamics (with bitflip errors on $W$ turned off) from $(-\infty, \infty)$, except that at times $t=0$ and $t=\tf$ the bitflip error (with a small but constant probability $p_0$) is turned on at a given qubit $0 \in W$.
We illustrate this for the 2D toric code in Fig.~\ref{fig:stability-experiment-2pt}.

The success probability of predicting $W$ can again be associated with the expectation value $\langle \PRXQ{L_W} \rangle$ in a stat mech model without disorder, along the lines of Sec.~\ref{sec:ordering_of_LGT} and Sec.~\ref{sec:psucc_to_observable}.
The transfer matrix of the stat mech model is generated by $H^W_{\rm q}$, where
\begin{align}
\label{eq:H_excluded_W}
    H^W_{\rm q} 
    \equiv&\,
    - \sum_\qs \prod_{\bfr \in \qs} \pauliz_{\bfr} 
    -\sum_{\widetilde{\qs}} \prod_{\bfr \in \widetilde{\qs}} \paulix_{\bfr} 
    -
    h  \sum_{\bfr \notin W} \paulix_{\bfr} \nn
    =&\ H_{\rm q} + h \sum_{\bfr \in W} \paulix_\bfr.
\end{align}
By design, $\PRXQ{L_W}$ is a symmetry of $H^W_{\rm q}$, and in particular ground states of $H^W_{\rm q}$ are even under $\PRXQ{L_W}$.

\YL{This setup is advantageous as compared to the one in Sec.~\ref{sec:stability_hard_wall}, as we can directly relate $\langle \PRXQ{L_W} \rangle$ to certain two-point correlation functions of operators separated in imaginary time. 
With details in Appendix~\ref{sec:two_punctures_two_points}, we show that
\begin{align}
\label{eq:W_temporal_corr}
    \langle \PRXQ{L_W} \rangle
    \leq&\, 1 - (\tanh \lambda)^2 \langle \paulix_0(\tf) \paulix_0(0) \rangle,
\end{align}
where $\tanh \lambda \equiv \frac{p_0}{1-p_0}$, and
\begin{align}
\label{eq:spectral_decomp_2pt_corr_fcn}
    \langle \paulix_0(\tf) \paulix_0(0) \rangle
    \equiv
    \frac{\tr (\Pi_\Omega^{W} \cdot \paulix_0 \cdot  e^{-H^W_{\rm q} \tf} \cdot \paulix_0)}{\tr (\Pi_\Omega^{W} \cdot e^{-H^W_{\rm q} \tf})}.
\end{align}
By construction, $\paulix_0$ anticommutes with $\PRXQ{L_W}$, and creates an excited state with $\PRXQ{L_W} = -1$.
The decay rate of this two-point function is therefore generically determined by the lowest-lying state within the $\PRXQ{L_W} = -1$ sector, whose gap we denote as $\Delta$.

Meanwhile, we can explicitly lower bound $\avg{\PRXQ{L_W}}$ using success probabilities of decoding for general LDPC codes (see Eq.~\eqref{eq:pfail_hard_wall_stability_two_punctures_dennis})
\begin{align}
    \label{eq:W_temporal_Psucc}
    1 - \langle \PRXQ{L_W} \rangle =  O(e^{-\xi^{-1} \cdot \tf}).
\end{align}
Comparing the previous two equations, we find a nonzero decay rate of the two point function,
\begin{align}
    \langle \paulix_0(\tf) \paulix_0(0) \rangle = O( e^{-\xi^{-1} \cdot \tf}).
\end{align}
Therefore, $\xi^{-1}$ provides a nonzero lower bound to $\Delta$.}

\begin{figure}
    \centering
    \includegraphics[width=.9\linewidth]{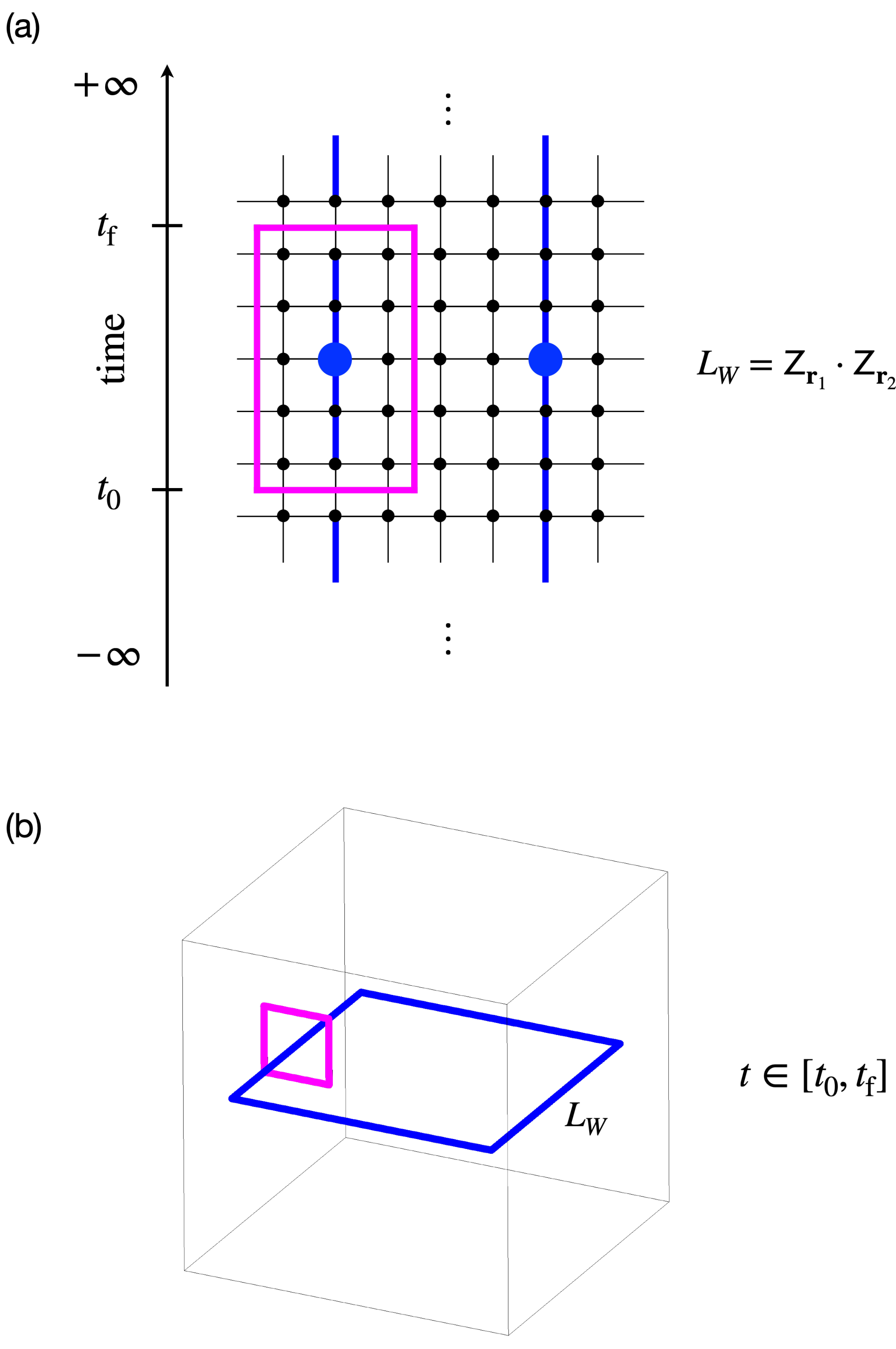}
    \caption{``Hard wall with two punctures'' stability experiment for (a) the 1D repetition code and (b) the 3D toric code.
    For (a), we take $\PRXQ{L_W} = \pauliz_{\bfr_1} \pauliz_{\bfr_2}$, which is a product of stabilizers between $[\bfr_1, \bfr_2]$.
    The disordering defect is a closed domain wall (magenta).
    For (b), we are only drawing the configuration of the system at a given time instant $t \in [\ti, \tf]$.
    $W$ is taken as a product of $\pauliz$-plaquette stabilizers, which form a closed loop.
    It detects the gap of magnetic loop-like excitations (magenta) created by Pauli $\paulix$ and braid nontrivially with $\PRXQ{L_W}$.
    }
    \label{fig:hardwall-2pt-1DREP-3DTC}
\end{figure}

We note that in the above discussions, $\Delta$ is defined for the Hamiltonian $H_{\rm q}^{W}$ where perturbations are turn\PRXQ{ed} off on $W$, rather than for $H_{\rm q}$.
What can we say about the spectrum of $H_{\rm q}$ from the above results for $H^W_{\rm q}$?
We point out two general difficulties in giving a satisfactory answer to this question for general CSS codes.

(i)
With a fixed $W$, the decoding problem probes those excited states with $\PRXQ{L_W} = -1$.
Thus, different choices of $W$ and different creation operators (say, those acting on more than one qubit) are needed in order to probe different excitations across the entire system, particularly for codes where translation symmetry is lacking.
The situation is further complicated if quenched disorder are present in the transfer matrix.
These are beyond the current formulation of this experiment.

(ii)
There are conflicting requirements we want to impose on $\PRXQ{L_W}$ and $W$.
On the one hand, we want $|W|$ to be small compared to $N$, and ideally $H^W_{\rm q}$ and $H_{\rm q}$ are only different on a vanishing fraction of qubits, so that lessons from $H^W_{\rm q}$ are hopefully informative of $H_{\rm q}$.
On the other hand, $\mathrm{area}(W)$ needs to be sufficiently large, so that the experiment is not probing local ``bound states'' near the defect $W$, while strong perturbations elsewhere have already closed the gap.\footnote{The simplest example of a bound state is a transverse field Ising model.
This is the relevant model to the 1D repetition code.
As we illustrate in Fig.~\ref{fig:hardwall-2pt-1DREP-3DTC}, $\PRXQ{L_W}$ takes the form of $\pauliz_{\bfr_1} \pauliz_{\bfr_2}$.
In the extreme case, when $\bfr_1, \bfr_2$ are only one lattice spacing apart, the absence of a transverse field on $\bfr_1, \bfr_2$ leads to formation of a locally ferromagnetic region, which is gapped regardless of the rest of the system.}
For example, for $\mathrm{area}(W) = O(1)$, a majority vote decoder gives a finite lower bound to $\xi^{-1}$ everywhere in the phase diagram, even when the perturbation strength/error rate is large.\footnote{\label{fn:majority_vote}As $\tf$ increases one eventually enters a regime where decoding has a success probability approaching $1$, even when the physical error rates are approaching $1/2$.
Suppose $W$ can be written as the product of by as few as $\mathrm{area}(W)$ checks.
We consider a decoder that takes the product of these checks at every time step, and then takes the majority vote of the products for $t \in [0, \tf]$.
It is easy to see that
\begin{align}
\label{eq:majority_vote_decoder}
    \langle \PRXQ{L_W} \rangle \geq \mathbb{P}_{\rm succ} \sim 1 - e^{-e^{-\mathrm{area}(W)} \tf}, \quad
    \text{ when } \tf \gg e^{\mathrm{area}(W)}.
\end{align}
This puts a lower bound on the inverse correlation length, $\xi^{-1} \geq e^{-\mathrm{area}(W)}$.
In the case of 2D toric code, $\mathrm{area}(W) \propto |W|^2$, and this is lower bound that vanishes with increasing $W$: despite the success of decoding, there is no finite gap to anyon excitations, as consistent with our expectation as the temperature goes to infinity.
However, when there are local redundancies, we have $\mathrm{area}(W) = O(1)$, there is an $O(1)$ lower bound to $\xi^{-1}$, which we associate to the formation of bound states near $W$.
}
\YL{To quantify these requirements, we propose the following general \textit{necessary} condition that should be satisfied by $W$,
\begin{align}
\label{eq:condition_on_W}
\exists W_N \in \mathcal{S}_N, \quad
    \lim_{N \to \infty} |W_N| / N = 0, \quad \lim_{N \to \infty} \mathrm{area}(W_N) = \infty.
\end{align}
As a simple concrete proposal, we may take $\PRXQ{L_W}$ as the product of $\Theta(N^\gamma)$ independent stabilizers, with $\gamma \in (0,1)$.
This way, $\mathrm{area}(W) = \Theta(N^\gamma)$ and $|W|$ grows no faster than $O(N^\gamma)$, and the above condition is satisfied.
As one can see, this proposal is general, and works in particular for ``good'' LDPC codes.}

\YL{Although the above issues exist when considering general CSS codes,}
for many examples of interest, there are natural choices of $W$ that satisfy Eq.~\eqref{eq:condition_on_W}.
For topological codes in low dimensions, either in flat or curved spaces, we can choose $\PRXQ{L_W}$ to be the product of two $\pauliz$-type logicals that are separated by a distance proportional to the code distance, and whose product lies within the stabilizer group.
We illustrate a few examples in Fig.~\ref{fig:hardwall-2pt-1DREP-3DTC}.
The stability experiment now looks similar to ``lattice surgery'', and has the intuitive interpretation of transporting logical information in space~\cite{Gidney_2022} (rather than in time, as in the memory experiment).
(In the special case of 2D toric code in Euclidean or hyperbolic spaces, the introduction of $\PRXQ{L_W}$ can be avoided, due to a global redundancy, as we discuss in Appendix~\ref{sec:stability_experiment_special_to_surface_codes}.)
More generally, ideas from materialized symmetry~\cite{benbrowndomwilliamson2020parallelized, brown2022conservation} might be useful in constructing $\PRXQ{L_W}$.

Indeed, \YL{for topological codes,} Eq.~\eqref{eq:condition_on_W} is
\YL{naturally satisfied by}
``spatial boundaries'' of a code.
When these conditions are met, one can expect to generally define the code on a lattice with ``open'' boundary conditions on $W$.
The stability experiment would then probe excitations that ``condense'' on such boundaries.
For graphs in finite dimensions with local connectivity, it is often possible to define such open boundary conditions, and $W$ will often be a lower-dimensional manifold, as exemplified by Figs.~\ref{fig:stability-experiment-2pt},\ref{fig:hardwall-2pt-1DREP-3DTC}.
When there is no natural notion of spatial boundary (\YL{such as for codes on graphs}), \YL{abstract constructions of $W$, such as the one proposed below Eq.~\eqref{eq:condition_on_W}, might be necessary}.
\YL{The lack of spatial open boundary conditions} also arises in defining thermodynamic limits for stat mech models on these graphs~\cite{Kovalev_2018, Jiang_2019, rakovszky2023gauge}, and is not unique to our problem.

\section{Discussions \label{sec:discussion}}

With a combination of exact and heuristic results, we argue that the existence of error correction thresholds of CSS codes offers quantitative evidence to their perturbative stabilities as Hamiltonians, thereby connecting the two notions of stability.
Our results suggest that error correction may provide means to define topological ordered phases beyond Euclidean lattices in finite dimensions.
This definition makes use of boundary conditions that are motivated by error correction considerations.
This can be particularly useful when thermodynamic limits are not well defined, whereas finite thresholds are well defined and can be readily established.

Our approach to stability of the perturbed code Hamiltonians is by turning them into classical Ising gauge theories, whose stability is implied by that of their disordered versions, when introducing random signs in the couplings satisfying the Nishimori conditions.
The Nishimori conditions permits an interpretation of the stability in terms of success of inference problems (decoding), which can be established using other methods, without directly bounding order parameters in the stat mech models.
We have thus far proven a low-temperature ordered / deconfined phase of the Ising gauge theories, as defined by the boundary order parameters, see Theorem~\ref{thm:ordering_of_clean_stat_mech_model}.
We are hopeful that these results can be generalized to any finite abelian group.
To make statements about the perturbed Hamiltonian, we make \textit{nontrivial} assumptions that (i) the continuous time limit can be taken without destroying stability (see Appendix~\ref{sec:continuous_time_limit} for further discussions), (ii) contributions to $\avg{L_J}$ in Eq.~\eqref{eq:LJ_contributions_ground_excited_states} from excited states can be neglected, and (iii) a finite gap in the spectrum of the \textit{modified} perturbed Hamiltonian Eq.~\eqref{eq:H_excluded_W} can be used to establish a finite gap in the \textit{unmodified} one.
Stronger mathematical results may be necessary in closing these logical gaps, but are beyond the scope of this work.

Our results here are instead based upon physical observations that the non-unitary time evolution of the code (under a stochastic noise model) ``emulates'' imaginary time evolution.
This allows us to build intuition of QEC from stat mech and vice versa, and to ``measure'' observables of the stat mech models by designing different decoding experiments.
With these, we show that success of \textit{memory} and \textit{stability} experiments corresponds to suppression of defects in space and time, and can be used to argue for (i) small energy splitting within the ground state manifold and (ii) exponentially decaying correlation functions associated with a nonzero gap to local excitations, respectively.

\PRXQ{Quantum imaginary time evolutions are richer than classical stochastic processes, and there are certainly cases where the analogy between error correction threshold and perturbative stability is no longer expected to hold.
For non-CSS codes and more general types of perturbations, the corresponding imaginary time evolution may involve negative weights and Berry's phase terms that are beyond the description of conventional classical statistical mechanics models.\footnote{On a positive note, the classical stat mech models we obtain for maximal-likelihood decoding of these more general cases may be of independent interest, and whose stability results may be readily established~\cite{ChubbFlammia2018}.}
We expect a general stability theorem to be more closely aligned with~\cite{Klich_2010, Bravyi_2010, Bravyi_2011, Michalakis_2013, lavasani2024klocal}, involving very different intuitions and techniques.}

On the other hand, there are also reasons to believe that the analogy we found may have extensions beyond CSS codes under incoherent errors.
Recently, Ref.~\cite{lavasani2024stability} showed that fixed-point wavefunctions can be recovered from the ground state of the perturbed Hamiltonian in the same phase via error correction.


On a technical level, our derivation based on transfer matrices may also be extended beyond CSS codes under uncorrelated $\paulix$ and $\pauliz$ noises. 
We consider two simple examples in Appendix~\ref{sec:examples_beyond}, namely (i) the toric code under pure $\pauliy$ noise and (ii) the Bacon-Shor code (which is a subsystem code with noncommuting checks).
We derive their stat mech models and re-derive known results about their thresholds.
These examples show that a stat-mech picture may be widely applicable to error correction codes and circuits, in particular ``dynamical'' ones with an evolving codespace in time~\cite{gidney2023bacon, hastings2021FloquetCode, davydova2022floquet, bombin2015gaugecolorcode, bombin2015singleshot, KubicaVasmer, bridgeman2023, YL2023STC, stahl2023singleshot, bacon2017sparse, delfosse2023spacetime, McEwen_2023, kesselring2022anyon, Townsend_Teague_2023, bombin2023faulttolerant, bombin2023unifying, beverland2024fault, fugottesman2024error}.
They may also be helpful in systematic identifications of materialized symmetries~\cite{benbrowndomwilliamson2020parallelized, brown2022conservation}.

\section*{Acknowledgements}

We are grateful to Shankar Balasubramanian, Dolev Bluvstein, Nikolas Breuckmann, Sky Cao, Arpit Dua, Michael Gullans, Jeongwan Haah, Israel Klich, Alex Kubica, Ali Lavasani, Pavel Nosov, Akshat Pandey, Benedikt Placke, Leonid Pryadko, Shivaji Sondhi, Adithya Sriram and Dominic Williamson for helpful discussions. We especially thank Tibor Rakovszky for many insightful discussions, and for collaboration on related work and during early stages of this work.

V.K. acknowledges support from the Alfred P. Sloan Foundation through a Sloan Research Fellowship, the Packard Foundation through a Packard Fellowship in Science and Engineering,  the US Department of Energy, Office of Science, Basic Energy Sciences, under Early Career Award Nos. DE-SC0021111 (support for N.O.D., and the study of phases of non-equilibrium error-correction processes), and the Office of Naval Research Young Investigator Program (ONR YIP) under Award Number N00014-24-1-2098 (formulation of information theoretic space-time order parameters to argue for phase stability in stat mech models). 
Y.L. is supported in part by the Gordon and Betty Moore Foundation's EPiQS Initiative through Grant GBMF8686, and in part by the Stanford Q-FARM Bloch Postdoctoral Fellowship in Quantum Science and Engineering. N.O.D. also acknowledges support from the ARCS Foundation for ARCS Scholar funding.

\bibliography{main}

\appendix

\section{Correlated bitflip errors and non-uniform error rates\label{sec:correlated_bitflip}}

In this appendix, we explain how to generalize the transfer matrix construction in the main text to handle correlated bitflip errors as well as location-dependent error rates. Thus, the discussion in the main text of how error correction properties under independent bitflip errors relate to properties of quantum Hamiltonians made of $\pauliz$-stabilizers and perturbed by a field directly generalizes to this wider class of correlated errors. In particular, a correlated bitflip error is in one-one correspondence to perturbation of the quantum Hamiltonian by an interaction made of a product of $\paulix$'s on those sites.
\subsection{Transfer matrix and quantum Hamiltonian from correlated bitflip errors}
In Sec.~\ref{sec:decoding_success_physical_consequences} of the main text, we introduced an auxiliary Hilbert space to re-express conditional expectation values (conditioned on the measurement outcomes) as matrix elements of operators. The key 
object in this rewriting was the transfer matrix, whose matrix elements are defined as 
\begin{equation}
\begin{split}\label{eq:appA_transfer_def}
T(s_t)_{\sigma_t, \sigma_{t-1}} &= \mathbb{P}(\sigma_{t}, s_{t}|\sigma_{t-1}, s_{t-1}) 
    \\&= \mathbb{P}(s_{t}|\sigma_{t}) \cdot \mathbb{P}(\sigma_{t}|\sigma_{t-1}).
\end{split}
\end{equation}
Here, 
$\mathbb{P}(s_{t}|\sigma_{t})$ 
describes the measurement errors, which are assumed to be independent of one another and the history; whereas 
$\mathbb{P}(\sigma_{t}|\sigma_{t-1})$ 
describes the bitflip errors; we will allow spatially-correlated bitflip events, generalizing the main text whose focus was an error model with only independent bitflip errors. A given stabilizer $\qs$ will give the wrong measurement outcome with probability $p_\qs$, while a given bitflip error $\qe$ will occur with probability $p_\qe$.\footnote{Notice that we have adopted a notation that makes non-uniform, location dependent error rates explicit, in contrast to those in Sec.~\ref{sec:stat_mech_model}.}
For notational and conceptual ease, and to make contact with a perturbed time-independent quantum model, we will assume these error rates are independent of time. 

The transfer matrix can be described directly as an operator in the auxiliary Hilbert space:
\begin{equation}
\begin{split}
    T(s) &\propto
    {\exp( \sum_\qs K_\qs s_{\qs, t} \pauliz_\qs)} 
    {\exp(\sum_\qe \overline{K}_\qe \paulix_\qe)}\,\, 
    \\ 
    {K_\qs} &
    {= \frac{1}{2} \log(\frac{1-p_\qs}{p_\qs})}, \quad
    {\overline{K}_\qe = \frac{1}{2} \log(\frac{1}{1-2p_\qe})}
\end{split}
\end{equation}
where $\pauliz_\qs \equiv \prod_{\bfr \in \qs} \pauliz_\bfr$ as before, and $\paulix_\qe = \prod_{\bfr \in \qe} \paulix_\bfr$.
This matches the definition in Eq.~\eqref{eq:appA_transfer_def}. To see this, consider first the measurement errors. Configurations with $s_{\qs, t} = -\pauliz_\qs$ (i.e. a measurement error has occurred) are weighted by $e^{-K_\qs}$, while configurations with $s_{\qs, t} = + \pauliz_\qs$ (i.e. no measurement error) are weighted by $e^{+K_\qs}$. 
The relative weights of these configurations is the correct $e^{-2K_\qs} = \frac{1-p_\qs}{p_\qs}$. For the bitflip errors, note that $\exp(\overline{K}_\qe \paulix_\qe) = \cosh(K_\qe) + \sinh(K_\qe) \paulix_\qe$.
That is, the relative weight of configurations where the correlated bitflip error $\qe$ has occurred relative to those configurations where it hasn't occurred is the correct $\tanh(\overline{K}_\qe) = \frac{p_\qe}{1-p_\qe}$.

Note that for $s=1$, the resulting quantum Hamiltonian in the continuous-time limit of $\tau \to 0$ where $K_\qs = J_\qs \tau$ and $\overline{K}_\qe = h_\qe \tau$ is $H_q = - \sum_\qs J_\qs \pauliz_\qs - \sum_\qe h_\qe \paulix_\qe$.

\subsection{Classical Hamiltonian for the classical stat mech model}
Note that the above discussion directly constructs the transfer matrix from the stochastic error model without first passing through the classical Hamiltonian description of the classical statistical mechanical model. For completeness, the interactions in the classical Hamiltonian within the time $t$ slice and between the times $t$ and $t-1$ slices can be found via $H(s_t)_{\sigma_t, \sigma_{t-1}} = -\log( T(s_t)_{\sigma_t, \sigma_{t-1}} )$. 

An alternative, equivalent, and more direct construction of the classical Hamiltonian is as follows. For a given errors-followed-by-measurements round, partition the set of errors $\{ \qe \}$ into sets $\{ \qe \}_m$ of completely non-overlapping errors, where $m$ is ordered from $1$ to the number $M$ of such sets. For each error $\qe$, give an ordering to the locations of spins on which $\qe$ has support, $\mathbf{r}_{\qe,n} \in \qe$, with $n$ running from $1$ to the weight of $\qe$.
For each such set $\{ \qe \}_m$ and errors-followed-by-measurements round at time $t$, introduce a layer $t_{m}$ of $\sigma$ spins in the classical stat mech model.
Between a given layer $t_{m}$ and the next layer $t_{m+1}$ ($t_{M+1} = (t+1)_{1}$), make between-layer two-spin ferromagnetic Ising couplings of strength $K_\qe= -\frac{1}{2}\log(\tanh(\overline{K}_\qe))$, one for each error $\qe$ in the set $\{ \qe \}_m$, of the form $-K_\qe \sigma_{\mathbf{r}_{\qe,1}, t_{m}} \sigma_{\mathbf{r}_{\qe,1}, t_{m+1}} $.
For each error $\qe \in \{\qe \}_m$ (note the $m$ index), introduce the four-body couplings $-\infty \sum_{n \neq 1}  \sigma_{\mathbf{r}_{\qe,1}, t_{m}} \sigma_{\mathbf{r}_{\qe,n}, t_{m}} \sigma_{\mathbf{r}_{\qe,1}, t_{m+1}} \sigma_{\mathbf{r}_{\qe,n}, t_{m+1}}$.
Note that this infinite coupling within the Hamiltonian can be alternatively realized via factors of $\prod_{n \neq 1} \frac{1+\sigma_{\mathbf{r}_{\qe,1}, t_{m}} \sigma_{\mathbf{r}_{\qe,n}, t_{m}} \sigma_{\mathbf{r}_{\qe,1}, t_{m+1}} \sigma_{\mathbf{r}_{\qe,n}, t_{m+1}}}{2}$ multiplying the Boltzmann weights. This infinite coupling enforces correlated flipping of spins on the support of $\qe$, while the between-layer $K_\qe$ coupling encodes the probability of a correlated flipping event.

This latter form of the classical Hamiltonian is nice, as removing the sign disorder caused by $s$ makes all couplings manifestly ferromagnetic (at the cost of increasing the number of slices of $\sigma$ spins by a multiplicative factor of $M$). This means that the hypotheses of the GKS inequalities (discussed in Appendix~\ref{sec:GKS}) are satisfied, and their consequences hold. That is, removing sign disorder (relevant for taking the quantum limit) cannot decrease expectation values of products of the classical spins $\sigma$.

\section{Further discussions of disordered stat mech models \label{sec:nishimori}}

\subsection{Change of basis on the Nishimori line and uncorrelated quenched disorder \label{sec:change_of_basis}}

The random-sign disorder $s$ in $Z[s]$ (Eq.~\eqref{eq:Z[s]}) can be correlated through space and time, and its distribution can be difficult to describe.
Fortunately, there exists a change of basis of the $\sigma$ spins, with which we obtain a disordered spin model with uncorrelated disorder.

Consider the disorder average of any observable $\mathcal{O}(\sigma) = \pm 1$ that is a product of some spins.
Based on its definition, we perform a rewriting by introducing \PRXQ{dummy spin variables $\widehat{\sigma}$},
\begin{align}
    \label{eq:def_dummy_spin_sigma_hat}
    [ \avg{\mathcal{O}} ]_s &\,
    \equiv \sum_s \mathbb{P}(s) \frac{\tr_\sigma [\mathcal{O}(\sigma) \exp{-H(\sigma, s)}]}{\tr_\sigma [\exp{-H(\sigma, s)}]} \nn
    =&\ \sum_{s, \widehat{\sigma}} \mathbb{P}(\widehat{\sigma}, s) 
    \frac{\tr_\sigma [\mathcal{O}(\sigma) \exp{-H(\sigma, s)}]}{\tr_\sigma [\exp{-H(\sigma, s)}]}.
\end{align}
\PRXQ{Here we recall the definition of $\mathbb{P}(\widehat{\sigma}, s)$ in Eq.~\eqref{eq:H(sigma,s)}, and the definition $\mathbb{P}(s)$ below Eq.~\eqref{eq:Z[s]}.
Note that the spins $\widehat{\sigma}$ spins still obey the perfect SPAM conditions Eq.~\eqref{eq:SPAM_boundary_condition}, in particular that $\widehat{\sigma}_{\bfr, \ti} = +1$ for all $\bfr$.}
Making the substitution $\sigma \to \widehat{\sigma} \sigma$, we obtain
\begin{align}
    [ \avg{\mathcal{O}} ]_s &\,
    = \sum_{s, \widehat{\sigma}} \mathbb{P}(\widehat{\sigma}, s) 
    \mathcal{O}(\widehat{\sigma})
    \frac{\tr_{\sigma} [\mathcal{O}(\sigma) \exp{-H(\widehat{\sigma}\sigma, s)}]}{\tr_{\sigma} [\exp{-H(\widehat{\sigma}\sigma, s)}]} \nn
    =&\, \sum_{s, \widehat{\sigma}} \mathbb{P}(\widehat{\sigma}, s) 
    \mathcal{O}(\widehat{\sigma})
    \frac{\tr_{\sigma} [\mathcal{O}(\sigma) \exp{-\widehat{H}(\sigma, \error)}]}{\tr_{\sigma} [\exp{-\widehat{H}(\sigma, \error)}]},
\end{align}
where
\begin{align}
\begin{split}
    \label{eq:def_eta}
    \error_{\bfr, t} &= \widehat{\sigma}_{\bfr, t} \cdot \widehat{\sigma}_{\bfr, t+1}, \\
    \error_{\qs, t} &= s_{\qs, t} \cdot \prod_{\bfr \in \qs} \widehat{\sigma}_{\bfr, t},
\end{split}
\end{align}
and
\begin{align}
\label{eq:def_hat_H}
    &\ \widehat{H}(\sigma, \error) = H(\widehat{\sigma}\sigma, s) \nn
    &\quad =
    -\sum_{t} \left[ \km \sum_{\qs} \error_{\qs, t} \prod_{\bfr \in \qs} \sigma_{\bfr, t} + \kbf \sum_{\bfr} \error_{\bfr, t} \sigma_{\bfr, t+1} \sigma_{\bfr, t} \right], \\
\label{eq:def_hat_Z}
    &\ \widehat{Z}[\error] = \tr_\sigma \exp {-\widehat{H}(\sigma, \error)}.
\end{align}
\PRXQ{In this new basis, our assumption of perfect SPAM Eq.~\eqref{eq:SPAM_boundary_condition} implies that $\error_{\qs, \ti} = \error_{\qs, \tf} = +1$.
Together with condition $\widehat{\sigma}_{\bfr, \ti} = +1$ (see comments below Eq.~\eqref{eq:def_dummy_spin_sigma_hat}), we see that} $s, \widehat{\sigma}$ are uniquely determined by $\error$, and we denote $\mathcal{O}(\error) \equiv \mathcal{O}(\widehat{\sigma}(\error))$.
By the definition of $\error$, it is precisely the error events that gives rise to the history labeled by $(\widehat{\sigma}, s)$.
We have immediately
\begin{align}
    \mathbb{P}(\error)
    \equiv
    \mathbb{P}(\widehat{\sigma}, s) 
    \propto \left(\frac{p_{\rm bf}}{1-p_{\rm bf}}\right)^{\wgt(\error_\bfr)} \cdot
    \left(\frac{p_{\rm m}}{1-p_{\rm m}}\right)^{\wgt(\error_\qs)}.
\end{align}
Here, $\wgt(\cdot)$ counts the number of $-1$ components of a vector.
Summarizing, we have
\begin{align}
    [ \avg{\mathcal{O}} ]_s = [ \mathcal{O}(\error) \cdot \avg{\mathcal{O}} ]_\error.
\end{align}
On the RHS we adopted $[\avg{...}]_\error$ as a shorthand notation for $[| \avg{...}_{\widehat{Z}[\error]} |]_\error$, without introducing confusion.
We have thus obtained a disordered spin model where the distribution of $\error$ is explicitly known.
By similar reasoning, we obtain
\begin{align}
\label{eq:LJ_Zs_equals_LJ_Zeta}
    [| \avg{L_J} |]_s
    =
    [| \avg{L_J} | ]_\error.
\end{align}

Meanwhile, the same quantity can be rewritten as
\begin{align}
    &[| \avg{L_J} |]_s \nn
    =&\, \sum_s \mathbb{P}(s) \cdot \sgn(\avg{L_J}_{Z[s]}) \cdot \frac{\tr_{\widehat{\sigma}} \exp{-H(\widehat{\sigma}, s)} \cdot L_J(\widehat{\sigma})}{\exp{-H(\widehat{\sigma}, s)}} \nn
    =&\, \sum_s \mathbb{P}(s) \cdot \sgn(\avg{L_J}_{Z[s]}) \cdot \frac{\sum_{\widehat{\sigma}} \mathbb{P}(\widehat{\sigma}, s) L_J(\widehat{\sigma})}{\mathbb{P}(s)} \nn
    =&\, \sum_s \sum_{\widehat{\sigma}} \mathbb{P}(\widehat{\sigma}, s) \sgn(\avg{L_J}_{Z[s]} \cdot L_J(\widehat{\sigma})) \nn
    =&\, \sum_s \sum_{\widehat{\sigma}} \mathbb{P}(\widehat{\sigma}, s) \sgn\left(
    \frac{\tr_{\sigma} \exp{-H(\sigma, s)} L_J(\sigma) L_J(\widehat{\sigma}) }{\tr_{\sigma} \exp{-H(\sigma, s)}}
    \right) \nn
    =&\, \sum_s \sum_{\widehat{\sigma}} \mathbb{P}(\widehat{\sigma}, s) 
    \sgn\left(L_J(\widehat{\sigma})\right) 
    \sgn\left(
    \frac{\tr_{\sigma} \exp{-H(\sigma, s)} L_J(\sigma)}{\tr_{\sigma} \exp{-H(\sigma, s)}}
    \right).
\end{align}
Here, one can replace the partition functions with the probabilities only on the Nishimori line.
This expression has an explicit operational interpretation: $s, \widehat{\sigma}$ represent samples of an error history (occuring with probability $\mathbb{P}(\widehat{\sigma}, s)$), and the decoder outputs $\sgn(\avg{L_J}_{Z[s]})$ as its prediction for $L_J$.
The decoder succeeds if it agrees with the ``ground truth'' $L_J(\widehat{\sigma})$, and fails otherwise.
Therefore, $[| \avg{L_J} |]_s$ computes the difference between success and failure probabilities, as we have seen in Eq.~\eqref{eq:2Ps-1_abs_avg_LJ}.

Again making the substitution $\sigma \to \widehat{\sigma} \sigma$, we obtain
\begin{align}
    &[| \avg{L_J} |]_s \nn
    =&\, \sum_s \sum_{\widehat{\sigma}} \mathbb{P}(\widehat{\sigma}, s) \sgn\left(
    \frac{\tr_{\sigma} \exp{-H(\widehat{\sigma}\sigma, s)} L_J(\sigma)}{\tr_{\sigma} \exp{-H(\widehat{\sigma}\sigma, s)}}
    \right) \nn
    =&\, \sum_\error \mathbb{P}(\error) \sgn\left(
    \frac{\tr_{\sigma} \exp{-\widehat{H}(\sigma, \error)} L_J(\sigma)}{\tr_{\sigma} \exp{-\widehat{H}(\sigma, \error)}}
    \right) \nn
    =&\, [ \sgn(\avg{L_J}) ]_\error.
\end{align}
We have thus established that on the Nishimori line
\begin{align}
    \label{eq:sign_LJ_equals_abs_LJ}
    [ \sgn(\avg{L_J}) ]_\error = [| \avg{L_J} | ]_\error.
\end{align}
This relation was first shown by Nishimori~\cite{nishimori1993optimum}.

\subsection{Divergence of free energy from success of Inference Problem~\ref{IP2} on index set $J$ \label{sec:divergence_of_free_energy}}

The success of 
Inference Problem~\ref{IP2} on index set $J$ implies $\Delta \mathbb{P}^J_{\decoder_J^{\rm ML}} \to 1$,
and $[ \sgn(\avg{L_J}) ]_\error = [| \avg{L_J}| ]_\error \to 1$, compare Eqs.~(\ref{eq:2Ps-1_abs_avg_LJ}, \ref{eq:LJ_Zs_equals_LJ_Zeta}, \ref{eq:sign_LJ_equals_abs_LJ}).
Define 
\begin{align}
    \widehat{Z}[\error](L_J=\pm 1) \equiv
    \frac{\tr_{\sigma} \exp{-\widehat{H}(\sigma, \error)} 
    \cdot \mathbb{1}_{L_J(\sigma) = \pm 1}}{\tr_{\sigma} \exp{-\widehat{H}(\sigma, \error)}}.
\end{align}
That $[| \avg{L_J} | ]_\error \to 1$ implies that for each $\error$, either $\widehat{Z}[\error](L_J=+ 1) / \widehat{Z}[\error] \to 1$ or $\widehat{Z}[\error](L_J=- 1) / \widehat{Z}[\error] \to 1$; and $[ \sgn(\avg{L_J}) ]_\error \to 1$ implies that $\widehat{Z}[\error](L_J=+1) \geq \widehat{Z}[\error](L_J=-1)$ with probability $1$.
Together, we conclude that with probability $1$, $\widehat{Z}[\error](L_J=+ 1) / \widehat{Z}[\error] \to 1$.
Consequently,
\begin{align}
    & [| \avg{L_J} | ]_\error - [ \avg{L_J} ]_\error \nn
    &= 2\cdot \sum_\error \mathbb{P}(\error) 
    \cdot 
    \mathbb{1}_{\avg{L_J} < 0}
    \cdot
    | \avg{L_J} |\nn
    &\leq 2\cdot \sum_\error \mathbb{P}(\error) 
    \cdot 
    \mathbb{1}_{\avg{L_J} < 0} \nn
    &= 1 - \left[\mathrm{sgn} \avg{L_J} \right]_\error \to 0.
\end{align}
From this, we see that whenever $\Delta \mathbb{P}^J_{\decoder_J^{\rm ML}} \to 1$, we also have
\begin{align}
    [ \avg{L_J} ]_\error
    = \left[
        \frac{1 - e^{-\Delta F_{J,\error}}}{1+e^{-\Delta F_{J,\error}}}
    \right]_\error \to 1,
\end{align}
where
\begin{align}
    e^{-\Delta F_{J,\error}} = \frac{\widehat{Z}[\error](L_J=-1)}{\widehat{Z}[\error](L_J=+1)}.
\end{align}
Finally, by Jensen's inequality,
\begin{align}
    \left[
       \Delta F_{J,\error}
    \right]_\error
    =
    \left[
        -\ln e^{-\Delta F_{J,\error}}
    \right]_\error 
    \geq
    -\ln
    \left[
       e^{-\Delta F_{J,\error}}
    \right]_\error \to \infty.
\end{align}

\section{GKS inequalities \label{sec:GKS}}

The GKS inequalities are due to Griffiths~\cite{griffiths1967a_correlations, griffiths1967b_correlations, griffiths1967c_correlations}, Kelly and Sherman~\cite{kelly1968general}.
They make precise the intuition that stronger ferromagnetic couplings correspond to more pronounced magnetic order.
Here, we follow the treatment in Ref.~\cite{friedli_velenik_2017}.

\begin{theorem}[Theorem 3.49 of~\cite{friedli_velenik_2017}]
\label{thm:GKS}
Let $\mathcal{G}$ be any graph.
To each vertex $j \in \mathcal{G}$ we associate an Ising spin $\sigma_j$, and to each subset $C \subseteq \mathcal{G}$ we associate a non-negative ``coupling constant'' $K_C \ge 0$.
Consider the following model of Ising spins
\begin{align}
    Z_K \coloneqq \tr_{\sigma} \exp \left[ \sum_C K_C \sigma_C \right],
\end{align}
where we define $\sigma_C = \prod_{j \in C} \sigma_j$.
Then we have 
\begin{align}
\label{eq:GKS_1}
    \forall A \subseteq \mathcal{G}, \quad \langle \sigma_A \rangle_{Z_K} \ge 0.
\end{align}
\end{theorem}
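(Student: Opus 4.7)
The plan is to follow the standard Griffiths-style proof, whose essence is that after expanding the Boltzmann weight, every coefficient becomes nonnegative, and the spin trace of any product of Ising variables is itself nonnegative. The hypothesis $K_C \geq 0$ is used exactly once, in turning sums into sums with nonnegative weights.

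Concretely, I would first rewrite
\begin{align}
\exp\!\left[\sum_C K_C \sigma_C\right] = \prod_C e^{K_C \sigma_C} = \prod_C \bigl[\cosh K_C + \sinh K_C\, \sigma_C\bigr],
\end{align}
using $\sigma_C^2 = 1$. Expanding the product over subsets $\mathcal{F}$ of the collection of coupled sets $C$ gives
\begin{align}
\exp\!\left[\sum_C K_C \sigma_C\right] = \sum_{\mathcal{F}} w(\mathcal{F})\, \prod_{C \in \mathcal{F}} \sigma_C, \quad w(\mathcal{F}) \equiv \prod_{C \in \mathcal{F}} \sinh K_C \prod_{C \notin \mathcal{F}} \cosh K_C \geq 0,
\end{align}
where nonnegativity of $w(\mathcal{F})$ is precisely where $K_C \geq 0$ enters.

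Next, I would compute the spin trace. For any subset $B \subseteq \mathcal{G}$, the identity $\tr_\sigma \sigma_B = 2^{|\mathcal{G}|}\,\mathbb{1}[B=\emptyset]$ holds; more generally, any product of Ising variables reduces via $\sigma_j^2 = 1$ to $\sigma_{B}$ for some parity set $B$, so the trace is always either $0$ or $2^{|\mathcal{G}|}$, hence nonnegative. Therefore
\begin{align}
\tr_\sigma\!\left[\sigma_A \exp\!\left(\sum_C K_C \sigma_C\right)\right] = \sum_{\mathcal{F}} w(\mathcal{F})\, \tr_\sigma \sigma_{A \,\triangle\, (\triangle_{C \in \mathcal{F}} C)} \geq 0,
\end{align}
where $\triangle$ denotes symmetric difference. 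Taking $A = \emptyset$ shows that $Z_K > 0$ (the $\mathcal{F} = \emptyset$ term contributes $2^{|\mathcal{G}|}\prod_C \cosh K_C \geq 2^{|\mathcal{G}|}$), so the ratio $\langle \sigma_A \rangle_{Z_K}$ is a well-defined nonnegative number, proving Eq.~\eqref{eq:GKS_1}.

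There is no genuine obstacle here: every step is an identity once one makes the $\cosh$/$\sinh$ factorization. The only thing to be careful about is that the argument uses nothing more than $\sigma_j^2 = 1$, nonnegativity of $K_C$, and finite-dimensionality of the spin space; in particular it works for arbitrary (even infinite) graphs $\mathcal{G}$ provided one first restricts to a finite subsystem and then takes a thermodynamic limit, since pointwise nonnegativity passes through the limit. The same factorization is also what underlies Eq.~\eqref{eq:GKS_2} used in Lemma~\ref{lemma:1-3}, so the argument for the corollary there (comparison of clean and disordered expectation values) will proceed by an essentially identical high-temperature-style expansion and a term-by-term comparison of the absolute values of the signed weights against their clean counterparts.
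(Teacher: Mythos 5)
The paper does not reproduce a proof of this statement; it states it as Theorem 3.49 of~\cite{friedli_velenik_2017} without argument. Your proof is correct and is the standard one: the Boltzmann weight is expanded into a sum whose coefficients are manifestly nonnegative because $K_C \ge 0$, and the spin trace of any monomial of Ising variables is $2^{|\mathcal{G}|}$ or $0$, hence nonnegative; combined with $Z_K>0$ this gives $\langle\sigma_A\rangle_{Z_K}\ge 0$. Whether one writes each factor as $\cosh K_C + \sigma_C \sinh K_C$ (your finite high-temperature split, which uses $\sigma_C^2=1$) or Taylor-expands $e^{K_C\sigma_C}$ in powers of $K_C$ (as in~\cite{friedli_velenik_2017}) is cosmetic; both expansions land nonnegative weights precisely because $K_C\ge 0$. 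One caution on your closing remark: the corollary Eq.~\eqref{eq:GKS_2} does \emph{not} reduce to a term-by-term comparison of the numerators, because the quantity being bounded is the ratio $\langle\sigma_A\rangle = \tr_\sigma[\sigma_A e^{\sum K_C\sigma_C}]/Z_K$ and the two partition functions $Z_K$, $Z_{K'}$ are unequal. Your expansion does give $\lvert \tr_\sigma[\sigma_A e^{\sum K'_C\sigma_C}]\rvert \le \tr_\sigma[\sigma_A e^{\sum |K'_C|\sigma_C}]$, but passing to the normalized expectation values additionally requires the second GKS inequality (monotonicity of $\langle\sigma_A\rangle$ in each $K_C$), which is a genuinely stronger fact and, as with the theorem itself, is cited rather than proved in the paper.
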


\begin{corollary}[Exercise 3.31 of~\cite{friedli_velenik_2017}]
Let $K$ and $K'$ be two sets of coupling constants satisfying $K_C \geq |K'_C|$ for all $C \subseteq \mathcal{G}$.
We have
\begin{align}
\label{eq:GKS_2}
    \forall A \subseteq \mathcal{G}, \quad \langle \sigma_A \rangle_{Z_K} \ge \left| \langle \sigma_A \rangle_{Z_{K'}} \right|.
\end{align}
\end{corollary}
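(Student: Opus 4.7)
The plan is to reduce to the first GKS inequality (Theorem~\ref{thm:GKS}) by a replica/doubling argument. The desired inequality $\langle \sigma_A \rangle_{Z_K} \ge |\langle \sigma_A \rangle_{Z_{K'}}|$ is equivalent, after clearing the positive partition functions $Z_K$ and $Z_{K'}$, to the pair of claims
\begin{align}
\tr_{\sigma,\tau}\,(\sigma_A \pm \tau_A)\exp\!\Bigl[\sum_C K_C \sigma_C + \sum_C K'_C \tau_C\Bigr] \ge 0,
\end{align}
where $\sigma$ and $\tau$ denote two independent copies of the Ising spins on $\mathcal{G}$ and both signs are to be verified.

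First I would change variables by keeping $\sigma$ and introducing the composite Ising variables $u_j \equiv \sigma_j \tau_j \in \{\pm 1\}$, so that $\tau_j = \sigma_j u_j$, $\tau_C = \sigma_C u_C$, and the doubled Boltzmann weight becomes $\exp\bigl[\sum_C \widetilde{K}_C(u)\, \sigma_C\bigr]$ with effective couplings $\widetilde{K}_C(u) \equiv K_C + K'_C u_C$. The hypothesis $K_C \ge |K'_C|$ guarantees $\widetilde{K}_C(u) \ge 0$ for every subset $C$ and every configuration of $u$, while the observable transforms as $\sigma_A \pm \tau_A = \sigma_A\,(1 \pm u_A)$.

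The trace then splits in two stages. Fixing $u$, the inner sum over $\sigma$ equals
\begin{align}
\sum_\sigma \sigma_A \exp\!\Bigl[\sum_C \widetilde{K}_C(u)\,\sigma_C\Bigr] \;=\; Z_{\widetilde{K}(u)}\, \langle \sigma_A \rangle_{Z_{\widetilde{K}(u)}},
\end{align}
which is non-negative by Theorem~\ref{thm:GKS} applied with the non-negative coupling constants $\widetilde{K}_C(u)$. The remaining factor $1 \pm u_A$ lies in $\{0,2\}$ and is manifestly non-negative. Summing the product over $u$ therefore yields a non-negative quantity for either choice of sign, which is precisely the desired inequality.

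The bookkeeping step I expect to require the most care is verifying that $(\sigma,\tau) \mapsto (\sigma,u)$ is a bijection on $\{\pm 1\}^{2|\mathcal{G}|}$, so that the change of summation variables is justified without a Jacobian; this is routine but worth stating. The conceptual content is that the doubling trick converts a statement about possibly-signed couplings $K'$ into one about non-negative couplings $\widetilde{K}(u)$, to which Theorem~\ref{thm:GKS} applies directly, so no new machinery beyond the first GKS inequality is needed.
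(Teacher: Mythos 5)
Your proof is correct. The paper does not supply an argument for this corollary---it simply states it and cites Exercise 3.31 of Friedli and Velenik's textbook---so there is no proof of its own to compare against. Your route, a Ginibre-style duplication, is the standard one: introduce a second replica $\tau$, change variables to $u_j \equiv \sigma_j\tau_j$, note that the effective couplings $\widetilde K_C(u) = K_C + K'_C\, u_C$ are non-negative under $K_C \ge |K'_C|$, and invoke the first GKS inequality (Theorem~\ref{thm:GKS}) at fixed $u$ with the observable factoring as $\sigma_A\,(1\pm u_A) \ge 0$. The bijectivity of $(\sigma,\tau)\mapsto(\sigma,u)$ that you flag is immediate: for fixed $\sigma$, the map $\tau_j\mapsto\sigma_j\tau_j$ is an involution on $\{\pm 1\}$. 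A pleasant feature of your argument is that it uses only GKS-I; an alternative route through GKS-II ($\langle\sigma_A\sigma_B\rangle\ge\langle\sigma_A\rangle\langle\sigma_B\rangle$, hence monotonicity of $\langle\sigma_A\rangle$ in each coupling) proves the claim directly when $K'_C\ge 0$ but requires extra work to accommodate signed $K'_C$, whereas the duplication argument handles both signs uniformly and matches the minimal toolkit stated in the appendix.
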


\section{Tunneling matrix elements in the $\pauliz$ basis}\label{sec:zbasis}
In Sec.~\ref{sec:psucc_to_observable} of the main text, we noted that successful decoding corresponded to vanishing ``tunneling'' matrix elements and exponentially small (in distance) splittings between ground states in different logical-$\paulix$ sectors.
We argued for the small splittings within that section, and here we complete the details of the vanishing ``tunneling'' matrix elements in the $\pauliz$ basis.

First, consider the effects of successful decoding \YL{of Inference Problem~\ref{IP1}} on matrix elements in the disordered classical stat mech model.
Recalling the definition of the success probability of decoding for decoder $\decoder$:
\begin{equation}
\begin{split}\label{eq:appz_mldecoder}
    \mathbb{P}^{\rm all}_\decoder
    \equiv&\ 
    \mathbb{P}(\forall j\in [K],\ \decoder_j(s) = L_j(\sigma_{\tf})) \\
    =&\ \sum_{s} \mathbb{P}(s)  \sum_{\sigma} \mathbb{P}(\sigma|s) \prod_{j=1}^K \frac{1+\decoder_j(s) \cdot L_j(\sigma_{\tf})}{2}.
\end{split}
\end{equation}
Suppose that we are below threshold with decoder $\decoder$.
The statement $\mathbb{P}^{\rm all}_\decoder \to 1$ is equivalent 
that in the corresponding disordered stat mech model
\begin{equation}
    \mathbb{P}^{\rm all}_\decoder = \left[ \left \langle \prod_{j=1}^K \frac{1+\decoder_j(s) \cdot L_j(\sigma_{\tf})}{2} \right \rangle_{Z[s]} \right ]_s \to 1.
\end{equation}
However, notice that for any $s$ disorder realization, the inner thermal expectation value bounds the corresponding quantity in the clean model:
\begin{equation}
\begin{split}
    &\left \langle \prod_{j=1}^K \frac{1+\decoder_j(s) \cdot L_j(\sigma_{\tf})}{2} \right \rangle_{Z[s]}  \\&=  \frac{1}{2^K} \sum_{J \subseteq [K]} \left( \prod_{j \in J} \decoder_j(s) \right) \langle L_J \rangle_{Z[s]} 
    \\& \leq  \frac{1}{2^K} \sum_{J \subseteq [K]} \left| \langle L_J \rangle_{Z[s]} \right|
    \\ & \leq \frac{1}{2^K} \sum_{J \subseteq [K]} \langle L_J \rangle
    \\& = \left \langle \prod_{j=1}^K \frac{1+L_j(\sigma_{\tf})}{2} \right \rangle
\end{split}
\end{equation}
The first equality used linearity of expectation to expand the product as a sum over all $2^K$ choices of $J$. The following inequality is merely $\sum_i x_i \leq \sum_i |x_i|$. The following inequality is the GKS inequality noted in Appendix~\ref{sec:GKS}, and uses the notation in the main text that $\langle L_J \rangle$ without an outer disorder average (and without a $Z[s]$ or $Z[\error]$ label) corresponds to a thermal average in the clean model. The last equality is the reverse of the first. 

This chain of inequalities implies that whenever a decoder is below threshold (and in particular, whenever the maximal-likelihood decoder is below threshold), 
\begin{equation}
    \left \langle \prod_{j=1}^K \frac{1+L_j(\sigma_{\tf})}{2} \right \rangle \to 1
\end{equation}
Note also that $\sum_{ \{ \ell_j \} s.t. \ell_j \in  \{ 1,-1 \}} \prod_{j=1}^K \frac{1+ \ell_j L_j(\sigma_{\tf})}{2} = 1$. Since every term in that sum is non-negative, and the term with $\ell_j=1$ for all $j$ is asymptotically $1$, the remaining terms must all asymptotically vanish. That is,
\begin{equation}
    \left \langle \prod_{j=1}^K \frac{1+\ell_j L_j(\sigma_{\tf})}{2} \right \rangle     \to
    \begin{cases}
        1, \quad \forall j,\,\ell_j=1\\
        0, \quad\text{otherwise}
    \end{cases}.
\end{equation}
Rephrasing this result in the language of the auxiliary Hilbert space gives the desired claim:
\begin{align}
    \frac{
    \bra{\bm{\ell}^\pauliz_L}
    T(\tf)
    \ket{\mathbf{0}^\pauliz_L}}
    {
    \sum_{\bm{\ell'} \subseteq [K]}
    \bra{\bm{\ell'}^\pauliz_L}
    T(\tf)
    \ket{\mathbf{0}^\pauliz_L}}
    \to
    \begin{cases}
        1, \quad&\bm{\ell} = \mathbf{0}\\
        0, \quad&\bm{\ell} \neq \mathbf{0}
    \end{cases}.
\end{align}

\pagebreak
\begin{widetext}
\section{Relating $\avg{\PRXQ{L_W}}$ to two point correlation functions in the ``hard wall with two punctures'' stability experiment \label{sec:two_punctures_two_points}}

Here we derive Eq.~\eqref{eq:W_temporal_corr} in the main text.
Starting from Eq.~\eqref{eq:H_excluded_W} and defining $\tanh \lambda = \frac{p_0}{1-p_0}$,
we have
\begin{align}
    \langle \PRXQ{L_W} \rangle =&\, \frac{
        \tr{\Pi_\Omega^{W} \cdot e^{-\lambda \paulix_0} \cdot e^{-H^W_{\rm q} \tf} \cdot \PRXQ{L_W} \cdot e^{-\lambda \paulix_0}}
    }{
        \tr{\Pi_\Omega^{W} \cdot e^{-\lambda \paulix_0} \cdot e^{-H^W_{\rm q} \tf} \cdot e^{-\lambda \paulix_0}}
    }
    =
    1 - \frac{2 \cdot \sum_{n}' \bra{n} e^{-\lambda \paulix_0} \cdot \Pi_\Omega^{W} \cdot e^{-\lambda \paulix_0} \ket{n} \cdot e^{-E^W_n \tf}}{\sum_{n} \bra{n} e^{-\lambda \paulix_0} \cdot \Pi_\Omega^{W} \cdot e^{-\lambda \paulix_0} \ket{n}  \cdot e^{-E^W_n \tf}}.
\end{align}
Here, $\Pi_\Omega^{W} \propto \lim_{t \to \infty} e^{-H^W_{\rm q} \cdot t}$ is the projection operator onto the ground space of $H^W_{\rm q}$, $\sum_{n}'$ sums over eigenstates of $H^W_{\rm q}$ for which $\bra{n} W \ket{n} = -1$, whereas $\sum_{n}$ sums over all eigenstates of $H^W_{\rm q}$.
Expanding in $\tanh \lambda$ we get
\begin{align}
    \langle \PRXQ{L_W} \rangle
    =&\, \frac{\sum_{n} \bra{n} \Pi_\Omega^{W} \ket{n}  \cdot e^{-E^W_n \tf} - (\tanh \lambda)^2 \sum_{n}' \bra{n} \paulix_0 \cdot \Pi_\Omega^{W} \cdot \paulix_0 \ket{n} \cdot e^{-E^W_n \tf}}{\sum_{n} \bra{n} \Pi_\Omega^{W} \ket{n}  \cdot e^{-E^W_n \tf} +  (\tanh \lambda)^2 \sum_{n}' \bra{n} \paulix_0 \cdot \Pi_\Omega^{W} \cdot \paulix_0 \ket{n}  \cdot e^{-E^W_n \tf}} \nn
    =&\, \frac{1  - (\tanh \lambda)^2 \frac{ \sum_{n} \bra{n} \paulix_0 \cdot \Pi_\Omega^{W} \cdot \paulix_0 \ket{n} \cdot e^{-E^W_n \tf} }{\sum_{n} \bra{n} \Pi_\Omega^{W} \ket{n}  \cdot e^{-E^W_n \tf}}}{1 + (\tanh \lambda)^2 \frac{ \sum_{n} \bra{n} \paulix_0 \cdot \Pi_\Omega^{W} \cdot \paulix_0 \ket{n} \cdot e^{-E^W_n \tf} }{\sum_{n} \bra{n} \Pi_\Omega^{W} \ket{n}  \cdot e^{-E^W_n \tf}}}.
\end{align}
Here, we used the following superselection rules
\begin{align}
    \label{eq:superselection_W}
    W \ket{n} = -\ket{n} %
    \quad \Rightarrow& \quad
    \bra{n} \Pi_\Omega^{W} \ket{n} = \bra{n} \paulix_0 \Pi_\Omega^{W} \ket{n} = \bra{n} \Pi_\Omega^{W} \paulix_0 \ket{n} = 0, \\
    W \ket{n} = +\ket{n} %
    \quad \Rightarrow& \quad
    \bra{n} \paulix_0 \Pi_\Omega^{W} \paulix_0 \ket{n} = \bra{n} \paulix_0 \Pi_\Omega^{W} \ket{n} = \bra{n} \Pi_\Omega^{W} \paulix_0 \ket{n} = 0.
\end{align}
Using these identities we can identify a correlation function of the imaginary time path integral,
\begin{align}
    \label{eq:spectral_decomp_2pt_corr_fcn}
    \langle \paulix_0(\tf) \paulix_0(0) \rangle
    =&\,
    \frac{\tr (\Pi_\Omega^{W} \cdot \paulix_0 \cdot  e^{-H^W_{\rm q} \tf} \cdot \paulix_0)}{\tr (\Pi_\Omega^{W} \cdot e^{-H^W_{\rm q} \tf})} 
    =
    \frac{\sum_{n} \bra{n} \paulix_0 \cdot \Pi_\Omega^{W} \cdot \paulix_0 \ket{n} \cdot e^{-E^W_n \tf}}{\sum_{n} \bra{n} \Pi_\Omega^{W} \ket{n}  \cdot e^{-E^W_n \tf}} 
    \geq
    {\bra{1} \paulix_0 \cdot \Pi_\Omega^{W} \cdot \paulix_0 \ket{1}}
     \cdot e^{-\Delta_{01}^W \cdot \tf},
\end{align}
\end{widetext}
where $\ket{1}$ is the lowest-lying state in the sector $\PRXQ{L_W} = -1$, and $\Delta^W_{01}$ the difference in energy between this state and the ground space of $H^W_{\rm q}$.
By definition, the correlation function $\langle \paulix_0(\tf) \paulix_0(0) \rangle$ is non-negative.
Therefore,
\begin{align}
    \label{eq:W_temporal_corr_appendix}
    \langle \PRXQ{L_W} \rangle
    =&\, \frac{
        1 - (\tanh \lambda)^2 \langle \paulix_0(\tf) \paulix_0(0) \rangle
    }{
        1 + (\tanh \lambda)^2 \langle \paulix_0(\tf) \paulix_0(0) \rangle
    } \nn
    \leq&\, 1 - (\tanh \lambda)^2 \langle \paulix_0(\tf) \paulix_0(0) \rangle,
\end{align}
where we used $\langle \paulix_0(\tf) \paulix_0(0) \rangle \geq 0$. 
The analog of Eq.~\eqref{eq:temporal_decoding_lower_bound_2DTC} for general LDPC codes now reads (see Eq.~\eqref{eq:pfail_hard_wall_stability_two_punctures_dennis})
\begin{align}
    \label{eq:W_temporal_Psucc}
    1 - \langle \PRXQ{L_W} \rangle =  O(e^{-\xi^{-1} \cdot \tf}).
\end{align}
Notice that we no longer have the  entropic factor $\mathrm{area}(W)$ as compared to Eq.~\eqref{eq:temporal_decoding_lower_bound_2DTC}, since we have fixed two points of the flux tube, see Fig.~\ref{fig:stability-experiment-2pt}.
Comparing Eq.~\eqref{eq:W_temporal_corr_appendix} and Eq.~\eqref{eq:W_temporal_Psucc}, we have
\begin{align}
    \label{eq:2pt_correlation_function_exp_decay}
    \langle \paulix_0(\tf) \paulix_0(0) \rangle = O( e^{-\xi^{-1} \cdot \tf}).
\end{align}
Thus, the stability experiment \textit{generically} allows us to probe temporal correlation functions between $\paulix_0$ (which are creation operators of excitations that flip the sign of $\PRXQ{L_W}$), and with decoding success we conclude the exponential-in-time decay of the correlation function.
More precisely, 
assuming the $\tf$-independent matrix element $\bra{1} \paulix_0 \cdot \Pi_\Omega^{W} \cdot \paulix_0 \ket{1}$ in Eq.~\eqref{eq:spectral_decomp_2pt_corr_fcn} does not vanish for any finite system size,\footnote{In general, we expect $\bra{1} \paulix_0 \cdot \Pi_\Omega^{W} \cdot \paulix_0 \ket{1}$ may decrease as a function (inverse polynomial) of system size. This is because the transverse field gives a dispersion to excitations, and flipping a single site of an exact ground state will generically give a superposition of excited states within a band of excitations. Nevertheless, $\Delta_{10}^W = \Omega(\xi^{-1})$ will still follow. }
Eq.~\eqref{eq:2pt_correlation_function_exp_decay} implies that $\Delta_{10}^W = \Omega(\xi^{-1})$.
Therefore we have a finite gap
to excitations created by $\paulix_0$ (which necessarily have $\PRXQ{L_W} = -1$, see Eq.~\eqref{eq:superselection_W}), in the modified Hamiltonian $H^W_{\rm q}$ of Eq.~\eqref{eq:H_excluded_W}.\footnote{\label{fn:connected_correlation_function}More precisely, the gap is associated to the inverse correlation time of the \textit{connected} correlation function $\avg{\paulix_0(\tf) \paulix_0(0)}_c \equiv \avg{\paulix_0(\tf) \paulix_0(0)} - \avg{\paulix_0(\tf)} \avg{\paulix_0(0)}$.
(When we have analytic control over the correlation functions, Fourier transforming into the frequency domain and analytic continuation to imaginary frequencies allows direct access to the spectral function.)
Here, we have effectively set $\avg{\paulix_0(\tf)}=\avg{\paulix_0(0)}=0$ (therefore $\avg{\paulix_0(\tf) \paulix_0(0)}_c = \avg{\paulix_0(\tf) \paulix_0(0)}$) by imposing $W$ as an exact symmetry in the hard wall construction.
It is generally difficult to directly measure connected correlation functions with decoding experiments, and this difficulty necessitates the hard wall.
Other methods of subtracting the disconnected correlation function exist for the 2D toric code, as we discuss in Appendix~\ref{sec:stability_experiment_special_to_surface_codes}.}

\section{Stability experiments for 2D Euclidean and hyperbolic toric codes\label{sec:stability_experiment_special_to_surface_codes}}

Here we analyze a ``flux threading'' version of the stability experiment for the 2D toric code that are closer to its original formulation~\cite{Gidney_2022}.
They make use of a global redundancy special to the code, namely all $\pauliz$ stabilizers multiply to identity (i.e. the system is in the even-parity sector of anyons).
This way, the modification of the code Hamiltonian on $W$ in Eq.~\eqref{eq:H_excluded_W} can be avoided.

Our discussion here applies to 2D toric codes in both Euclidean and hyperbolic planes~\cite{Breuckmann_2016, plackebreuckmann2023hyperbolicIsing}, as we only require (i) global anyon parity conservation and (ii) LDPC conditions.
The hyperbolic toric code is particularly interesting to us, since it has a code distance $D(N) = \Theta(\log N)$ just enough to have a nonzero threshold (c.f. Theorem~\ref{thm:ordering_of_clean_stat_mech_model}), while having a finite code rate $K = \Theta(N)$.
The code is economical, and at the same time has exponentially many failure modes, therefore least stable.

\subsection{Threading a single flux\label{sec:stability_hyperbolic_surface_code}}

\begin{figure}[b]
    \centering
    \includegraphics[width=\linewidth]{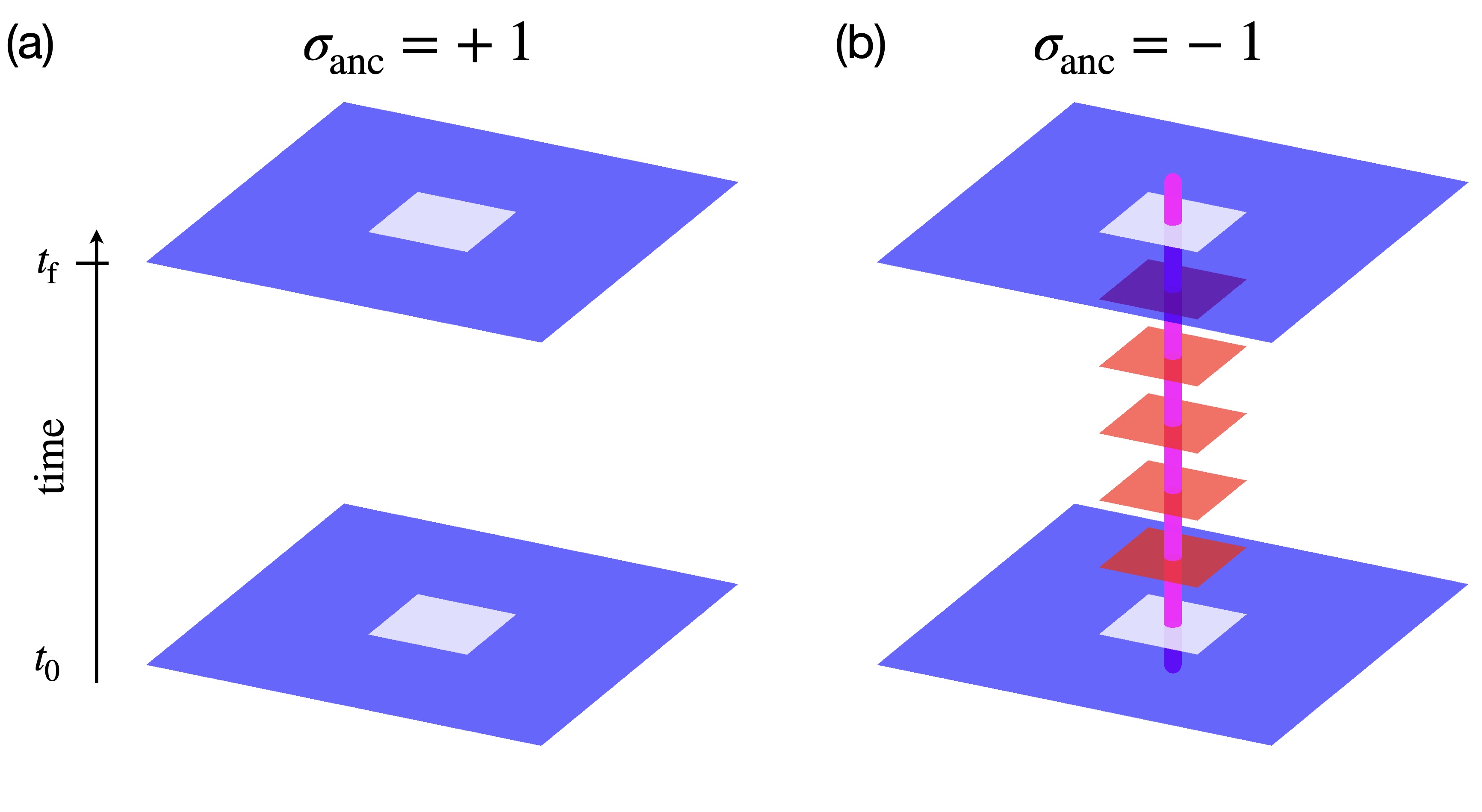}
    \caption{``Flux threading'' stability experiment for 2D toric codes.
    This construction works in Euclidean or hyperbolic spaces.
    The two panels represent $Z_{\sigma_{\rm anc}=\pm 1}$ in Eq.~\eqref{eq:def_Z_single_flux}, respectively.
    Here, the red plaquettes represent a column antiferromagnetic $\km$ couplings, introducing a flux tube (magenta) of unsatisfied plaquette terms (which cost energy).
    The flux tube is allowed wiggle (they do not necessarily track the red plaquettes), as long as its endpoints are pinned.
    }
    \label{fig:single-flux-threading}
\end{figure}

Consider initializing the qubits in the code space, and start running the error model for $t \in [0, \tf]$.
We introduce an additional ancilla qubit $\sigma_{\rm anc}$, which is initialized in the state $\ket{0}$ or $\ket{1}$ with equal probability.
This ancilla qubit does not experience any bitflip errors.
We choose a particular stabilizer $\qs_0$ and make the following modification,
\begin{align}
    \pauliz_{\qs_0} = \prod_{\bfr \in \qs_0} \pauliz_\bfr 
    \quad \to \quad
    \pauliz_{\sigma_{\rm anc}} \cdot \prod_{\bfr \in \qs_0} \pauliz_\bfr.
\end{align}
All other stabilizers are unchanged.
At $t=0$ and $t=\tf$, we assume that all stabilizer measurements are perfect, except for $\qs_0$.
Between $t=0$ and $t=\tf$, there are bitflip errors and syndrome errors can occur on any stabilizer measurement, as usual.
Now, the task of the decoder is to predict the state of $\sigma_{\rm anc}$.

The state of $\sigma_{\rm anc}$ is a conserved quantity, and can in principle be recovered by multiplying all the stabilizer measurements in the limit $\km \to \infty$.
Our mapping to stat mech models (see Sec.~\ref{sec:stat_mech_model}) immediately applies here, allowing us to relate the success probability of this stability experiment to $\avg{\sigma_{\rm anc}}$ as follows,
\begin{align}
    \mathbb{P}^{\sigma_{\rm anc}}_{\decoder^{\rm ML}} = 1-\epsilon(\tf) \quad \Rightarrow \quad 
    \avg{\sigma_{\rm anc}} \geq 1-2\epsilon(\tf),
\end{align}
where the expectation value is taken with respect to the following stat mech model
\begin{widetext}
\begin{align}
    \label{eq:def_Z_single_flux}
    Z =&\ \tr_{\sigma, \sigma_{\rm anc}} \exp{
        \sum_{t=0}^{\tf}\left[ \km
        \left(
            \sigma_{\rm anc} \cdot \prod_{\bfr \in \qs_0} \sigma_{\bfr, t} + \sum_{\qs \neq \qs_0} \prod_{\bfr \in \qs} \sigma_{\bfr, t}
        \right)
        + \kbf \sum_{\bfr} \sigma_{\bfr, t+1} \sigma_{\bfr, t}
        \right]
    } \nn
    \equiv&\ Z_{\sigma_{\rm anc}=+1} + Z_{\sigma_{\rm anc}=-1}.
\end{align}
\end{widetext}
Here, the boundary conditions are $\sigma_{\ti} = +1$, and $\km \to \infty$ at $t=\ti, \tf$ for all $\qs \neq \qs_0$.
The partition function $Z_{\sigma_{\rm anc}=+1}$ is the same as in Eq.~\eqref{eq:Z_clean_spin_model} and has transfer matrix generated by $H_{\rm q}$, whereas $Z_{\sigma_{\rm anc}=-1}$ has a column on $\qs_0$ of antiferromagnetic couplings $\km \to -\km$, as we illustrate in Fig.~\ref{fig:single-flux-threading}.
Compared to Figs.~\ref{fig:stability-experiment},\ref{fig:stability-experiment-2pt}, the antiferromagnetic coupling introduces a single flux line, rather than a flux loop.
Again defining $e^{-\delta F} = Z_{\sigma_{\rm anc}=-1}/Z_{\sigma_{\rm anc}=+1}$, we have $\avg{\sigma_{\rm anc}} = \frac{1-e^{-\delta F}}{1+e^{-\delta F}}$, and by similar reasoning (see Appendix~\ref{sec:pfail_upperbound_MW})
\begin{align}
    \label{eq:pfail_upperbound_single_flux_threading}
    \epsilon(\tf) = O\left( [(w+1) \cdot \widetilde{q}]^{\tf} \right)
    \quad \Rightarrow \quad 
    \delta F = \Omega(\xi^{-1} \tf).
\end{align}
We obtain a finite lower bound to the gap of the magnetic anyon, for the Hamiltonian $H_{\rm q}$ without modification.

When naively extending this construction  to the 3D toric code, we see that the prediction of $\sigma_{\rm anc}$ always succeeds after $O(1)$ rounds using a majority vote decoder as long as $p_{\rm m} < 1/2$, due to local relations of stabilizers, see footnote~\ref{fn:majority_vote}.
Constructions in Fig.~\ref{fig:hardwall-2pt-1DREP-3DTC}(b), where the Hamiltonian is modified on the hard wall, can be used to avoid such situations.
Therefore, the flux threading construction is somewhat specific to 2D codes, as it requires the presence of a global relation, as well as the absense of local relations.

\subsection{Threading two fluxes\label{sec:stability_hyperbolic_surface_code_two_fluxes}}

\begin{figure}[t]
    \centering
    \includegraphics[width=\linewidth]{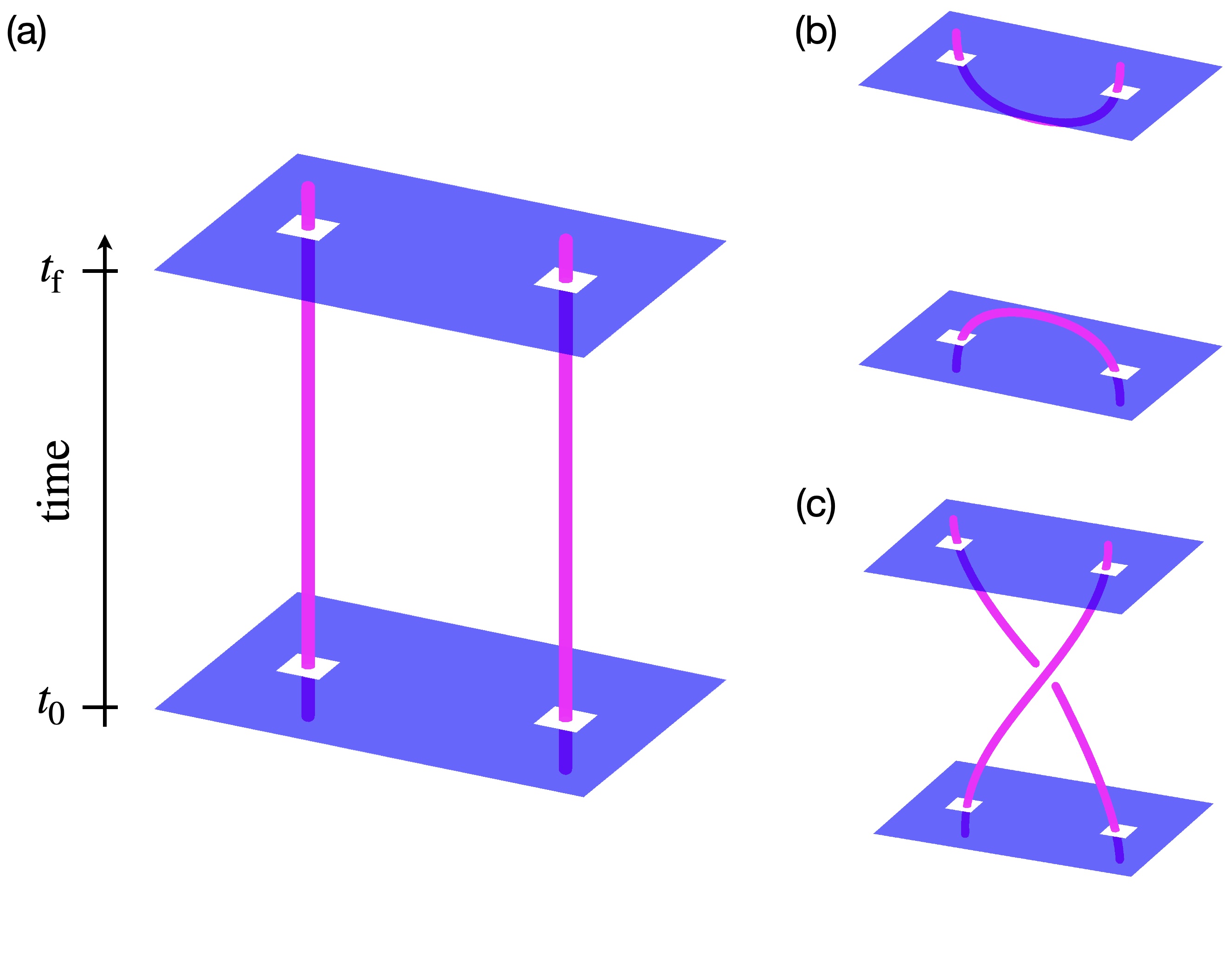}
    \caption{``Double flux threading'' stability experiment for 2D toric codes.
    Here, we are only plotting configurations of $Z_{\sigma_{\rm anc}=-1}$. The antiferromagnetic couplings always go straight in the temporal direction (similarly to Eq.~\eqref{eq:def_Z_single_flux} and Fig.~\ref{fig:single-flux-threading}(b)), which we omit drawing.
    The flux lines (magenta) represent unsatisfied plaquette terms (which cost energy).
    Contribution from (b) and (c) vanish when taking the distance between $\qs_{1,2}$ to infinity.
    }
    \label{fig:double-flux-threading}
\end{figure}

The single-anyon state discussed above is not within the Hilbert space of the qubits, as the anyons are guaranteed to have even total parity.
Here, we devise a different experiment that measures the gap of a two-anyon state, and directly connect the success of decoding to the spectrum of the transfer matrix.
The construction is almost identical to the one above, except that we now couple the ancilla $\sigma_{\rm anc}$ to two stabilizers $\qs_{1,2}$, as follows,
\begin{align}
    \pauliz_{\qs_{1,2}} = \prod_{\bfr \in \qs_{1,2}} \pauliz_\bfr 
    \quad \to \quad
    \pauliz_{\sigma_{\rm anc}} \cdot \prod_{\bfr \in \qs_{1,2}} \pauliz_\bfr.
\end{align}
The expectation value $\sigma_{\rm anc}$ can be similarly written as
\begin{align}
    \avg{\sigma_{\rm anc}} = \frac{1-e^{-\delta F}}{1+e^{-\delta F}}, 
    \quad 
    e^{-\delta F} = \frac{Z_{\sigma_{\rm anc}=-1}}{Z_{\sigma_{\rm anc}=+1}},
\end{align}
where $Z_{\sigma_{\rm anc}=+1}$ is identical to the one in Eq.~\eqref{eq:def_Z_single_flux}, and in $Z_{\sigma_{\rm anc}=-1}$ there are two columns of antiferromagnetic couplings $\km \to -\km$, as we illustrate in Fig.~\ref{fig:double-flux-threading}.
In a manner similar to Eq.~\eqref{eq:discrete_time_path_integral}, we can write
\begin{equation}
\begin{split}
    Z_{\sigma_{\rm anc}=+1} \propto&\, \sum_{\sigma_{\tf}}  \bra{\sigma_\tf} 
    \Pi_\pauliz
    \Pi_\paulix
     \cdot e^{-\tf H_{\rm q}}
    \cdot
    \Pi_\paulix
    \ket{\mathbf{0}}, \\
    Z_{\sigma_{\rm anc}=-1} \propto&\, \sum_{\sigma_{\tf}}  \bra{\sigma_\tf} 
    \Pi_\pauliz
    \Pi_\paulix
     \cdot e^{-\tf H'_{\rm q}}
    \cdot
    \Pi_\paulix
    \ket{\mathbf{0}},
\end{split}
\end{equation}
where $H'_{\rm q} = \mathcal{O} H_{\rm q} \mathcal{O}$, with $\mathcal{O}$ is an ``anyon creation'' operator composed of Pauli $\paulix$ matrices that anticommutes with $\pauliz_{\qs_{1,2}}$ but commutes with all other stabilizers.
Using Eq.~\eqref{eq:boundary_state_projection}, we obtain
\begin{align}
    e^{-\delta F} = 
    \frac{
        \sum_{\bm{\ell}\in \{0,1\}^{[K]}}
        \bra{\mathbf{0}^\paulix_L} 
            \mathcal{O}
            \cdot
            e^{-\tf H_{\rm q}}
            \cdot
            \mathcal{O}
        \ket{\bm{\ell}^\paulix_L}
    }
    {
        \sum_{\bm{\ell}\in \{0,1\}^{[K]}}
        \bra{\mathbf{0}^\paulix_L} 
            e^{-\tf H_{\rm q}}
        \ket{\bm{\ell}^\paulix_L}
    }.
\end{align}
Observing that all terms in the numerator commute with $\paulix$ logical operators, only terms diagonal in the code space survive,
\begin{align}
e^{-\delta F} &= 
    \frac{
        \bra{\mathbf{0}^\paulix_L} 
            \mathcal{O}
            \cdot
            e^{-\tf H_{\rm q}}
            \cdot
            \mathcal{O}
        \ket{\mathbf{0}^\paulix_L} 
    }
    {
        \bra{\mathbf{0}^\paulix_L} 
            e^{-\tf H_{\rm q}}
        \ket{\mathbf{0}^\paulix_L} 
    } \nn
    &\equiv
    \frac{
        \bra{\Psi} 
            e^{-\tf H_{\rm q}}
        \ket{\Psi} 
    }
    {
        \bra{\mathbf{0}^\paulix_L} 
            e^{-\tf H_{\rm q}}
        \ket{\mathbf{0}^\paulix_L} 
    }.
\end{align}
Here we define $\ket{\Psi} = \mathcal{O} \ket{\mathbf{0}^\paulix_L}$.
Performing a spectral decomposition of $H_{\rm q}$, we have
\begin{widetext}
\begin{align}
    e^{-\delta F} \geq \frac{1}{
        \bra{\mathbf{0}^\paulix_L} 
        \Pi_\Omega^{}
        \ket{\mathbf{0}^\paulix_L} }
        \(
            \bra{\Psi} 
                \Pi_\Omega^{}
            \ket{\Psi} 
            + 
            e^{-\Delta \cdot \tf}
            \bra{\Psi} 
                (1-\Pi_\Omega^{})
            \ket{\Psi} 
        \)
        \equiv A + B \cdot e^{-\Delta \cdot \tf}.
\end{align}
\end{widetext}
Here, $A, B \geq 0$ are constants that do not depend on $\tf$, but they may depend on $N$.
On the other hand, we show in Appendix~\ref{sec:pfail_upperbound_MW} that when the distance between $\qs_{1,2}$ diverges, the failure probability of predicting $\sigma_{\rm anc}$ has an explicit upper bound (see Eq.~\eqref{eq:pfail_upperbound_doule_flux_threading_dennis} of Appendix~\ref{sec:pfail_upperbound_MW})
\begin{align}
    \label{eq:pfail_upperbound_doule_flux_threading}
    \epsilon(\tf) = O\left( [(w+1) \cdot \widetilde{q}]^{2 \tf} \right).
\end{align}
Comparing this with Eq.~\eqref{eq:pfail_upperbound_single_flux_threading}, we see that the exponent is doubled (from $\tf$ to $2\tf$).
This can be intuitively understood as threading two fluxes instead of one, and when the distance between $\qs_{1,2}$ is infinite, the two fluxes are essentially independent of each other. 
Using
\begin{align}
    1 - e^{-\delta F} \geq \avg{\sigma_{\rm anc}} \geq 1-2\epsilon(\tf),
\end{align}
we conclude that as $\tf \to \infty$
\begin{align}
    A + B \cdot e^{-\Delta \cdot \tf} \leq 2\epsilon(\tf)
    \quad \Rightarrow \quad
    \Delta \geq \xi^{-1}.
\end{align}
We have thus shown a finite gap to excitations in the sector of $\ket{\mathbf{0}^\paulix_L}$ (where all logical $\paulix$ operators have eigenvalue $+1$).\footnote{\label{fn:gap_in_other_sectors}Additional assumptions must be made in order to conclude that there must also be a finite gap in all other sectors.}

As we have emphasized, it is necessary in this case to take the distance between $\qs_{1,2}$ to infinity, so that the two fluxes inserted do not interact with each other.
When the distance is finite, we get $O(1)$, $\tf$-independent contributions to $\epsilon(\tf)$ on top of the exponential-in-$\tf$ decay, making it difficult to extract the correlation time.
These contributions are precisely the disconnected part of the correlation function mentioned in footnote~\ref{fn:connected_correlation_function}, but they do not affect the gap in any way.
Taking the infinite-distance limit between $\qs_{1,2}$ is one way of subtracting off the disconnected part; the subtraction can also be achieved by imposing the hard wall (see footnote~\ref{fn:connected_correlation_function}) or by threading a single flux, as we have previously discussed.

\section{Explicit upper bound on decoding failure probabilities for LDPC codes 
\label{sec:pfail_upperbound_MW}}


We derive explicit upper bounds for failure probabilities of decoding experiments discussed in Sec.~\ref{sec:decoding_success_physical_consequences}.
Our proofs follow the combinatorial approaches in~\cite{DKLP2001topologicalQmemory, pryadko2014, breuckmann2018phd}: 
if a minimal weight decoder makes an error, it contains physical errors along at least half a logical error. By counting the types of physical errors, we can upper bound the failure probability of the minimum weight decoder.
We formulate the proofs in a way that is applicable to both memory and stability experiments.

Given the spacetime lattice of qubits (or classical spins of the stat mech models) $\mathcal{G} = [N] \times [\tf]$ ($\tf \in \mathbb{Z}$) and the $Z$ checks $\qs$ of the code, we define a classical linear code as follows.
\begin{itemize}
\item
To each $Z$ check $\qs$ we associate $\tf$ classical checks, labeled $\cc = (\qs, t+1/2)$ for $0 \leq t \leq \tf-1$.
\item 
To each $Z$ check $\qs$ we associate $\tf+1$ binary bits, labeled $\vertex = (\qs, t)$ for $0 \leq t \leq \tf$.
We say $\vertex=(\qs, t)$ is incident to $\cc=(\qs', t'+1/2)$ if $\qs = \qs'$ and $t=t'$ or $t=t'+1$.
We write this as $\vertex \in \cc$.
\item
To each qubit $\bfr$ we associate $\tf$ bits, labeled $\vertex = (\bfr, t+1/2)$ for $0 \leq t \leq \tf-1$.
We say $\vertex=(\bfr, t+1/2)$ is incident to $\cc=(\qs', t'+1/2)$ if $t=t'$ and $\bfr$ belongs to the check $\qs'$ in the quantum code.
We write this as $\vertex \in \cc$.
\end{itemize}
The bits of this classical code are in one-to-one correspondence with possible locations of bitflip and syndrome errors.
A real history can therefore be represented by a binary vector $\error$, where $\error_\vertex$ indicates whether an error occurs on location $\vertex$ ($\error_\vertex = 1$) or not ($\error_\vertex = 0$).\footnote{In this Appendix we switch our notation from Ising spins (taking value $\pm 1$) to binary vectors (taking value in $\{0, 1\}$).}

If the maximum degree of checks in the quantum code is $w$, then the maximum degree of checks in the classical code is $w+2$.

The syndrome of the quantum measurement at $(\qs, t)$ induced by error $\error$ is given by
\begin{align}
    s_{\qs, t} = \error_{\qs, t} + \sum_{\bfr \in \qs} \sigma(\error')_{\bfr, t},
\end{align}
where $\sigma(\error')_{\bfr, t}$ is the cumulative bitflip error on $\bfr$ up til time $t$,
\begin{align}
    \label{eq:def_proj_error}
    \sigma(\error')_{\bfr, t} = \sum_{t=0}^{\tf-1} \error'_{\bfr, t+1/2}.
\end{align}
The additions are performed in $\mathbb{Z}_2$.
Up to our notational change from $\{+1,-1\}$ to $\{0,1\}$,
these definitions of $\sigma, s$ agree with those in Eq.~\eqref{eq:def_sigma_s}, and $\error$ is the same as defined in Eq.~\eqref{eq:def_eta}.
With boundary conditions $\error_{\qs, t=\ti} = \error_{\qs, t=\tf} = 0$, the measurement syndromes are in correspondences with syndromes of the classical code.
As one can check, for $\cc = (\qs, t+1/2)$, we have
\begin{align}
    \label{eq:classical_syndrome}
    \synd_\cc = \sum_{\vertex  \in \cc} 
    \error_\vertex
    = s_{\qs, t} + s_{\qs, t+1}.
\end{align}
We write this as $\synd = \partial \error$.
Therefore, two errors $\error$, $\error'$ are compatible to the same syndrome if  and only if $\partial \error = \partial \error'$.

We define a ``minimal weight'' (MW) decoder to be one taking as input the syndrome $\synd$, and outputs the an error history $\error$ with minimal Hamming weight,
\begin{align}
    \label{eq:def_MW_decoder}
    \widehat{\error}_{\rm MW}(\synd) = \mathrm{argmin}_{\error: \partial \error = \synd} |\error|.
\end{align}
Notice that this is weaker than what is usually meant by a (properly weighted) minimal weight decoder(which is itself already sub-optimal), where the errors are weighted by their log likelihood, so that the decoder maximizes the error probability of its output.
Despite it being weaker, we prove it has a threshold.
Below, we denote by $\mathbb{P}_{\rm fail}$ the failure probability of the MW decoder (see Lemma~\ref{lemma:1-1})
\begin{align}
    \mathbb{P}_{\rm fail} \equiv \epsilon^{\rm MW}(N) \equiv 1 - \mathbb{P}^{\rm all}_{\decoder^{\rm MW}}.
\end{align}

\subsection{Memory experiment}


We upper bound the failure probability of our MW decoder on Inference Problem~\ref{IP1}.
We say the decoder fails if \textit{any} of the $K$ logical qubit fails,
\begin{align}
\label{eq:P_fail_spatial_MW}
    \mathbb{P}_{\rm fail}
    =&\, \sum_{\error} \mathbb{P}(\error)  \cdot \mathbb{1}_{\exists j \in [K], L_j (\error + \widehat{\error}_{\rm MW}(\partial \error)) = 1}. 
\end{align}
Define $\error' = \error + \widehat{\error}_{\rm MW}(\partial \error)$, where the addition is understood to be element-wise.
By definition, we have $\partial \error' = 0$.
We also have by definition in Eq.~\eqref{eq:def_LJ}
\begin{align}
    \label{eq:LJ_error_prime}
    L_j(\error') = \sum_{\bfr \in L_j} \sigma(\error')_{\bfr, \tf}
    = \sum_\alpha \sum_{\bfr \in L_j} \sigma(\error'_\alpha)_{\bfr, \tf}
    \equiv
    \sum_\alpha L_j(\sigma(\error'_\alpha)_\tf).
\end{align}

For errors that satisfy $\partial \error' = 0$, we write 
\begin{align}
    \label{eq:error_disjoint_union}
    \error' = \bigsqcup_\alpha \error'_\alpha
\end{align}
if (i) $\error' = \sum_\alpha \error'_\alpha$, (ii) for all $\vertex$ there is at most one $\alpha$ for which $(\error'_{\alpha})_\vertex = 1$, and (iii) $\partial \error'_\alpha = 0$ for all $\alpha$.
We see that $\error'$ is a disjoint union of $\error'_\alpha$, when viewed as sets of vertices.
We say $\error'$ is \textit{reducible} if it admits a decomposition in Eq.~\eqref{eq:error_disjoint_union} with at least two terms, and 
irreducible otherwise.
For any finite length $\error'$ with $\partial \error' = 0$, there exists a decomposition of $\error'$ into disjoint, irreducible ones as in Eq.~\eqref{eq:error_disjoint_union}.

\begin{lemma}
\label{lemma:D1}
Suppose $\partial \error' = 0$ and $L_j(\error') = 1$ for some $j \in [K]$.
Consider any decomposition of $\error'$ into the form of Eq.~\eqref{eq:error_disjoint_union} where each $\error'_\alpha$ is irreducible.
There exists an $\alpha$ for which $|\error'_\alpha| \geq D$, where $D$ is the code distance of the quantum code.
\end{lemma}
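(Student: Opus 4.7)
The plan is to use the linearity identity $L_j(\error') = \sum_\alpha L_j(\sigma(\error'_\alpha)_\tf)$ from Eq.~\eqref{eq:LJ_error_prime} together with the hypothesis $L_j(\error') = 1$ to pick out, by parity in $\mathbb{Z}_2$, a distinguished index $\alpha^*$ with $L_j(\sigma(\error'_{\alpha^*})_\tf) = 1$. For this $\alpha^*$, I will argue that the final cumulative bitflip pattern $\sigma(\error'_{\alpha^*})_\tf$ is an element of the normalizer of the $Z$-stabilizer group that is not itself an $X$-stabilizer, and therefore has weight at least $D$ by the definition of the quantum code distance. A simple counting step will then transfer this lower bound to $|\error'_{\alpha^*}|$.

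To implement the normalizer step, I would use condition (iii) of the decomposition, $\partial \error'_{\alpha^*} = 0$, together with the temporal boundary conditions $(\error'_{\alpha^*})_{\qs,\ti} = (\error'_{\alpha^*})_{\qs,\tf} = 0$. These boundary conditions are inherited from the analogous ones on $\error'$ through property (ii) of the disjoint union: since at most one $\alpha$ has $(\error'_\alpha)_\vertex = 1$ at any vertex $\vertex$, and the syndrome boundary bits of $\error'$ vanish, they must vanish for each $\error'_\alpha$ individually. Evaluating the classical syndrome Eq.~\eqref{eq:classical_syndrome} and solving it as a recursion in $t$ starting from $(\error'_{\alpha^*})_{\qs,\ti} = 0$ yields
\begin{align}
    (\error'_{\alpha^*})_{\qs,t} = \sum_{\bfr \in \qs} \sigma(\error'_{\alpha^*})_{\bfr,t}\quad \forall t,
\end{align}
and setting $t=\tf$ together with $(\error'_{\alpha^*})_{\qs,\tf}=0$ gives $\sum_{\bfr\in\qs}\sigma(\error'_{\alpha^*})_{\bfr,\tf} = 0$ for every $Z$-check $\qs$. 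Thus the $X$-type operator supported on $\sigma(\error'_{\alpha^*})_\tf$ commutes with all $Z$-stabilizers, i.e. it lies in the normalizer of $S_Z$. Since $L_j(\sigma(\error'_{\alpha^*})_\tf) = 1$, it anticommutes with the $j$-th $Z$-logical representative and so is not an $X$-stabilizer. By the definition of the code distance, $|\sigma(\error'_{\alpha^*})_\tf| \geq D$.

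The final counting step is the easiest: the number of bitflip error events in $\error'_{\alpha^*}$, namely $|\{(\bfr, t+1/2) : (\error'_{\alpha^*})_{\bfr,t+1/2} = 1\}|$, is at least the number of qubits flipped an odd number of times, which is exactly $|\sigma(\error'_{\alpha^*})_\tf|$ by definition of $\sigma$ in Eq.~\eqref{eq:def_proj_error}. Since $|\error'_{\alpha^*}|$ also counts measurement error events, we get $|\error'_{\alpha^*}| \geq |\sigma(\error'_{\alpha^*})_\tf| \geq D$, as claimed.

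The main subtlety, rather than a genuine obstacle, is bookkeeping of the boundary conditions: one must check that the decomposition $\error' = \bigsqcup_\alpha \error'_\alpha$ preserves them so that the inductive solution of $\partial \error'_{\alpha^*} = 0$ can be run and terminated at $t=\tf$. I note that irreducibility of the $\error'_\alpha$ is not actually used in this existence statement, which holds for any decomposition into cycles; presumably it is assumed here so that the resulting pieces can be enumerated and reweighted by $\widetilde{q}^{|\error'_\alpha|}$ in the subsequent union-bound estimate of the failure probability.
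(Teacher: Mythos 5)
Your proof is correct and follows essentially the same route as the paper: both arguments use the cycle condition $\partial \error'_\alpha = 0$ together with the perfect-SPAM temporal boundary conditions to show that $\sigma(\error'_\alpha)_\tf$ commutes with every $Z$-check, so that any piece contributing nontrivially to $L_j$ is a bona fide $X$-logical of weight at least $D$, and then transfer the bound to $|\error'_\alpha|$ via $|\error'_\alpha| \geq |\sigma(\error'_\alpha)_\tf|$. The only difference is organizational --- the paper argues by contradiction (assume all pieces have weight $<D$, conclude $L_j(\error')=0$ for all $j$) whereas you directly extract the distinguished piece from the $\mathbb{Z}_2$ parity of the sum in Eq.~\eqref{eq:LJ_error_prime} --- and your closing observation that irreducibility is not needed for this existence statement but only for the subsequent cluster-enumeration bound (Lemma~\ref{lemma:D3}) is accurate.
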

\begin{proof}
We prove by contradiction.
Assuming there is a decomposition such that $|\error'_\alpha| < D$ for all $\alpha$.
For any $\qs$ and $\alpha$, we have by Eqs.~(\ref{eq:def_proj_error},\ref{eq:classical_syndrome}), boundary conditions $\error'_{\qs, t=\ti} = \error'_{\qs, t=\tf} = 0$, and the condition $\partial \error'_\alpha = 0$, that
\begin{align}
    \sum_{\bfr \in \qs} \sigma(\error'_\alpha)_{\bfr,\tf}
    = \sum_{t=0}^{\tf-1} (\partial \error'_\alpha)_{\cc=(\qs, t+1/2)} = 0.
\end{align}
Therefore, $\sigma(\error'_\alpha)_\tf$ forms a logical $X$ operator.
As $|\sigma(\error'_\alpha)| \leq |\error'_\alpha| < D$, it must be a trivial logical operator. 
We have $L_j(\sigma(\error'_\alpha)_\tf) = 0$ for all $j$, and hence $L_j(\error') = 0$ for all $j$ by Eq.~\eqref{eq:LJ_error_prime}.
This contradicts with our assumption.
\end{proof}


\begin{lemma}
\label{lemma:D2}
Let $\error' = \error + \widehat{\error}_{\rm MW}(\partial \error)$, and let $\cycle \subseteq \error'$ be an irreducible error guaranteed by Lemma~\ref{lemma:D1}, for which $|\cycle| \geq D$.
We have (i) $\error \cap \widehat{\error}_{\rm MW} \cap \cycle = \emptyset$, and (ii) $|\error \cap \cycle| \geq |\widehat{\error}_{\rm MW} \cap \cycle|$.
Therefore, $|\error \cap \cycle| + |\widehat{\error}_{\rm MW} \cap \cycle| = |\cycle|$, and $|\error \cap \cycle| \geq |\cycle| / 2$.
\end{lemma}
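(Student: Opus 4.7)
The plan is to use that addition of error vectors is performed modulo 2 (element-wise), together with the optimality condition built into the definition of $\widehat{\error}_{\rm MW}$ in Eq.~\eqref{eq:def_MW_decoder}.

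For (i), I would observe that $\vertex \in \error'$ precisely means $\error_\vertex + (\widehat{\error}_{\rm MW})_\vertex = 1 \pmod 2$, i.e.\ exactly one of $\error, \widehat{\error}_{\rm MW}$ contains $\vertex$. Hence no vertex can lie in $\error \cap \widehat{\error}_{\rm MW}$ and in $\error'$ simultaneously. Since $\cycle \subseteq \error'$ by hypothesis, this yields $\error \cap \widehat{\error}_{\rm MW} \cap \cycle = \emptyset$. Combining this disjointness with $\cycle \subseteq \error' = (\error \cup \widehat{\error}_{\rm MW}) \setminus (\error \cap \widehat{\error}_{\rm MW})$, every vertex of $\cycle$ lies in exactly one of $\error$ or $\widehat{\error}_{\rm MW}$, giving the set-theoretic disjoint decomposition $\cycle = (\error \cap \cycle) \sqcup (\widehat{\error}_{\rm MW} \cap \cycle)$ and thus $|\error \cap \cycle| + |\widehat{\error}_{\rm MW} \cap \cycle| = |\cycle|$.

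For (ii), the key idea is to use $\cycle$ as a ``modification'' to $\widehat{\error}_{\rm MW}$ and invoke minimality. Consider $\widehat{\error}' \equiv \widehat{\error}_{\rm MW} + \cycle$. Because $\cycle$ is one of the irreducible components of $\error'$ supplied by Lemma~\ref{lemma:D1}, it satisfies $\partial \cycle = 0$, so $\partial \widehat{\error}' = \partial \widehat{\error}_{\rm MW} = \partial \error$, meaning $\widehat{\error}'$ is a competitor to $\widehat{\error}_{\rm MW}$ in the argmin of Eq.~\eqref{eq:def_MW_decoder}. Using the disjoint decomposition of $\cycle$ from (i),
\begin{align*}
    |\widehat{\error}'| = |\widehat{\error}_{\rm MW}| - |\widehat{\error}_{\rm MW} \cap \cycle| + |\error \cap \cycle|.
\end{align*}
Minimality of $\widehat{\error}_{\rm MW}$ forces $|\widehat{\error}'| \geq |\widehat{\error}_{\rm MW}|$, which immediately gives $|\error \cap \cycle| \geq |\widehat{\error}_{\rm MW} \cap \cycle|$. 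Combined with the decomposition above, $|\error \cap \cycle| \geq |\cycle|/2$.

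I expect no serious obstacles here; the argument is essentially bookkeeping. The one subtlety worth checking carefully is that $\cycle$ genuinely satisfies $\partial \cycle = 0$, so that the swap to $\widehat{\error}'$ is legal; this is why Lemma~\ref{lemma:D1} was phrased to guarantee each irreducible piece $\error'_\alpha$ has vanishing syndrome, and it is precisely this property (rather than any further structure of ``irreducibility'') that Lemma~\ref{lemma:D2} uses.
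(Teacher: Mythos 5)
Your proof is correct and essentially identical to the paper's: both establish (i) by noting $\error'$ is the symmetric difference of $\error$ and $\widehat{\error}_{\rm MW}$, and both establish (ii) by exhibiting $\widehat{\error}_{\rm MW}+\cycle$ as a competitor with the same syndrome and invoking minimality. Note that you correctly write the minimality inequality as $|\widehat{\error}'| \geq |\widehat{\error}_{\rm MW}|$, whereas the paper's proof has a sign typo ($\leq$) at this step even though its intermediate weight computation and final conclusion agree with yours.
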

\begin{proof}
For any $\vertex \in \error \cap \widehat{\error}_{\rm MW}$, we have $\error'_{\vertex} = 0$, so $\vertex \notin \cycle \subseteq \error'$.
To show $|\error \cap \cycle| \geq |\widehat{\error}_{\rm MW} \cap \cycle|$, define 
\begin{align}
    \error''_\vertex = \begin{cases}
        (\widehat{\error}_{\rm MW})_\vertex, &\vertex \notin \cycle \\
        \error_\vertex,       &\vertex \in \cycle
    \end{cases}.
\end{align}
We can verify that $\error'' = \widehat{\error}_{\rm MW} + \cycle$, so that $\partial \error'' = \partial \widehat{\error}_{\rm MW} = \partial \error$.
Therefore, $\error''$ is a valid correction given the syndrome of $\error$.
In addition,
\begin{align}
    |\error''| = |\widehat{\error}_{\rm MW}\setminus \cycle| + |\error \cap \cycle| = |\widehat{\error}_{\rm MW}| - |\widehat{\error}_{\rm MW} \cap \cycle| + |\error \cap \cycle|.
\end{align}
As $\widehat{\error}_{\rm MW}$ is the minimal weight error by construction, we must have $|\error''|  \leq |\widehat{\error}_{\rm MW}|$, thus $|\error \cap \cycle| \geq |\widehat{\error}_{\rm MW} \cap \cycle|$.    
\end{proof}

Puttting together Lemmas~\ref{lemma:D1},\ref{lemma:D2}, we can bound Eq.~\eqref{eq:P_fail_spatial_MW} as follows
\begin{align}
    \label{eq:D11}
    \mathbb{P}_{\rm fail}
    =&\, \sum_{\error} \mathbb{P}(\error)  \cdot \mathbb{1}_{\exists j \in [K], L_j (\error + \widehat{\error}_{\rm MW}(\partial \error)) = 1} \nn
     \leq&\, \sum_{\error} \mathbb{P}(\error) \cdot  \mathbb{1}_{\exists \text{ irreducible } \cycle \subseteq \error', |\cycle| \geq D, \partial \cycle = 0} \nn
     \leq& 
     \sum_{\text{ irreducible } \cycle, |\cycle| \geq D, \partial \cycle = 0}
     \nn
     &\quad\quad\quad\cdot
     \sum_{\error \cap \cycle: |\error \cap \cycle| \geq |\cycle|/2} q^{|\error \cap \cycle|} (1-q)^{|\cycle| - |\error \cap \cycle|} \nn
     \leq&\, 
     \sum_{\text{ irreducible } \cycle, |\cycle| \geq D, \partial \cycle = 0}
     2^{|\cycle|} \cdot [q(1-q)]^{|\cycle|/2} \nn
     \equiv&\, 
     \sum_{\text{ irreducible } \cycle, |\cycle| \geq D, \partial \cycle = 0}
     \widetilde{q}^{|\cycle|}.
\end{align}
In the last step, we enumerate all subsets of $\cycle$ with size greater than $|\cycle|/2$, which we identify with $\error \cap \cycle$.
The number of such subsets is no more than $2^{|\cycle|}$.
We have defined $q \equiv \mathrm{max}\{p_{\rm bp}, p_{\rm m}\}$, $\widetilde{q} \equiv 2\sqrt{q(1-q)}$, and assumed $q < 1/2$.

Finally, we have the following Lemma from~\cite{pryadko2014}, based on a cluster-enumeration algorithm.
\begin{lemma}
\label{lemma:D3}
\begin{align}
    \label{eq:D12}
     \sum_{\text{ irreducible } \cycle, \partial \cycle = 0}
     \mathbb{1}_{|\cycle| \leq \ell} 
     \leq
     (N \cdot \tf) (w+1)^{\ell-1}.
\end{align}
\end{lemma}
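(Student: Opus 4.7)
\emph{Proof plan.} Lemma~\ref{lemma:D3} is a Peierls-style cluster-enumeration bound of the kind developed in~\cite{DKLP2001topologicalQmemory, pryadko2014, breuckmann2018phd}. The plan is to first establish two structural properties of any irreducible $\cycle$ with $\partial \cycle = 0$, and then perform a depth-first-search counting of the resulting connected subsets.

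\emph{Structural properties.} (i) I would argue that $\cycle$ must contain at least one qubit-error bit. If instead $\cycle$ contained only syndrome-error bits and included some $(\qs, t_0)$ with $\ti < t_0 < \tf$, then the check $\cc = (\qs, t_0 - 1/2)$ would force $(\qs, t_0 - 1) \in \cycle$, since the two syndrome bits incident to $\cc$ are the only bits in $\cc$ that could lie in $\cycle$ under this assumption. Iterating temporally in both directions would push $\cycle$ onto the temporal boundary and violate $\error_{\qs, \ti} = \error_{\qs, \tf} = 0$. (ii) I would then argue that $\cycle$ is connected in the bit-adjacency graph $\mathcal{H}$ whose vertices are the classical bits and whose edges join pairs sharing a common classical check. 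Otherwise, a disconnection $\cycle = \cycle_1 \sqcup \cycle_2$ would have $\partial \cycle_1 = \partial \cycle_2 = 0$ separately (no classical check could touch bits of both pieces), contradicting irreducibility.

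\emph{Counting.} Having established these, I would fix a root qubit-error bit $\vertex_0$ and enumerate the connected $\mathcal{H}$-clusters of size $n$ through $\vertex_0$ using a canonical depth-first walk along the Tanner edges induced by $\cycle$. Each of the $n-1$ growth steps appends a new bit by traversing an incident classical check; since each check has degree at most $w+2$ and one of its incident bits is the bit we extended from, there are at most $w+1$ choices for the new bit. This yields at most $(w+1)^{n-1}$ clusters of size exactly $n$ through $\vertex_0$. Multiplying by the $N \tf$ candidate roots in step (i) and summing over $n \leq \ell$ (absorbing the geometric series into the top term via $w \geq 1$) produces the claimed bound $(N \tf)(w+1)^{\ell-1}$.

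\emph{Main obstacle.} The technical crux is the cluster enumeration: a single classical check can contain several bits of $\cycle$, so a naive DFS may re-enter checks it has already visited, and one must fix a canonical encoding so that each cluster is counted exactly once with effective branching $w+1$ (rather than the full bit- or check-degrees in $\mathcal{H}$, which are larger). The cleanest bookkeeping, in the spirit of~\cite{pryadko2014}, encodes the walk as an ordered sequence of Tanner edges rather than of bits, thereby separating bit visits from check visits and making the $w+1$ bound apply uniformly at each growth step.
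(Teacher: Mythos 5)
Your high-level route — Peierls-style cluster enumeration anchored on qubit-error bits — matches the paper's, and your two structural observations are correct: the boundary conditions $\error_{\qs,\ti}=\error_{\qs,\tf}=0$ do rule out a purely syndrome-bit $\cycle$, and disconnection in the bit-adjacency graph would give a nontrivial decomposition with $\partial\cycle_\alpha=0$ on each piece, contradicting irreducibility. But you have correctly flagged, and not actually closed, the gap that carries the whole lemma: a canonical walk encoding that yields branching $w+1$ (not $w'(w+1)$ or worse, where $w'$ is the qubit-bit degree in the Tanner graph) and still guarantees every irreducible $\cycle$ of size $\le\ell$ is reached. Without specifying that rule, the count is not established.

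The paper's resolution is a concrete deterministic choice of which check to extend through. Fix an arbitrary global ordering of the classical checks once and for all. The search state is the current bit set $\mathsf{V}$; at each step, compute $\partial\mathsf{V}$ and extend through the \emph{first} violated check $\cc\in\partial\mathsf{V}$ under the fixed ordering, branching over the $\le w+1$ bits $\vertex'\in\cc\setminus\mathsf{V}$ (at least one bit of $\cc$ is already in $\mathsf{V}$ since $\cc$ is violated). Because the chosen $\cc$ depends only on $\mathsf{V}$ and the fixed ordering — not on the order in which bits were found or on which check brought them in — the branching factor is exactly $\le w+1$ per level, and the bit-degree $w'$ never enters. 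Correctness follows without invoking your connectivity step separately: for $\emptyset\neq\mathsf{V}\subsetneq\cycle$, irreducibility gives $\partial\mathsf{V}\neq 0$, and $\partial\cycle=0$ implies $|\cc\cap\cycle|$ is even while $|\cc\cap\mathsf{V}|$ is odd, so $\cc\cap(\cycle\setminus\mathsf{V})\neq\emptyset$ and the search always has a valid move inside $\cycle$. Finally, a minor conceptual correction: you do not need the walk to encode each cluster exactly once. For an upper bound it suffices that every valid $\cycle$ appears as \emph{at least one} leaf of the search tree; overcounting is harmless. Framing it as unique encoding makes the problem look harder than it is and obscures why the first-violated-check rule succeeds.
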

\begin{proof}
We fix an ordering of the classical checks $\cc$.
Consider the following recursive algorithm searching for valid $\cycle$'s.
The state of the search is uniquely specified by an ordered array of vertices, which we call $\mathsf{V}$.
\begin{itemize}
\item 
For each $\vertex = (\bfr, t+1/2)$ of the classical code, call the procedure $\mathtt{search}$ with argument $\mathsf{V} = \{\vertex\}$.
\item
Within the procedure $\mathtt{search}$ with argument $\mathsf{V}$,
\begin{itemize}
\item
If $\partial \mathsf{V} = \emptyset$, return $\mathsf{V}$ as a solution of $\cycle$.
\item
Otherwise, if $|\mathsf{V}| = \ell$, return failure.
\item 
Otherwise, pick the first violated check in $\partial \mathsf{V}$ (say $\cc$) according to our chosen ordering.
For each $\vertex' \in \cc \setminus \mathsf{V}$, call the procedure $\mathtt{search}$ recursively with argument $\mathsf{V} \cup \{\vertex'\}$.
\end{itemize}
\end{itemize}
It is easy to see that the algorithm explores at most $(N \cdot \tf) (w+1)^{\ell-1}$ possible $\mathsf{V}$'s, as there are at most $(N \cdot \tf)$ possible different initial conditions, and there are at most $w+1$ choices of $\vertex'$ at each step (recall that the maximum degree of the classical checks is $w+2$).

Finally, we note that any valid $\cycle$ will necessarily contain at least a classical bit of the form $\vertex = (\bfr, t+1/2)$.
Furthermore, the following properties are preserved throughout the algorithm for any valid $\cycle$,
\begin{itemize}
\item
    $\mathsf{V} \subseteq \cycle$.
    Hence, $\partial \mathsf{V} = \emptyset$ if and only if $\mathsf{V} = \cycle$, due to irreducibility of $\cycle$.
\item
    When $\mathsf{V} \subsetneq \cycle$, $\partial \mathsf{V} \neq \emptyset$.
    For the first violated check $\cc \in \partial \mathsf{V}$, we have
    $\cycle \cap (\cc \setminus \mathsf{V}) = \cc \cap (\cycle \setminus \mathsf{V}) \neq \emptyset$.
\end{itemize}
These guarantee that any valid $\cycle$ can be found by the search algorithm after at most $|\cycle|$ levels of recursion.
Therefore, there are fewer valid solutions than all solutions explored by the algorithm, which is Eq.~\eqref{eq:D12}.
\end{proof}

Eqs.~(\ref{eq:D11},\ref{eq:D12}) together imply
\begin{align}
\label{eq:pfail_memory_dennis}
    \mathbb{P}_{\rm fail}
    \leq &\, 
     \sum_{\ell = D}^{\infty} 
     (\widetilde{q})^{\ell}
     \sum_{\text{ irreducible } \cycle, \partial \cycle = 0}
     \mathbb{1}_{|\cycle| = \ell} \nn
     \leq &\, 
     (N \cdot \tf) (w+1)^{-1}.
     \sum_{\ell = D}^{\infty} 
     [(w+1)\widetilde{q}]^{\ell} \nn
     =&\, O((N \cdot \tf) \cdot [(w+1)\widetilde{q}]^D).
\end{align}

\subsection{Stability experiments}

We consider the ``hard wall'' stability experiment described in Sec.~\ref{sec:stability_hard_wall}.
The failure probability can be similarly evaluated as follows,
\begin{align}
\label{eq:P_fail_temporal_MW}
    \mathbb{P}_{\rm fail} 
    =&\, \sum_{\error} \mathbb{P}(\error)  \cdot \mathbb{1}_{\PRXQ{L_W}(\error \cdot \widehat{\error}_{\rm MW}(\error)) = 1},
\end{align}
where $\PRXQ{L_W}$ is now a product of stabilizers.
Here, the MW decoder only consider those errors obeying the hard wall constraint, as those violating the condition has a zero probability of occuring.
Equivalently, this can be formally realized by ``dropping out'' corresponding classical bits on the hard wall in the classical code.
They take the form $\vertex=(\bfr,t+1/2)$ for $\bfr \in W, t \in [0, \tf-1]$.
We similarly define $\error' = \error \cdot \widehat{\error}_{\rm MW}(\error)$.
By definition of the MW decoder Eq.~\eqref{eq:def_MW_decoder}, we have $\partial \error' = 0$.

Similarly to Eq.~\eqref{eq:def_proj_error}, we define the cumulative error at $t$ to be
\begin{align}
    \label{eq:def_proj_error_hard_wall}
    \sigma(\error')_{\bfr, t} = \sum_{t'=-\infty}^{t-1} \error'_{\bfr, t'+1/2}.
\end{align}
We have, by definition, for any $t \in [0, \tf]$
\begin{align}
    \label{eq:def_W_interms_of_proj_error}
    \PRXQ{L_W}(\error') = \sum_{\bfr \in W} \sigma(\error')_{\bfr, t}.
\end{align}
Its value is conserved during $t \in [0, \tf]$ (and therefore independent of $t$) due to the hard wall constraint.

\begin{lemma}
\label{lemma:D4}
Suppose $\partial \error' = 0$ and $\PRXQ{L_W}(\error') = 1$.
Consider any decomposition of $\error'$ into the form of Eq.~\eqref{eq:error_disjoint_union} where each $\error'_\alpha$ is irreducible and satisfy $\partial \error'_\alpha = 0$.
Consider a decomposition of $\PRXQ{L_W}$ into a minimal number of stabilizers,
\begin{align}
    \PRXQ{L_W} = \prod_{\ell=1}^{\mathrm{area}(W)} \pauliz_{\qs_\ell}
    \equiv
    \prod_{\qs \in W} \pauliz_\qs.
\end{align}
There exists an $\error'_\alpha$ such that for all $t\in[\tf]$, there exists $\qs \in W$ where $(\error'_\alpha)_{\qs, t} = 1$.
\end{lemma}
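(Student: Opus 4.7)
My plan is to reduce the lemma to establishing, for each irreducible piece $\error'_\alpha$ with $\partial \error'_\alpha = 0$, the single $t$-independent parity identity
\begin{align}
\PRXQ{L_W}(\error'_\alpha) \equiv \sum_{\qs \in W}(\error'_\alpha)_{\qs, t} \pmod 2
\end{align}
valid for every $t \in [\tf]$. Once this identity is in hand, the lemma drops out immediately: additivity of Eq.~\eqref{eq:def_W_interms_of_proj_error} over the disjoint union in Eq.~\eqref{eq:error_disjoint_union}, combined with the hypothesis $\PRXQ{L_W}(\error') = 1$, picks out at least one $\alpha$ with $\PRXQ{L_W}(\error'_\alpha) = 1$; the identity then forces the number of $\qs \in W$ with $(\error'_\alpha)_{\qs, t} = 1$ to be odd --- and in particular nonzero --- at every $t \in [\tf]$, which is precisely the claim.

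To derive the parity identity I would make two moves. First, I would unfold the decomposition $\PRXQ{L_W} = \prod_{\qs \in W}\pauliz_\qs$: since this product equals $\prod_{\bfr \in W}\pauliz_\bfr$, the qubit multiplicities $n_\bfr := |\{\qs \in W : \bfr \in \qs\}|$ satisfy $n_\bfr \equiv \mathbb{1}[\bfr \in W] \pmod 2$. Substituting into Eq.~\eqref{eq:def_W_interms_of_proj_error} and interchanging summations yields
\begin{align}
\PRXQ{L_W}(\error'_\alpha) = \sum_{\qs \in W}\sum_{\bfr \in \qs} \sigma(\error'_\alpha)_{\bfr, t} \pmod 2.
\end{align}
Second, I would evaluate the inner sum by telescoping the closure condition $\partial\error'_\alpha = 0$. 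For each check $\cc = (\qs, t'+1/2)$, Eq.~\eqref{eq:classical_syndrome} reads $\sum_{\bfr \in \qs}(\error'_\alpha)_{\bfr, t'+1/2} = (\error'_\alpha)_{\qs, t'} + (\error'_\alpha)_{\qs, t'+1} \pmod 2$. Summing over $t' < t$ telescopes the right-hand side (with trivial past boundary $(\error'_\alpha)_{\qs, -\infty} = 0$, since $\error'_\alpha$ has compact support) to $(\error'_\alpha)_{\qs, t}$, while the left-hand side collapses via Eq.~\eqref{eq:def_proj_error_hard_wall} to $\sum_{\bfr \in \qs}\sigma(\error'_\alpha)_{\bfr, t}$. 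Combining the two moves delivers the identity.

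I do not anticipate any serious obstacle. The hard-wall constraint plays no explicit role in the derivation --- it enters only insofar as it makes the left-hand side of the identity manifestly $t$-independent on $[\tf]$ (compare the remark below Eq.~\eqref{eq:def_W_interms_of_proj_error}), and as it restricts which $\error'_\alpha$ are physically admissible. The one piece of bookkeeping worth flagging is the mod-$2$ covering property invoked in the first move; minimality of the stabilizer decomposition enters only to pin down the count $\mathrm{area}(W)$ used elsewhere, not the parity argument itself.
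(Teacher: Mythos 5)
Your proposal is correct and tracks the paper's own proof: the parity identity you isolate ($\PRXQ{L_W}(\error'_\alpha) + \sum_{\qs\in W}(\error'_\alpha)_{\qs,t} = 0$ for every $t\in[\tf]$) is exactly the chain of equalities the paper writes down, obtained by the same telescoping of $\partial\error'_\alpha=0$ and the same interchange of sums via the mod-$2$ covering $n_\bfr \equiv \mathbb{1}[\bfr\in W]$. The one thing worth keeping front-of-mind in the write-up is that the hard-wall conservation of $\PRXQ{L_W}(\error'_\alpha)$ on $[\ti,\tf]$ is not a side remark but the step that licenses extracting a \emph{single} $\alpha$ that works at every $t$ (otherwise additivity only gives a possibly $t$-dependent $\alpha_t$); you do flag this, so the argument is complete.
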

\begin{proof}
We can write
\begin{align}
   \PRXQ{L_W}(\error') = \sum_\alpha \PRXQ{L_W}(\error'_\alpha).
\end{align}
By assumption, $\error'$ does not contain any bitflips on the hard wall.
Consequently, none of the $\error'_\alpha$ contains bitflips on the hard wall.
Therefore, $\PRXQ{L_W}(\error'_\alpha)$ is also conserved for $t \in [0, \tf]$ for each $\alpha$, in the sense of Eq.~\eqref{eq:def_W_interms_of_proj_error}.
For $\PRXQ{L_W}(\error') = 1$ to hold, there is at least one $\error'_\alpha$ for which $\PRXQ{L_W}(\error'_\alpha) = 1$ for $t \in [\tf]$.


For any $t \in [\tf]$, we have according to Eqs.~(\ref{eq:def_proj_error_hard_wall}, \ref{eq:def_W_interms_of_proj_error}) and the condition $\partial \error'_\alpha=0$ that
\begin{align}
    &\PRXQ{L_W}(\error'_\alpha) + \sum_{\qs \in W} (\error'_\alpha)_{\qs, t}\nn
    =&\,
    \sum_{\bfr \in W} \sigma(\error'_\alpha)_{\bfr,t} + 
    \sum_{\qs \in W} (\error'_\alpha)_{\qs, t} \nn
    =&\,
    \sum_{\bfr \in W} \sum_{t'=-\infty}^{t-1} (\error'_\alpha)_{\bfr, t'+1/2} +
    \sum_{\qs \in W} (\error'_\alpha)_{\qs, t} \nn
    =&\,
    \sum_{t'=-\infty}^{t-1} \sum_{\qs \in W} \synd_{\cc=(\qs, t'+1/2)} \nn
    =&\, 0.
\end{align}
As $\PRXQ{L_W}(\error'_\alpha) = 1$ for each $t \in [\tf]$, this implies that 
\begin{align}
    \forall t \in [\tf], \quad
    \sum_{\qs \in W} (\error'_\alpha)_{\qs, t} = 1.
\end{align}
Therefore, for each $t \in [\tf]$, we must have at least one $\qs \in W$ such that $(\error'_\alpha)_{\qs, t} = 1$.
In particular, this means that $|\error'_\alpha| \geq \tf$.
\end{proof}

Let $\cycle \subseteq \error'$ be the $\error'_\alpha$ guaranteed by Lemma~\ref{lemma:D4}. 
Lemma~\ref{lemma:D2} still applies to $\cycle$, whereas an analog of Lemma~\ref{lemma:D3} for this case reads
\begin{align}
     \sum_{\text{ irreducible } \cycle, \partial \cycle = 0}
     \mathbb{1}_{|\cycle| \leq \ell} 
     \leq
     \mathrm{area}(W) \cdot (w+1)^{\ell-1}.
\end{align}
The prefactor $\mathrm{area}(W)$ counts the number of initial conditions of the recursive search algorithm, as any valid $\cycle$ will contain at least a bit of the form $\vertex = (\qs, \tf)$ where $\qs \in W$.
With these, we upper bound the failure probability of the ``hard wall'' stability experiment as
\begin{align}
\label{eq:pfail_hard_wall_stability_dennis}
    \mathbb{P}_{\rm fail}
    \leq&\, \mathrm{area}(W) \cdot 
    (w+1)^{-1} \sum_{\ell = \tf}^{\infty} [(w+2)\cdot \widetilde{q}]^{\ell} \nn
    =&\, 
    O(\mathrm{area}(W) \cdot [(w+1) \cdot \widetilde{q}]^{\tf}).
\end{align}

For ``hard wall with two punctures'', $\mathbb{P}_{\rm fail}$ receives contributions from only those $\cycle \subseteq \error'$ that contains both $\vertex=(\bfr = 0, t =\ti-1)$ and $\vertex=(\bfr = 0, t =\tf)$, as illustrated in Fig.~\ref{fig:stability-experiment-2pt}.
Microscopically, due to infinite rounds of measurement before $t=\ti$ and after $t = \tf$, the decoder knows with certainly whether the two punctures contributes an even or odd number of bitflips, in total.
The only failure mode is for the decoder to predict bitflip events on both punctures incorrectly.
Lemmas~\ref{lemma:D2},\ref{lemma:D4} still apply, whereas Lemma~\ref{lemma:D3} needs to be modified as
\begin{align}
     \sum_{\text{ irreducible } \cycle, \partial \cycle = 0}
     \mathbb{1}_{|\cycle| \leq \ell} 
     \leq
    (w+1)^{\ell-1}.
\end{align}
There is no prefactor, as $\cycle$ always contains $\vertex=(\bfr = 0, t =\tf)$, and the recursive search can always start with this initial condition.
The result reads
\begin{align}
    \label{eq:pfail_hard_wall_stability_two_punctures_dennis}
    \mathbb{P}_{\rm fail}
    =
    O([(w+1) \cdot \widetilde{q}]^{\tf}).
\end{align}

For 2D toric codes (in either Euclidean or hyperbolic planes) with periodic boundary conditions, each qubit has degree $2$.
Therefore, bitflip errors and measurement errors can be viewed as edges in a graph in spacetime, connecting ``vertices'' which are checks of the classical code.
The condition $\partial \error'=0$ imples that $\error'$ must be a collection of Euler cycles/paths.

With a single flux line insertion (Sec.~\ref{sec:stability_hyperbolic_surface_code}, Fig.~\ref{fig:single-flux-threading}), failure to predict $\sigma_{\rm anc}$ necessarily requires $\error'_{\qs_0, \ti} = \error'_{\qs_0, \tf} = 1$.
This means that classical checks $\cc = (\qs_0, \ti-1/2)$ and $\cc = (\qs_0, \tf+1/2)$ are the only vertices in the graph that have an odd degree.
They must therefore be connected by a Euler path $\cycle \subseteq \error'$.
This is the analog of Lemma~\ref{lemma:D4}.
Combining this with Lemma~\ref{lemma:D2} and a simplification of Lemma~\ref{lemma:D3}, we obtain an upper bound of the same form as in
Eq.~\eqref{eq:pfail_hard_wall_stability_two_punctures_dennis}.

With double flux line insertion (Sec.~\ref{sec:stability_hyperbolic_surface_code_two_fluxes}), failure to predict $\sigma_{\rm anc}$ means that $\error'_{\qs_1, \ti} = \error'_{\qs_1, \tf} = \error'_{\qs_2, \ti} = \error'_{\qs_2, \tf} =  1$.
There are three different ``failure modes'', corresponding to different possible connectivity patterns among the four odd-degree checks, as shown in Fig.~\ref{fig:double-flux-threading}.
A union bound gives
\begin{align}
    \label{eq:pfail_upperbound_doule_flux_threading_dennis}
    & \mathbb{P}_{\rm fail}\nn
    =&\
    O([(w+1) \cdot \widetilde{q}]^{2\tf}
    +
    [(w+1) \cdot \widetilde{q}]^{2d}
    +
    [(w+1) \cdot \widetilde{q}]^{2d+2\tf}),
\end{align}
where $d$ is the length of the shortest path between $\qs_{1,2}$.
Taking $d \to \infty$ we obtain Eq.~\eqref{eq:pfail_upperbound_doule_flux_threading}.


\section{The continuous time limit \label{sec:continuous_time_limit}}

In Sec.~\ref{sec:decoding_success_physical_consequences} we assumed that the continuous time limit can be taken, and the existence of a stable phase in the discrete-time path integral Eq.~\eqref{eq:discrete_time_path_integral} also leads to a stable phase of the continuous time path integral Eq.~\eqref{eq:continuous_time_path_integral}.
This assumption is mathematically nontrivial.
Here, we argue that additional evidence (beyond the existence of decoding thresholds) are needed in order to justify this assumption.

We first explain why the assumption is nontrivial.
Recall the perturbed quantum Hamiltonian and the continuous time path integral Eq.~\eqref{eq:continuous_time_path_integral}
\begin{align}
    &H_{\rm q} \equiv 
    - \sum_\qs \prod_{\bfr \in \qs} \pauliz_{\bfr} 
    -\sum_{\widetilde{\qs}} \prod_{\bfr \in \widetilde{\qs}} \paulix_{\bfr} 
    -
    h  \sum_\bfr \paulix_{\bfr}, \\
    &e^{-\beta H_{\rm q}} = 
    \Pi_\paulix
    \left[
    \lim_{\delta \tau \to 0}\prod_{t = 0}^{\beta/\delta \tau} 
    e^{\delta \tau \sum_\qs \prod_{\bfr \in \qs} \pauliz_{\bfr, t} } \cdot e^{h \delta \tau \sum_\bfr \paulix_{\bfr, t} } \right]
    \Pi_\paulix.
\end{align}
On the other hand, transfer matrices for discrete time path integral reads Eq.~\eqref{eq:discrete_time_path_integral}
\begin{equation}
    T(\tf) = \prod_{t=\ti}^{\tf-1} e^{\km \sum_\qs \prod_{\bfr \in \qs} \pauliz_{\bfr, t} } \cdot e^{\overline{\kbf} \sum_\bfr \paulix_{\bfr, t} },
\end{equation}
where $\tanh \overline{\kbf} = e^{-2\kbf}$.
It defines a clean stat mech model with couplings $(\km, \kbf)$.
In the limit $\delta \tau \to 0$, the two path integrals agree if
\begin{align}
\begin{split}
    \label{eq:couplings_small_delta_tau_limit}
    e^{-2\kbf} \to&\ h \delta \tau,\\
    \km \to&\ \delta \tau.
\end{split}
\end{align}
Recall that our strategy to establishing stability in the clean stat model is via the disordered stat mech model with random sign disorders, where the disorder strengths $p_{\rm m}, p_{\rm bf}$ are related to the couplings via the Nishimori conditions
\begin{align}
\begin{split}
    e^{-2\km} = \frac{p_{\rm m}}{1-p_{\rm m}}, \\
    e^{-2\kbf} = \frac{p_{\rm bf}}{1-p_{\rm bf}}.
\end{split}
\end{align}
Under the limit in Eq.~\eqref{eq:couplings_small_delta_tau_limit}, we have
\begin{align}
\begin{split}
    p_{\rm bf} \to&\ h \delta \tau,\\
    p_{\rm m} \to&\ \frac{1}{2} (1- \delta \tau) \approx \frac{1}{2} - \frac{1}{2h} p_{\rm bf}.
\end{split}
\end{align}
The stability of the disordered model with disorder strengths $(p_{\rm m}, p_{\rm bf})$\footnote{Here we suppress the explicit dependence on $\km, \kbf$, assuming they are set by $p_{\rm m}, p_{\rm bf}$ via Nishimori conditions.}
is in turn established by bounding the failure probabilities of decoding, e.g. Eq.~\eqref{eq:pfail_upperbound_memory_maintext}.
These bounds come from Appendix~\ref{sec:pfail_upperbound_MW}, and are only useful when $p_{\rm m}, p_{\rm bf}$ are both sufficiently small.
These requirements are clearly not met when taking the limit $\delta \tau \to 0$, where $p_{\rm bf} \to 0$ and $p_{\rm m} \to 1/2$.

Below, we argue that the stability of the disordered model generically breaks down in the continuous time limit.
In particular, on the Nishimori surface of the phase diagram, we find no ordering along the line $p_{\rm m} = \frac{1}{2} - \frac{1}{2h} p_{\rm bf}$ near $(p_{\rm bf}, p_{\rm m}) = (0, 1/2)$ for any $h > 0$.
Therefore, even if a stable phase of the quantum Hamiltonian exists, it does not rigorously follow from stability of its disordered version with the same coupling strengths on the Nishimori surface.
Additional considerations or assumptions are needed.
Our approach to stability is therefore  not conclusive, although suggestive.

Our argument is based on a physical error model with $p_{\rm bf} \to 0$ and $p_{\rm m} = \frac{1}{2} - \frac{1}{2h} p_{\rm bf}$.
Suppose bitflip errors are described by a Poisson point process, and suppose the syndromes are extracted frequently, in intervals of $\delta \tau$.
In the ``continuous measurement'' limit $\delta \tau \to 0$ we have $p_{\rm bf} \propto \delta \tau$, simply because it takes time for errors to accumulate.
On the other hand, in this limit the syndromes are also less reliable (recall Heisenberg uncertainty principle).
In our model, the measurement error rate is $p_{\rm m} = \frac{1}{2} - \frac{1}{2h} p_{\rm bf}$, becoming complete noise in the limit $\delta \tau \to 0$.

Reliable syndrome information can be built by repeating the measurements for $m$ rounds, collecting enough information before trying to decode.
Meanwhile, bitflip errors will accumulate.
When both ``effective'' error rates are finite but small, a decoding threshold can be established as in Appendix~\ref{sec:pfail_upperbound_MW}.
More precisely, define $\Delta \tau = m \delta \tau$, we can calculate an effective ``renormalized'' bitflip error rate as follows,
\begin{align}
    p_{\rm bf}(\Delta \tau) =&\, \sum_{k = 0}^{m} \mathbb{1}_{\text{$k$ odd}} \binom{m}{k} (p_{\rm bf})^{k} (1-p_{\rm bf})^{m-k} \nn
    =&\, \frac{1}{2}(1-(1-2 p_{\rm bf})^m).
\end{align}
For $p_{\rm bf}(\Delta \tau)$ to remain small, we require $m = O((p_{\rm bf})^{-1})$.
In this case, one can simply take the majority vote of all $m$ measurements to infer the correct syndrome, and we can calculate an effective measurement error rate 
\begin{align}
    p_{\rm m}(\Delta \tau) =&\, \sum_{k = m/2+1}^{m} \binom{m}{k} (p_{\rm m})^{k} (1-p_{\rm m})^{m-k} \nn
    \leq&\, [p_{\rm m}(1-p_{\rm m})]^{m/2} \cdot 2^{m-1} \nn
    =&\, \left[\frac{1}{4}-\frac{(p_{\rm bf})^2}{4h^2}\right]^{m/2} \cdot 2^{m-1}\nn
    =&\, O(e^{-\frac{m}{2h^2} (p_{\rm bf})^2}).
\end{align}
For $p_{\rm m}(\Delta \tau)$ to be small, we will need $m = \Omega((p_{\rm bf})^{-2})$.
The two conditions we wish to impose on $m$ are in conflict with each other, in the limit $p_{\rm bf} \to 0$.
We therefore conclude that no ``ordered phase'' can be established along the line $p_{\rm m} = \frac{1}{2} - \frac{1}{2h} p_{\rm bf}$ near $(p_{\rm bf}, p_{\rm m}) = (0, 1/2)$ for any $h > 0$.

On the other hand, if in our physical error model $p_{\rm m} = \frac{1}{2} - \frac{1}{2h} \sqrt{p_{\rm bf}}$, stability of this model can be established by choosing $m = \Theta((p_{\rm bf})^{-1})$ for sufficiently small $h$.
These arguments therefore suggest that the phase boundary between ``ordered'' and ``disordered'' phases should generically be given by
\begin{align}
    \label{eq:phase_boundary_pm_pbf}
    p_{\rm m} = \frac{1}{2} - \frac{1}{2h_c} \sqrt{p_{\rm bf}}, 
\end{align}
near $(p_{\rm bf}, p_{\rm m}) = (0, 1/2)$, where $h_c = O(1)$.

To test our ``rescaling'' argument, we provide a numerical lower bound to the ordered phase of the 2D random-bond Ising model on the Nishimori surface, by performing a memory experiment on the 1D repetition code, choosing minimal-weight perfect matching as the decoder.
The results are shown in Fig.~\ref{fig:RBIM_matching}, where we find consistency between the phase boundary and Eq.~\eqref{eq:phase_boundary_pm_pbf}.

\begin{figure}[t]
    \centering
    \includegraphics[width=.8\linewidth]{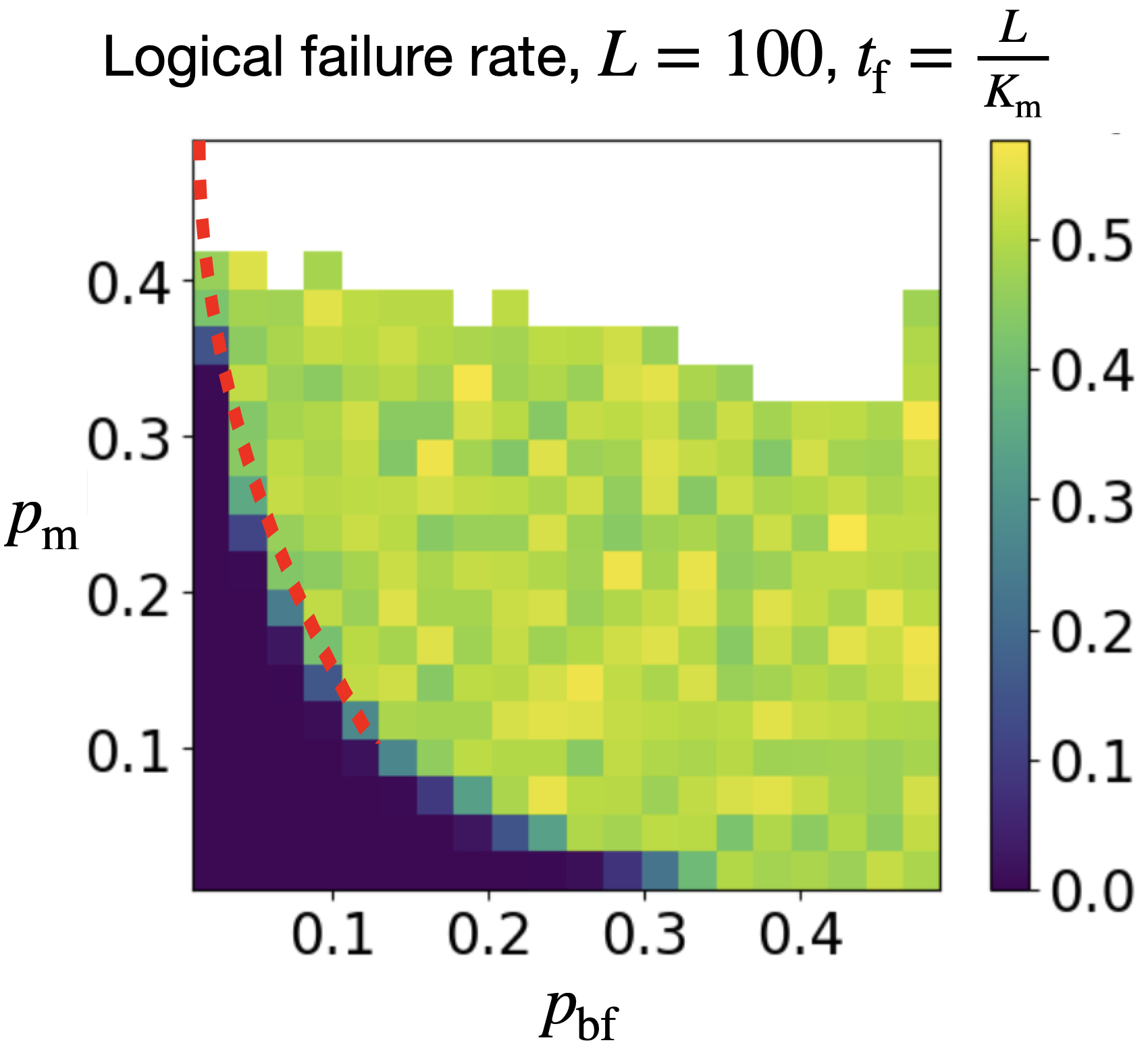}
    \caption{Numerical results from decoding the repetition code with minimal-weight perfect matching.
    We include both bitflip errors and syndrome errors.
    We approximate the phase boundary by sharp jumps in the failure probability at $L = 100$.
    The shape of the approximate phase boundary is in agreement with Eq.~\eqref{eq:phase_boundary_pm_pbf}.
    }
    \label{fig:RBIM_matching}
\end{figure}

\section{Example stat mech models beyond CSS stabilizer codes under $\paulix$ and $\pauliz$ noise \label{sec:examples_beyond}}

\subsection{Bacon-Shor code \label{sec:Bacon-Shor}}

We consider repeated faulty measurements of the Bacon-Shor code~\cite{bacon-shor-2006}.
As it is a subsystem code, we distinguish the terms ``checks'' and ``stabilizers''. 
The former refers to the operators being measured which may not themselves commute, and the latter refers to elements of the center of the group generated by the checks.

The qubits live on vertices of a 2D square lattice.
The $\pauliz$ checks live on vertical pairs $B_{\bf r} = \pauliz_\mathbf{r} \pauliz_{\mathbf{r}+\hat{\mathbf{z}}}$, and the $\paulix$ checks live on horizontal edges $A_{\bf r} = \paulix_\mathbf{r} \paulix_{\mathbf{r}+\hat{\mathbf{x}}}$.
In this convention, the $\pauliz$ logicals are straight horizontal lines, whereas the $\paulix$ logicals are straight vertical lines,
\begin{align}
    \overline{\pauliz} =& \pauliz_{(0,0)} \pauliz_{(1,0)} \pauliz_{(2,0)} \ldots \\
    \overline{\paulix} =& \paulix_{(0,0)} \paulix_{(0,1)} \paulix_{(0,2)} \ldots
\end{align}
The $\pauliz$ stabilizers are generated by the following representatives,
\begin{align}
\label{eq:Bacon-Shor-Z-stabilizer}
     \forall z, \quad \pauliz_{(0,z)} \pauliz_{(0,z+1)} \pauliz_{(1,z)} \pauliz_{(1,z+1)} \pauliz_{(2,z)} \pauliz_{(2,z+1)} \ldots
\end{align}
and similarly for the $\paulix$ stabilizers.

We consider a measurement sequence where the $\paulix$ and $\pauliz$ checks are measured in alterative rounds.
Due to the CSS nature of the code, we focus on $\paulix$ errors and $\pauliz$ check syndromes.
This allows us to again use $\sigma_{\bfr, t}$ to denote the cumulative $\paulix$ error on qubit ${\bf r}$.
The transition probabilities is identical to Eq.~\eqref{eq:tempphyserr},
\begin{align}
    \mathbb{P}(\sigma_{t+1} | \sigma_{t}) 
    \propto 
    e^{\kbf \sum_{\mathbf{r}}
    \sigma_{t+1, \mathbf{r}} \sigma_{t,\mathbf{r}} }.
\end{align}
We make a distinction between the \textit{true eigenvalue} of $B_\mathbf{r}$ on the post-measurement state (denoted $S_\mathbf{r} = \pm 1$), and the \textit{syndrome} of $B_\mathbf{r}$ (denoted $s_\mathbf{r} = \pm 1$).
We say a measurement error occurs if $s_\mathbf{r} \neq S_\mathbf{r}$.
Thus, 
\begin{align}
     \mathbb{P}(s_t | S_t)
     \propto  
      \exp{ \km \sum_{\mathbf{r}} s_{\mathbf{r},t} S_{\mathbf{r},t} }.
\end{align}
Since $\paulix$ measurements are performed in a previous round, \textit{most} $S_\mathbf{r}$ are going to be equally likely $\pm 1$ --- except that the $\pauliz$ stabilizers Eq.~\eqref{eq:Bacon-Shor-Z-stabilizer} need to be preserved in the absense of errors.
This means that
\begin{align}
\label{eq:S_sigma_product_equality}
    \forall z, \quad \prod_x S_{(x, z), t} = \prod_x \sigma_{(x,z), t} \sigma_{(x,z+1),t}.
\end{align}
Formally, the following conditional probability takes a simple form,
\begin{align}
    \mathbb{P}(S_t | \sigma_t) \propto \prod_z   \mathbb{1}_{\prod_x S_{(x, z), t} = \prod_x \sigma_{(x,z), t} \sigma_{(x,z+1), t}}.
\end{align}
The analog of Eq.~\eqref{eq:measerr} is obtained by integrating out $S$,
\begin{align}
    &   \mathbb{P}(s_t | \sigma_t) \nn
    =&\ \sum_{S_t} \mathbb{P}(s_t | S_t) \mathbb{P}(S_t | \sigma_t) \nn
    \propto&\ 
    \sum_{S_t} 
    \exp{ \km \sum_{\mathbf{r}, t} s_{\mathbf{r},t} S_{\mathbf{r},t} } 
    \prod_z   \mathbb{1}_{\prod_x S_{(x, z), t} = \prod_x \sigma_{(x,z), t} \sigma_{(x,z+1), t}} \nn
    \propto&\  \prod_z f_z
\end{align}
where
\begin{widetext}
\begin{align}
    f_z = 
    \begin{cases}
        \sum_{\text{even $k$}} \binom{L}{k} e^{-2k \km} = \frac{1}{2} \( (1+e^{-2 \km})^L + (1-e^{-2 \km})^L \) &\quad \text{ if } \prod_x s_{(x, z),t} \cdot \prod_x \sigma_{(x,z),t} \sigma_{(x,z+1),t} = +1\\ 
        \sum_{\text{odd $k$}}\ \binom{L}{k} e^{-2k\km} = \frac{1}{2} \( (1+e^{-2\km})^L - (1-e^{-2\km})^L \)   
        &\quad \text{ if } \prod_x s_{(x, z),t} \cdot \prod_x \sigma_{(x,z),t} \sigma_{(x,z+1), t} = -1
    \end{cases}.
\end{align}
\end{widetext}
Define
\begin{align}
    e^{-2\widetilde{\km}} = \frac{(1+e^{-2\km})^L - (1-e^{-2\km})^L}
    {(1+e^{-2\km})^L + (1-e^{-2\km})^L}, 
\end{align}
where
\begin{align}
    \label{eq:km_asymptotic_bacon_shor}
    \widetilde{\km} \sim 
    \begin{cases}
        \km &\text{ as } L \ll e^{2\km} \\
        (\tanh \km)^L
        &\text{ as } L \gg e^{2\km}
    \end{cases},
\end{align}
we can write
\begin{align}
    \mathbb{P}(s_t | \sigma_t) 
    \propto \exp{ \sum_z \widetilde{\km} \prod_x s_{(x,z),t} \sigma_{(x,z),t} \sigma_{(x,z+1),t} }.
\end{align}
The weights of the stat mech model are given by the Hamiltonian
\begin{align}
    &H(\sigma, s) \nn
    &= \sum_{t,z} 
        \widetilde{\km} \prod_x (s_{(x,z),t} \sigma_{(x,z),t} \sigma_{(x,z+1),t}) + \kbf 
    \sum_{t, \mathbf{r}}
        \sigma_{\mathbf{r},t+1} \sigma_{\mathbf{r},t}.
\end{align}
Under the change of basis in Sec.~\ref{sec:change_of_basis} we obtain  a model with uncorrelated disorder,
\begin{align}
    \label{eq:H_hat_bacon_shor}
    &\widehat{H}(\sigma, \error) \nn
    &= \sum_{t,z} 
        \widetilde{\km} \prod_x (\error_{(x,z),t} \sigma_{(x,z),t} \sigma_{(x,z+1),t}) \nn
    &\quad + \kbf 
    \sum_{t, \mathbf{r}}
        \error_{t, \mathbf{r}}
        \sigma_{\mathbf{r},t+1} \sigma_{\mathbf{r},t}.
\end{align}
Decoding success is signified by the same condition as before, namely
\begin{align}
    [|\avg{L_J}|]_\error \to 1.
\end{align}
From Eq.~\eqref{eq:km_asymptotic_bacon_shor} we can readily see that the system should have no phase transition, as $\lim_{L \to \infty} \widetilde{\km} = 0$, and Eq.~\eqref{eq:H_hat_bacon_shor} reduces to decoupled 1D Ising chains.

\subsection{Toric code under pure $\pauliy$ noise \label{sec:toric_code_Y}}

We adopt the ``rotated'' representation of the toric code, which is more convenient in this case.
The simplicity of the error model allows us to still apply the transfer matrix method.
Similarly to Eq.~\eqref{eq:def_sigma_s} we define
\begin{align}
\begin{split}
    \sigma_{\bfr,t} =&\ (-1)^{\text{\# of $\pauliy$ errors on site }\bfr\text{ by time }t}, \\
    s_{\qs,t} =&\ \text{measurement outcome of stabilizer }\qs\text{ at time } t.
\end{split}
\end{align}
Here, $\qs$ can represent $\paulix$ or $\pauliz$ stabilizers, as $\pauliy$ errors can flip both types.
According to Eq.~\eqref{eq:Z_eta_maintext}, the stat mech model we obtain is a disordered version of the plaquette Ising model in a transverse field.
More precisely, the model is defined on a cubic lattice, with spins living on vertices.
There are disordered Ising couplings of strength $\pm \kbf$ between spins on neighboring time steps (representing bit flips), and there are disordered 4-spin couplings $\pm \km$ in each plaquette.
There are no gauge invariance in this model.
Its clean limit has a self-duality~\cite{xumoore_2004, nussinov2005compass}, from which its critical temperature can be determined.

Interestingly, the model we obtain by applying an $\pauliy$ field in the toric code has the same phase diagram as the one above.
They are related by a duality transformation, which changes the ground state degeneracy~\cite{vidal_2009_xumoore} but not the phase diagram.

We now turn to the limit of perfect measurements.
This amounts to taking $\km \to \infty$ and setting $\error_{\qs, t} = +1$ everywhere, as we explained in Sec.~\ref{sec:perfect_measurements}.
It imposes the following constraint on the spins,
\begin{align}
    \forall \Box, t, \quad \prod_{\bfr \in \Box} \sigma_{\bfr, t} = 1.
\end{align}
Despite the notational similarity, this constraint is different from Eq.~\eqref{eq:perfect_measurement_contraint_RPGM}, as the spins live on the vertices rather than the edges of the squares.
With periodic boundary conditions the general solution to these constraints take the form
\begin{align}
\label{eq:toric_code_y_noise_constraint}
    \sigma_{\bfr=(i,j), t} = \tau^{(0)}_t \cdot \tau^{(x)}_{i,t} \cdot \tau^{(y)}_{j,t}.
\end{align}
Here, $\tau^{(0)}_t, \tau^{(x)}_{i,t}, \tau^{(y)}_{j,t}$ take values in $\{-1, +1\}$.
At each $t$, there is a $\tau^{(x)}_{i,t}$ for each row $i$, and there is a $\tau^{(y)}_{j,t}$ for each column $j$.
(At $t=0$, we have $\tau^{(0)}_t = \tau^{(x)}_{i,t} = \tau^{(y)}_{j,t} = +1$ for all $i$, $j$.)
The remaining part of the action reads
\begin{align}
    &\lim_{\km \to \infty} 
    \widehat{H}(\sigma, \error)
    =
    -\sum_{t} [ \kbf 
    \sum_{\bfr = (i,j)}
        \error_{\bfr, t} 
        \cdot
     \widetilde{\tau}^{(0)}_t
     \cdot
     \widetilde{\tau}^{(x)}_{i,t}
     \cdot
     \widetilde{\tau}^{(y)}_{j,t}
    ],
\end{align}
where $\widetilde{\tau}_{t} \equiv \tau_{t}  \tau_{t+1}$.
As spins at different $t$ decouple in this model, we can discuss different $t$ independently of each other.

At each $t$, $\widetilde{\tau}^{(0)}$ is a constant, which can be absorbed into the definition of $\widetilde{\tau}^{(x)}_{i}$.

\begin{proposition}
When $p_{\rm bf} < 1/2$ the ground state has 
\begin{align}
    \widetilde{\tau}_{i,t}^{(x)} = +1, \widetilde{\tau}_{j,t}^{(y)} = +1, \quad \forall t, i, j.
\end{align}
The energy difference between the ground state and any excited state is $\Omega(L)$.
\end{proposition}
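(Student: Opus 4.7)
The plan is to leverage the fact, noted just before the proposition, that the Hamiltonian decouples across $t$, reducing the question to a concentration-of-measure statement for a single-slice random bipartite Ising model. First I would absorb $\widetilde{\tau}^{(0)}_t$ into $\widetilde{\tau}^{(x)}_{i,t}$ and fix a time slice; the residual Hamiltonian
$H_t = -\kbf \sum_{i,j} \error_{(i,j),t}\, \widetilde{\tau}^{(x)}_{i,t}\, \widetilde{\tau}^{(y)}_{j,t}$
is an Ising model on the complete bipartite graph $K_{L,L}$ with i.i.d. couplings $\error_{(i,j),t} \in \{\pm 1\}$ of mean $\mu := 1-2p_{\rm bf} > 0$. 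The model has a $\mathbb{Z}_2$ symmetry $(\widetilde{\tau}^{(x)}, \widetilde{\tau}^{(y)}) \to (-\widetilde{\tau}^{(x)}, -\widetilde{\tau}^{(y)})$; modulo this, I parametrize configurations by $(A, B)$ with $A = \{i : \widetilde{\tau}^{(x)}_{i,t} = -1\}$, $B = \{j : \widetilde{\tau}^{(y)}_{j,t} = -1\}$. The claim is that, with probability tending to $1$ as $L \to \infty$, the all-$(+1)$ configuration $(A, B) = (\emptyset, \emptyset)$ (together with its symmetry partner) is the unique ground state, with gap $\Omega(L)$ to all other configurations.

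Next, a direct computation gives
\begin{align}
H_t^{(A,B)} - H_t^{(\emptyset,\emptyset)} = 2\kbf \sum_{(i,j) \in S(A,B)} \error_{(i,j),t},
\end{align}
where $S(A,B) = (A \times B^c) \cup (A^c \times B)$ has cardinality $n(A,B) = |A|(L - |B|) + (L - |A|)|B|$. Writing $s = |A| + |B|$ and maximizing $|A||B|$ subject to $|A|+|B|=s$ gives $n(A,B) \geq L\,\min(s, 2L - s)/2$, with equality saturated (namely $n = 0$) precisely at the symmetry-related ground state $(A, B) = ([L], [L])$. By Hoeffding's inequality,
\begin{align}
\mathbb{P}\!\left[H_t^{(A,B)} - H_t^{(\emptyset,\emptyset)} \leq \kbf\, \mu\, n(A,B)\right] \leq \exp\!\left(-\mu^2 n(A,B)/8\right),
\end{align}
and a union bound over nontrivial $(A,B)$, grouping them by $k := \min(s, 2L-s)$ and using $\#\{(A,B) : |A|+|B| = k\} = \binom{2L}{k}$ (Vandermonde), yields a total failure probability bounded by $\sum_{k=1}^{L} 2 \binom{2L}{k} \exp(-\mu^2 k L/16)$. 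Since $\binom{2L}{k} \leq (2Le/k)^k$, the exponential suppression $\exp(-\mu^2 k L/16)$ beats the entropic factor for $L$ large, giving failure probability $O(e^{-cL})$. Hence, with probability $1 - e^{-\Omega(L)}$, the all-$(+1)$ configuration is the ground state and every other configuration costs at least $\kbf \mu L / 2 = \Omega(L)$. A final union bound over the $\tf$ time slices handles all $t$, provided $\tf$ is at most subexponential in $L$.

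The main subtlety is the geometric bookkeeping of $n(A, B)$: one must verify that flipping any nontrivial subset of spins disrupts $\Omega(L)$ couplings, which is where the rank-one bipartite structure of $K_{L,L}$ is essential --- flipping even a single row or column spin alone breaks $L$ bonds, which is what forces the $\Omega(L)$ gap rather than the $O(1)$ gap one might expect from local spin flips in conventional Ising models. Once the elementary inequality $n(A, B) \geq L\min(s, 2L-s)/2$ is in hand, everything else follows from Hoeffding plus a routine union bound, closely mirroring the combinatorial template developed in Appendix~\ref{sec:pfail_upperbound_MW}.
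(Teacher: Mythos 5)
Your proof follows the same route as the paper's --- writing the energy difference between a candidate configuration and the all-$+1$ state as $2\kbf$ times a sum of $n(A,B) = |A|(L-|B|) + (L-|A|)|B|$ i.i.d. signs of mean $1-2\pbf$, noting this count is $\Omega(L)$ for nontrivial configurations, and invoking concentration --- but you supply a step the paper's proof glosses over: the union bound over all configurations. The paper records $M \geq L$ and argues that each \emph{fixed} nontrivial $(\mathsf{R},\mathsf{C})$ has $\Delta E > 0$ with high probability; it does not spell out why this holds \emph{simultaneously} for all $(\mathsf{R},\mathsf{C})$, which is what the proposition (unique ground state, $\Omega(L)$ gap) actually requires. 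Your refined estimate $n(A,B) \geq L\min(s,2L-s)/2$ with $s=|A|+|B|$ is exactly what makes the union bound close at every $\pbf<1/2$: with only the uniform bound $n\geq L$, the entropic factor $\sim 4^L$ would overwhelm the per-configuration tail $e^{-\Omega(\mu^2 L)}$ unless $\mu = 1-2\pbf$ were close to $1$, so the grouping by $k=\min(s,2L-s)$ and the Vandermonde count $\sum_{a+b=s}\binom{L}{a}\binom{L}{b}=\binom{2L}{s}$ are not decorative but essential. The final union bound over the $\tf$ time slices (needed because the proposition quantifies over $t$) is likewise a detail the paper does not address. In short, your argument and the paper's share the same key computation, but yours is the version that actually establishes the simultaneous ground-state and $\Omega(L)$-gap statement.
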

\begin{proof}
We will suppress the subscript $t$ in the following.
Let $\mathsf{R}$ be the subset of row indices $i$ for which $\widetilde{\tau}^{(x)}_i = -1$, and denote its complement $\overline{\mathsf{R}}$.
Similarly, let $\mathsf{C}$ be the subset of column indices $j$ for which $\widetilde{\tau}^{(y)}_j = -1$, and denote its complement $\overline{\mathsf{C}}$.
The energy of this state, as compared to that of the all $+1$ state, is
\begin{align}
\label{eq:I21}
    \Delta E 
    =&\, (-\kbf) 
        \sum_{\bfr=(i,j)} \error_{\bfr} (\widetilde{\tau}_i^{(x)} \widetilde{\tau}_j^{(y)}  - 1)\nn
    =&\, (-\kbf) 
        \sum_{\bfr=(i,j)} \error_{\bfr} (\widetilde{\tau}_i^{(x)} \widetilde{\tau}_j^{(y)}  - 1)\nn
    =&\, (2\kbf) 
        \sum_{\bfr=(i,j)} \error_{\bfr} 
        (\mathbb{1}_{i \in \mathsf{R}, j \in \ovl{\mathsf{C}}}
        +
        \mathbb{1}_{i \in \ovl{\mathsf{R}}, j \in \mathsf{C}}
        ) \nn
    =&\, (2\kbf) 
        \left[
        \sum_{i \in \mathsf{R}, j \in \ovl{\mathsf{C}}} \error_{\bfr = (i,j)} 
        +
        \sum_{i \in \ovl{\mathsf{R}}, j \in \mathsf{C}} \error_{\bfr = (i,j)} \right] 
\end{align}
Note that the two sums are over two disjoint sets of independent random variables.
The two sets are empty if $\mathsf{R} = \mathsf{C} = \emptyset$, corresponding to the proposed ground state; or $\overline{\mathsf{R}} = \overline{\mathsf{C}} = \emptyset$, which gives the same spin configuration as the proposed ground state, see Eq.~\eqref{eq:toric_code_y_noise_constraint}.
In any other case, we get a state different from the proposed ground state, \PRXQ{where the number of summands on the RHS of Eq.~\eqref{eq:I21}, namely $M = |R| \cdot |\overline{C}| + |\overline{R}| 
\cdot |C|$, is nonzero.
Therefore, $\Delta E / (2 \kbf)$ converges to a random variable distributed identically to  $\binom{M}{(1-2\pbf)M}$, which by $M \geq L$ is positive with probability approaching $1$ for $\pbf < 1/2$.}
\end{proof}
As a consequence, the partition function is dominated by $\sigma_{\bfr, t} = +1$  (with the $t=0$ boundary condition) as long as $p_{\rm bf} < 1/2$, corresponding to a 50\% threshold.
The case with open boundary conditions is more involved, and was studied in~\cite{Tuckett_2019}.

\end{document}